\documentclass[12pt]{article}

\usepackage{a4wide}
\usepackage{amsmath,amsfonts,amssymb,
amsthm,color, mathabx}
\usepackage{hyperref}
\usepackage{csquotes}

\usepackage[active]{srcltx}

\usepackage{mathrsfs}
\usepackage{amscd,graphicx}

\usepackage{enumitem}

\let\OLDthebibliography\thebibliography
\renewcommand\thebibliography[1]{
  \OLDthebibliography{#1}
  \setlength{\parskip}{0pt}
  \setlength{\itemsep}{0pt plus 0.3ex}
}

\newtheorem{theorem}{Theorem}[section]

\newtheorem{proposition}[theorem]{Proposition}
\newtheorem{corollary}[theorem]{Corollary}
\newtheorem{lemma}[theorem]{Lemma}

\theoremstyle{definition}
\newtheorem{remark}[theorem]{Remark}

\def\Og{{\mathcal O}}

\def\N{{\mathbb N}}
\def\Z{{\mathbb Z}}
\def\R{{\mathbb R}}

\def\Co{{\mathbb C}}

\def\E{\mathcal E}

\numberwithin{equation}{section}

\newcommand{\sub}[1]{_{\mathrm{#1}}}
\newcommand{\tr}{\mathrm{Tr}}
\newcommand{\di}{\mathrm{d}}
\newcommand{\eu}{\mathrm{e}}
\newcommand{\iu}{\mathrm{i}}

\title{Universality of the Hall conductivity\\ for a weakly interacting magnetic fermionic gas\\ in the Hartree--Fock approximation\par}

\author{H.~D. Cornean\footnote{Department of Mathematical Sciences, Aalborg University, Thomas Manns Vej 23, 9220 Aalborg, Denmark; cornean@math.aau.dk}, E.~L. Giacomelli\footnote{Department of Mathematics ``Federigo Enriques", University of Milan, 
Via Saldini 50, 20133 Milan, Italy; e-mail: emanuela.giacomelli@unimi.it}, D.~Monaco\footnote{Department of Mathematics ``Guido Castelnuovo", Sapienza University of Rome, Piazzale Aldo Moro 5, 
00185 Rome, Italy; email: domenico.monaco@uniroma1.it}, and M.~H. Thorn\footnote{Department of Mathematical Sciences, Aalborg University, Thomas Manns Vej 23, 9220 Aalborg, Denmark; e-mail: mikkelht@math.aau.dk}}

\begin{document}
\maketitle
 
\begin{abstract} 
We consider a two-dimensional gas of interacting fermions in presence of an external constant magnetic field: the system is extended and homogeneous, and thus assumed to be invariant under magnetic translations. Working within the Hartree–Fock approximation, we analyze the system directly in the thermodynamic limit by solving a self-consistent fixed-point equation for the one-particle density matrix. We prove that, provided that the interactions among fermions are sufficiently weak, there exists a unique one-particle density matrix that solves the self-consistency condition. By choosing the Fermi--Dirac distribution as the function in the fixed-point equation, this approach can describe both positive and zero-temperature cases.

At zero temperature and when the chemical potential of the non-interacting system lies in a spectral gap of the free Landau operator, our self-consistent solution is an orthogonal projection (an ``interacting" effective Fermi projection). We prove that its integrated density of states varies linearly with the external magnetic field, provided the interaction is weak enough: the slope of this variation is quantized and independent of the interaction. According to the St\v{r}eda formula, this can be seen as yet another expression of the universality of the quantum Hall effect in weakly interacting systems, at least within the Hartree--Fock approximation.
\end{abstract}

\bigskip

\tableofcontents

\bigskip

\section{Introduction and main results}

In this paper, we adopt Hartree–Fock theory as a first-level approximation for the low-energy properties of two-dimensional weakly interacting fermionic systems in presence of a constant magnetic field, aiming at the derivation of an effective description of the Hall conductivity for interacting fermions. Magnetic pseudo-differential calculus plays an essential role in our work. In order to make the paper (almost) self-contained, we will essentially build-up the whole theory from scratch, adapting the construction to our specific setting with two dimensions, symbols that depend only on momentum, and constant magnetic fields.

\paragraph{Many-body Fermi gases in a magnetic field.}

Let us start by considering a two-dimensional homogeneous gas of $N$ weakly interacting fermions confined to the box $\Lambda = [-L_1/2, L_1/2]\times [-L_2/2, L_2/2] \subset \mathbb{R}^2$ endowed with magnetic-periodic boundary conditions (we will later pass to the infinite-volume, thermodynamic limit). Spin degrees of freedom are neglected, and all fermions are assumed to have the same spin. The particles interact via a weak two-body potential, whose intensity is specified through a coupling constant $\lambda \ll 1$, and are exposed to a uniform external magnetic field of strength $b\in\mathbb{R}$, perpendicular to $\Lambda$. The corresponding many-body Hamiltonian is given by
\begin{equation}\label{eq: Hnblambda}
    H_{N,b}^{\Lambda} \coloneq \sum_{i=1}^{N} \left(H_b^{\Lambda}\right)_{x_i} + 2 \, \lambda \, \sum_{1 \le i < j \le N} V(x_i; x_j),
\end{equation}
where the one-particle magnetic Hamiltonian acts as
\begin{equation}\label{eq: def Hblambda}
    (H_{b}^\Lambda)_{x_i}= \big( - \iu \nabla_{x_i} - b \, A(x_i)\big)^2, \quad A(x_i) = A(x_i^{(1)},x_i^{(2)}) \coloneq \frac{1}{2} (-x_i^{(2)}, x_i^{(1)}).
\end{equation}
Due to the Pauli principle, the Hamiltonian $H_{N,b}^{\Lambda}$ acts on the antisymmetric subspace $L^2\sub{a}(\Lambda^{N}) \subset L^2(\Lambda)^{\otimes N}$. It is well known that the free magnetic Hamiltonian $H_b^\Lambda$ commutes with magnetic translations, i.e., transformations $\tau_{b,(L_1,0)}, \tau_{b,(0,L_2)}\colon L^2(\Lambda) \rightarrow L^2(\Lambda)$ acting on $\psi\in L^2(\Lambda)$ as
\[ (\tau_{b,(L_1,0)}\psi)(x_1,x_2) \coloneq \eu^{-\iu \, b \, L_1 \, x_2} \, \psi(x_1-L_1, x_2), \quad (\tau_{b,(0,L_2)}\psi)(x_1,x_2) \coloneq \eu^{\iu \, b \, L_2 \, x_1} \, \psi(x_1, x_2-L_2)\,. \]
If the side lengths $L_1, L_2>0$ satisfy the flux quantization condition
\begin{equation} \label{eqn:FluXQuantization} 
b \, L_1 \, L_2 = 2 \pi \, M\,, \quad M \in \Z\,,
\end{equation}
then the magnetic translations  $\tau_{b, (L_1, 0)}$ and $\tau_{b, (L_2, 0)}$ commute among themselves. Throughout this introduction, we assume \eqref{eqn:FluXQuantization} holds. As mentioned above, magnetic-periodic boundary conditions are imposed at the boundary of $\Lambda$, namely
\[ \psi(L_1/2, x_2) = (\tau_{b,(L_1,0)}\psi)(-L_1/2,x_2)\,, \quad \psi(x_1,L_2/2) = (\tau_{b,(0,L_2)}\psi)(x_1,-L_2/2)\,.\]
Under these conditions, the operator $H_b^\Lambda$ is self-adjoint operator on $L^2(\Lambda)$, and its spectrum consists of the {Landau levels}
\[ E_n \coloneq |b| \, (2\,n + 1)\,, \quad n \in \N\,,\]
each having degeneracy $|M|$, where $M$ is as in~\eqref{eqn:FluXQuantization} \cite{FerreroMonaco2024}.

The two-body interaction potential is assumed to be compatible with the magnetic-periodic boundary conditions and is taken of the form $V(x,y) = v(x-y)$, where $v\colon \mathbb{R}^2 \to \mathbb{R}$ is an even function satisfying suitable regularity and decay assumptions specified later.

\paragraph{Hartree--Fock theory.}

It is well known that the low-energy properties of a fermionic gas can be described, as a first approximation, within \emph{Hartree--Fock theory}. Suppose the state of the gas is described by a density matrix $\omega$ on $L^2\sub{a}(\Lambda^N)$: the energy of the system is then computed as 
\[ E_{N,b}^{\Lambda}[\omega] = \tr_{L^2\sub{a}(\Lambda^N)}\left( H^\Lambda_{N,b} \, \omega \right)\,. \]
The Hartree--Fock approximation amounts to restricting the many-body state space to \emph{quasi-free states}. For this class of states, all correlation functions (including the energy) are determined by the \emph{one-particle reduced density matrix} $\gamma$, which is a self-adjoint operator on $L^2(\Lambda)$ satisfying the Pauli constraint
\begin{equation}\label{eq: constraintis gamma}
0 \le \gamma \le 1, \qquad \operatorname{Tr}_{L^2(\Lambda)}\gamma = N = \rho |\Lambda|
\end{equation}
where $\rho$ denotes the particle density. Among these states are Slater determinants, i.e., antisymmetric products of mutually orthogonal single-particle states $\{ f_1, \ldots, f_N\} \subset L^2(\Lambda)$ of the form
\[
\omega_{\mathrm{Slater}} \coloneq \left| \psi_{\mathrm{Slater}} \right\rangle \left\langle \psi_{\mathrm{Slater}} \right|\,, \quad \psi_{\mathrm{Slater}} = f_1 \wedge \cdots \wedge f_N,
\]
for which the one-particle reduced density matrix $\gamma$ is a rank-$N$ projection and satisfies $\gamma^2 = \gamma$. Quasi-free states that are not Slater determinants are called mixed quasi-free states.

Upon restriction to such states, the energy of the system is represented by the \emph{Hartree--Fock functional}, which depends on the reduced one-particle density matrix $\gamma$ through its integral kernel $\gamma(x;y)$ as
\[ \E\sub{HF}^{\Lambda,b}[\gamma] \coloneq \tr_{L^2(\Lambda)}(H_b^{\Lambda} \, \gamma) + \lambda \int_{\Lambda} \di x \int_{\Lambda} \di y \, v(x-y) \big[ \gamma(x,x) \, \gamma(y,y) - \left| \gamma(x,y)\right|^2 \big]\,. \]
For non-interacting fermions ($\lambda = 0$), the Hartree--Fock approximation is exact: $E_{N,b}^{\Lambda}[\omega] =  \E\sub{HF}^{\Lambda,b}[\gamma]$ for $\gamma = \omega^{(1)}$ the one-particle reduced density matrix obtained from $\omega$ by ``tracing out'' the degrees of freedom associated to $N-1$ particles. As an example, at inverse temperature $\beta>0$ and chemical potential $\mu\in\mathbb{R}$, the reduced one-particle density matrix $\gamma_{\beta, \mu}$ associated to a thermal Gibbs state $\omega_{\beta, \mu}$ has the form 
\[
    \gamma_{\beta, \mu} = f_{\mathrm{FD}}(H^{\Lambda}_b) = e^{-\beta(H^\Lambda_b -\mu)} \left(\mathrm{Id} + e^{-\beta(H^\Lambda_b -\mu)} \right)^{-1} = \left(\mathrm{Id} + e^{\beta(H^\Lambda_b -\mu)} \right)^{-1},
\]
where $f_{\mathrm{FD}}(\cdot)$ denotes the \emph{Fermi--Dirac distribution} \cite[Prop. 5.2.23]{BR97}
\[
    f_{\mathrm{FD}}(t)\coloneq \frac{1}{1+e^{\beta(t-\mu)}}, \quad t\in\mathbb{R}.
\]
An important problem in the analysis of many-body quantum systems is to go beyond the non-interacting regime and analyze the range of validity of Hartree--Fock theory accounting for correlation effects, particularly in the study of ground-state energies (see, e.g., \cite{Bac,GS94}). Significant progress has been made in recent years in the understanding of the correlation energy of Fermi gases, both in the low- and high-density regimes: see \cite{GHNS25, CHN24, BPSS22, CZ25} and references therein. A corresponding derivation of time-dependent Hartree--Fock equations from the many-body Schr\"{o}dinger dynamics \cite{Lubich} has been recently established also in presence of magnetic fields, for appropriate time scales and strength of the two-body interactions \cite{FerreroMonaco2024, BBMN25}. The study of the Hartree--Fock approximation of electronic structure is also pivotal for density functional theory (DFT) \cite{CF24}, which has been recently investigated also in low-dimensional magnetic materials \cite{GL19, GLM23}. DFT models for interacting fermions in magnetic fields have been studied also in other regimes: addressing 3-dimensional systems, modeling the coupling of spin to magnetic fields via the Pauli Hamiltonian, or confining the system in a potential trap. Within these settings, the ground-state energy asymptotics has been related to magnetic Thomas–Fermi limits: for a selection of results along these lines, the reader is referred e.g.\ to \cite{Y91, LSY95, LMT19, FM20} and references therein.

Coming back to the Hartree--Fock energy functional for general quasi-free states, we rewrite it as follows:
\begin{equation} \label{eqn:EHF-W}
\begin{aligned}
\E\sub{HF}^{\Lambda,b}[\gamma] & = \tr_{L^2(\Lambda)}(H_b^{\Lambda} \, \gamma) + \lambda \, \int_{\Lambda} \di x \left[\int \di x' \, \delta(x-x') \, \int_{\Lambda} \di y \, v(x-y) \, \gamma(y,y)] \right] \, \gamma(x',x) \\
& \quad - \lambda \, \int_{\Lambda} \di x \left[\int_{\Lambda} \di x' \, v(x-x') \, \gamma(x,x') \right] \, \overline{\gamma(x,x')} \\
& \eqcolon \tr_{L^2(\Lambda)}(H_b^{\Lambda} \, \gamma) + \lambda \, \int_{\Lambda} \di x \, W^{\Lambda}[\gamma](x,x') \, \gamma(x',x) \\
& = \tr_{L^2(\Lambda)} \big( \left(H_b^{\Lambda} + \lambda \, W^{\Lambda}[\gamma] \right) \, \gamma \big)\,,
\end{aligned}
\end{equation}
where we have set
\begin{equation} \label{eqn:W}
W^{\Lambda}[\gamma](x,x') \coloneq \delta(x-x') \, \int_{\Lambda} v(x-y) \, \gamma(y,y) \, \di y- v(x-x') \, \gamma(x,x')\,. 
\end{equation}

For each density matrix $\gamma$ on $L^2(\Lambda)$, $W^{\Lambda}[\gamma]$ defines a self-adjoint operator whose integral kernel is specified by~\eqref{eqn:W}. Provided this operator is bounded, setting 
\[ H_{b,\,\lambda}^{\Lambda}[\gamma] \coloneq H_b^{\Lambda} + \lambda \,W^{\Lambda}[\gamma] \]
also defines a self-adjoint operator, which we can consider as an effective one-particle Hamiltonian, depending on the density matrix $\gamma$.

\paragraph{Self-consistent Hartree--Fock density matrices.}

As seen before, in the non-inter\-acting regime where the Hartree--Fock approximation is exact, the one-particle reduced density matrix is typically a function (e.g.\ the Fermi--Dirac distribution) of the one-body Hamiltonian. In electronic structure theory, it is often postulated that the same holds for weakly interacting fermionic systems: such effective non-linear Hamiltonians, and specifically the effective $1$-body potential $W^{\Lambda}$, are determined within Hartree--Fock theory through a self-consistency condition \cite{FW71, CF24}, assuming that low-energy states of the system are described by a one-particle reduced density matrix  $\gamma_\star$ of the form
\begin{equation} \label{eqn:fixed-point-gammastar} 
\gamma_\star = f\left({H}_{b,\lambda}^{\Lambda}[\gamma_\star]\right)\,.
\end{equation}
The above should be interpreted as a fixed-point problem for the operator $\gamma_\star$.

In this paper, we study the existence of self-adjoint operators $\gamma_\star$ satisfying \eqref{eq: constraintis gamma} and solving an equation of the form \eqref{eqn:fixed-point-gammastar}, for an appropriate class of functions $f$. We formulate the question directly for extended systems, that is, in the thermodynamic, infinite-volume limit. For (magnetic-)translation-invariant density matrices $\gamma$, the above Hartree--Fock energy $\E\sub{HF}^{\Lambda}[\gamma]$ is extensive, i.e., it grows proportionally to the area $|\Lambda| = L_1 \, L_2$ of the confining box. It is therefore to be expected that the limit
\[ \E\sub{HF}[\gamma] \coloneq \lim_{L_1, L_2 \to \infty} \frac{1}{|\Lambda|} \,\E\sub{HF}^{\Lambda}[\gamma] \]
exists and is finite: one should take the limit along sequences for which the quantization condition $b L_1 L_2 \in 2 \pi \Z$ from \eqref{eqn:FluXQuantization} is satisfied, and again at fixed density 
\[ \rho = \rho_\gamma \coloneq \lim_{L_1, L_2 \to \infty} \frac{1}{|\Lambda|} \, \tr_{L^2(\Lambda)}(\gamma) = \gamma(0,0) \,.  \]
In this limit, we expect the trace in \eqref{eqn:EHF-W} to become a trace per unit area, $\underline{\tr}$, and thus
\[  \E\sub{HF}[\gamma] = \underline{\tr} \big( \left(H_b + \lambda \, W[\gamma] \right) \, \gamma \big)\,, \]
where $H_b$ is the operator corresponding to \eqref{eq: def Hblambda} and defined in $\mathbb{R}^2$, i.e. 
\begin{equation}\label{eq: def Hb}
    H_{b}\coloneq \big( - \iu \nabla - b \, A(x)\big)^2, \quad A(x) = A(x_1,x_2) = \frac{1}{2} (-x_2, x_1),
\end{equation}
and where $W[\gamma]$ is the operator on $L^2(\R^2)$ specified by the kernel
\[
W[\gamma](x,x') \coloneq \delta(x-x') \, v(x) \, \rho_\gamma - v(x-x') \, \gamma(x,x')\,, 
\]
with $x,x^\prime\in\mathbb{R}^2$.
Notice that the above operator, to be interpreted as an effective potential, is in general {not} a multiplication operator, and actually $H^{\Lambda}_{b,\,\lambda}$ will in general not even be a differential operator---rather, a pseudo-differential one.

While we do not prove the above statements about the thermodynamic limit, and we leave this analysis to future work, we study next the problem of finding the self-consistent translation-invariant density matrix~$\gamma_\star$, defined by the analogue of \eqref{eqn:fixed-point-gammastar} directly in the infinite-volume regime.

\subsection{Main results}\label{subsec:main}

We are interested in formulating the self-consistency equations~\eqref{eqn:W} and \eqref{eqn:fixed-point-gammastar} directly in the infinite-volume limit, namely
\begin{equation}\label{0}
\begin{aligned}
W(x,x')& = \delta (x-x')\int_{\R^2} v(x-y)\, f (H_b+ \lambda W)(y,y)\di y\\ 
&\qquad - v(x-x')\, f (H_b+ \lambda W)(x,x')\,,\quad x,x'\in \R^2,
\end{aligned}
\end{equation}
where the self-consistent operator $W$ specified by the above kernel should be bounded and self-adjoint. 

In order to do so, we use some ideas from \emph{gauge-covariant magnetic perturbation theory} \cite{CorneanNenciu1998,Nenciu2002,CorneanMoscolari2025}. In particular, we want to exploit the fact that the density matrix $f(H_b+ \lambda W)$ commutes with all possible magnetic translations, which restricts the form of its integral kernel: stated informally, the latter is the pointwise product of a \emph{magnetic Peierls phase} $\eu^{\iu \, b \, \phi(\cdot,\cdot)}$ and a translation-invariant part depending only on the difference of the two arguments of the kernel. To state this more precisely, define first the magnetic phase argument 
\begin{equation}\label{pierls}
\phi(x,x')\coloneq(x_2x_1'-x_1x_2')/2 \eqcolon \frac{1}{2}\, x'\wedge x,\quad  x,x' \in\mathbb{R}^2\,.
\end{equation}
Note that $\phi(x,x')=-\phi(x',x)$. The magnetic translation operators are defined for $y \in \R^2$ as
\begin{equation} \label{eqn:MagTransl}
(\tau_{b,y}\,\psi)(x) \coloneq \eu^{\iu \, b \, \phi(x,y)} \, \psi(x-y)\,, \quad \psi \in L^2(\R^2)\,,
\end{equation}
We then get the following result with an accompanying corollary: their proofs are postponed to Section~\ref{sec:kernels}.

\begin{proposition} \label{prop:MagCommuteKernel} 
Let $T$ be a bounded integral operator on $L^2(\R^2)$ with an integral kernel $T(x,x')$ which is locally integrable separately in $x$ and $x'$. Then, $T$ commutes with all magnetic translations \eqref{eqn:MagTransl} if and only if 
\begin{equation*}
T(x,x') = \eu^{\iu \, b \, \phi(x,x')} \, F_T(x-x')\,, \quad \text{where} \quad F_T(x) \coloneq T(x,0)\,.
\end{equation*}
\end{proposition}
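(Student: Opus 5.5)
The plan is to translate the operator identity $\tau_{b,y}T = T\tau_{b,y}$ into an identity between integral kernels, and then specialize the free variable. First compute the kernel of $\tau_{b,y} T \tau_{b,y}^{-1}$. Since $\phi$ is bilinear and antisymmetric, $\phi(x,y)+\phi(x-y,-y) = \phi(x,y)-\phi(x,y)+\phi(y,y) = 0$. Hence $\tau_{b,y}^{-1} = \tau_{b,-y}$, because
\[
(\tau_{b,-y}\tau_{b,y}\psi)(x) = \eu^{\iu b \phi(x,-y)}\eu^{\iu b\phi(x+y,y)}\psi(x) = \psi(x),
\]
using $\phi(y,y)=0$. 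A direct change of variables $x'' = x'-y$ then gives
\[
(\tau_{b,y}T\tau_{b,y}^{-1})(x,x') = \eu^{\iu b \phi(x,y)}\,T(x-y,x'-y)\,\eu^{-\iu b\phi(x',y)}.
\]

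Commutation therefore means that, for every $y$, one has
\[
T(x,x') = \eu^{\iu b(\phi(x,y)-\phi(x',y))}\,T(x-y,x'-y)
\]
for almost every $(x,x')$. For the direction "$\Rightarrow$", set $y=x'$ formally. This gives $T(x,x') = \eu^{\iu b\phi(x,x')}T(x-x',0)$, since $\phi(x',x')=0$, which is exactly the claimed form with $F_T(x)=T(x,0)$.

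For "$\Leftarrow$", insert $T(x,x')=\eu^{\iu b\phi(x,x')}F(x-x')$ into the right-hand side of the kernel identity. The identity reduces to the phase relation
\[
\phi(x,x') = \phi(x,y)-\phi(x',y)+\phi(x-y,x'-y),
\]
and this follows by expanding bilinearly: $\phi(x-y,x'-y)=\phi(x,x')-\phi(x,y)-\phi(y,x')$, and $-\phi(y,x')=\phi(x',y)$. Equality of kernels then gives equality of bounded operators.

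The main obstacle is making the substitution $y=x'$ rigorous. The kernel identity holds only almost everywhere in $(x,x')$ for each fixed $y$, whereas $F_T(x)=T(x,0)$ is a restriction to a null set. I would handle this with the stated local integrability in each variable separately. Test both sides against $g(x)h(x')$ with $g,h\in C_c^\infty$ and integrate in $y$ against an approximate identity $\eta_\epsilon(y-x_0')$. Then define $\tilde F(z) := \eu^{-\iu b \phi(z+x',x')}T(z+x',x')$ as a function that is, by the identity, independent of $x'$ almost everywhere. Fubini shows it is well defined as an $L^1_{\rm loc}$ function, and one identifies it with $T(\cdot,0)$ in the sense that the kernel is understood (for instance, choosing the representative in which the map $x'\mapsto T(\cdot + x',x')$ twisted by the phase is constant). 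Under the paper's assumption that $T(x,0)$ is meaningful, this gives the stated formula. If the kernel is continuous, which is the case in the intended applications, the argument is immediate pointwise.
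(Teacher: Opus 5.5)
Your proof is correct and is essentially the paper's argument. You conjugate $T$ by $\tau_{b,y}$ to get $T(x,x')=\eu^{\iu b\phi(x-x',y)}\,T(x-y,x'-y)$, set $y=x'$ for the forward direction, and check the bilinear phase identity for the converse. Your remark that setting $y=x'$ restricts to a null set, together with your sketched fix, is more careful than the paper, which simply treats the kernel identity as holding pointwise.
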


\begin{corollary} \label{cor:MagCommuteKernel} 
If $T$ is as in Proposition~\ref{prop:MagCommuteKernel}, then
\begin{enumerate}[label={\rm(\roman*)}, ref={\rm(\roman*)}]
    \item provided $F_T$ is continuous at zero, $T(x,x) = F_T(0)$ is independent of $x \in \R^2$;
    \item $T = T^*$ if and only if $\overline{F_T(-x)} = F_T(x)$ for $x \in \R^2$.
\end{enumerate}
\end{corollary}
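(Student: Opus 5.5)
The plan is to derive both items directly from the representation $T(x,x') = \eu^{\iu b \phi(x,x')} F_T(x-x')$ given by Proposition~\ref{prop:MagCommuteKernel}, using two elementary properties of the phase~\eqref{pierls}. First, $\phi(x,x) = \tfrac12\, x\wedge x = 0$ for every $x$. Second, $\phi$ is real and antisymmetric, $\phi(x,x') = -\phi(x',x)$.

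For item (i), set $x'=x$ in the representation. The phase equals $\eu^{\iu b \phi(x,x)} = 1$, so formally $T(x,x) = F_T(0)$. The only care needed is in interpreting the diagonal of a kernel that is merely locally integrable. I would use continuity of $F_T$ at zero to give a precise meaning: define the diagonal value as $\lim_{x'\to x} T(x,x')$, and note that this limit exists. Indeed, $\phi$ is continuous and vanishes on the diagonal, while $F_T(x-x') \to F_T(0)$ as $x' \to x$. The limit is therefore $F_T(0)$, which is independent of $x$. This interpretive step is the only place I expect any subtlety, and continuity of $F_T$ at $0$ is exactly what resolves it.

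For item (ii), recall that $T^*$ has kernel $T^*(x,x') = \overline{T(x',x)}$. Applying the representation to $T(x',x)$ gives
\[
T^*(x,x') = \eu^{-\iu b \phi(x',x)}\, \overline{F_T(x'-x)} = \eu^{\iu b \phi(x,x')}\, \overline{F_T(-(x-x'))},
\]
where the second equality uses the antisymmetry of $\phi$.

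Next, compare kernels. Two bounded integral operators coincide if and only if their kernels agree almost everywhere. The phase $\eu^{\iu b\phi(x,x')}$ never vanishes, so it can be divided out. Hence $T=T^*$ if and only if $F_T(x-x') = \overline{F_T(-(x-x'))}$ for almost every $(x,x')$. After the change of variables $(x,x')\mapsto(x-x',x')$, this is equivalent to $F_T(x) = \overline{F_T(-x)}$ for almost every $x$.

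Finally, the converse direction is the same computation read backwards. If $\overline{F_T(-x)} = F_T(x)$, the displayed formula shows that $T^*$ has the same kernel as $T$, so $T = T^*$.
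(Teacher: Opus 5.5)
Your argument is correct and is exactly the immediate consequence of Proposition~\ref{prop:MagCommuteKernel} that the paper has in mind; the paper gives no further detail. For (i) the phase vanishes on the diagonal, and for (ii) the antisymmetry of $\phi$ turns $\overline{T(x',x)}$ into $\eu^{\iu b\phi(x,x')}\overline{F_T(-(x-x'))}$. Your ``only if'' direction yields the identity for almost every $x$, which is the natural reading for a merely locally integrable kernel, and it upgrades to every $x$ whenever $F_T$ is continuous, as it is for $F\in BC_H(\R^2)$ in all of the paper's applications.
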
 

With this observation in mind and assuming as an \textsl{ansatz} that the operators $W$ and $f(H_b+\lambda W)$ commute with all magnetic translations, we can restate \eqref{0} as follows: find a function $F$ such that 
\begin{equation}\label{1}
\begin{aligned}
    &\qquad \qquad \qquad \qquad F(x)=f(H_b+\lambda W_{F})(x,0),\\
    W_{F}&\coloneq \left( \int_{\R^2} v(y) \, \di y \right) \, F(0) + Z_{F}\, ,\quad Z_{F}(x,x') \coloneq - \eu^{\iu \, b \, \phi(x,x')} \, v(x-x') \, F(x-x') .
    \end{aligned}
\end{equation}
In Section~\ref{sec:kernels} and Section~\ref{sec:fixed} we show that the above objects are well-defined and the equation~\eqref{1} is well-posed, when the function $v$ from the two-body interaction $V(x;y)=v(x-y)$ is any even real function such that%
\footnote{Recall the Japanese bracket notation $\langle x \rangle \coloneq (1+|x|^2)^{1/2}$.} %
$\langle x\rangle ^n \, v(x)\in L^1(\R^2)$ for all $n\geq 0$. For concreteness, $v$ can be chosen as the screened Coulomb potential in 2D, i.e.\ $v(x)=e^{-\alpha|x|}\,\ln(|x|)$ with $\alpha>0$. Correspondingly, we allow functions $F$ as elements of the set
\begin{equation}\label{cdh1} BC_H(\R^2)\coloneq\{G\in BC(\R^2)\mid \overline{G(-x)} = G(x)\,,\:\forall x\in\R^2\}\,,
\end{equation}
where $BC(\R^2) \coloneq C^0(\R^2) \cap L^\infty(\R^2)$ is the set of complex-valued bounded and continuous functions on $\R^2$. 

Concerning the function $f$ on the right hand side of \eqref{1}, and having in mind the Fermi--Dirac distribution, we will ask it to be smooth and with fast decay at infinity.  This corresponds to considering an $f$ which is real-valued and that for any $n,k\geq 0$ satisfies
\begin{equation}\label{cdh2}
\sup_{x\in[0,\infty)}\langle x\rangle^n|f^{(k)}(x)|<\infty.
\end{equation}

We may now formulate the first main result of our paper.
\begin{theorem}\label{thm1}
    The following properties hold true: 

\begin{enumerate}[label={\rm(\roman*)}, ref={\rm(\roman*)}]
 \item  \label{thm1-i} For any $F\in BC_H(\R^2)$, the operator $W_F$ defined in \eqref{1} is self-adjoint and bounded, so that in particular
 \begin{equation} \label{eqn:HblF}
  H_{b,\lambda,F} \coloneq H_b + \lambda \, W_F, \quad \lambda \in \R\,, 
 \end{equation}
 is a lower-bounded self-adjoint operator with the same domain as $H_b$. 
 \item \label{thm1-ii} Fix $f$ obeying \eqref{cdh2}. The operator $f(H_{b,\lambda,F})$, defined by functional calculus, has a jointly continuous integral kernel $f(H_{b,\lambda,F})(x,x')$. Moreover, for every $F\in BC_H(\R^2)$ we have 
 \[ \Phi_{b,\lambda}^{f}(F)(\cdot ) \equiv \Phi_\lambda(F)(\cdot )\coloneq f(H_{b,\lambda,F})(\cdot,0) \in BC_H(\R^2)\,. \]
 \item \label{thm1-iii} There exists a $\lambda_0>0$ depending on $b$ and $f$ such that the fixed-point equation
 \begin{equation} \label{eqn:FixedPoint} 
 \Phi_\lambda(F) = F\,, \quad F \in BC_H(\R^2)\,,
 \end{equation}
 (cf.\ \eqref{1}) has a solution for all $|\lambda|\leq \lambda_0$. 
\end{enumerate}
 \end{theorem}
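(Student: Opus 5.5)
For part (i), the plan is to treat $W_F$ as the sum of a constant times the identity and the integral operator $Z_F$. The constant term is $\big(\int v\big)F(0)$. Since $F\in BC_H$ gives $\overline{F(0)}=F(0)$, this constant is real.

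The kernel of $Z_F$ satisfies $|Z_F(x,x')|\le \|F\|_\infty |v(x-x')|$. Because $v\in L^1$, Schur's test gives
\[
\|Z_F\|\le \|v\|_{L^1}\|F\|_\infty .
\]
Self-adjointness follows from three facts: $\phi(x,x')=-\phi(x',x)$, $v$ is even and real, and $\overline{F(-x)}=F(x)$. Together they give $\overline{Z_F(x',x)}=Z_F(x,x')$. Kato--Rellich with relative bound zero, or simply bounded perturbation of a self-adjoint operator, then shows that $H_{b,\lambda,F}$ is self-adjoint on $D(H_b)$ and bounded below by $-|\lambda|\,\|W_F\|$. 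Since $Z_F$ has the form given in Proposition~\ref{prop:MagCommuteKernel}, it commutes with the magnetic translations, and hence so does $H_{b,\lambda,F}$.

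For part (ii), I would write $f(H)$ using the Helffer--Sjöstrand formula with an almost analytic extension $\tilde f$ of $f$. The decay \eqref{cdh2} controls $\bar\partial\tilde f$ to any order in $|\mathrm{Im} z|$, and also at infinity once $f$ is extended smoothly below $0$ down to $\inf\sigma(H)$. The first goal is to show that the resolvent $(H_{b,\lambda,F}-z)^{-1}$ has an integral kernel that is continuous off the diagonal and has a log-type singularity on it. I would obtain this from the explicit free magnetic resolvent kernel $e^{ib\phi(x,x')}g_z(x-x')$ of $H_b$ via the Neumann/resolvent expansion $R=R_0-\lambda R_0W_FR$. A cleaner route is to use $f(H)=f(H)(H+c)^{-2}\cdot(H+c)^2$ type factorizations, or to write $f(H)=g(H)\,(H+c)^{-1}(H+c)^{-1}g(H)$-style products with $g(t)=\sqrt{\cdot}$-adapted factors. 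The point is that $(H+c)^{-1}$ is Hilbert--Schmidt-localized, meaning $\sup_x\|(H+c)^{-1}(x,\cdot)\|_{L^2}<\infty$ because $g_z\in L^2(\mathbb R^2)$. A product of two such factors then has a bounded, jointly continuous kernel by Cauchy--Schwarz together with continuity of translations in $L^2$. Commutation with magnetic translations and Proposition~\ref{prop:MagCommuteKernel} give $f(H)(x,x')=e^{ib\phi(x,x')}\Phi_\lambda(F)(x-x')$. Self-adjointness, since $f$ is real, together with Corollary~\ref{cor:MagCommuteKernel} gives the $BC_H$ property.

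For part (iii), I would aim for a contraction on a closed ball of $BC_H$ around $F_0:=f(H_b)(\cdot,0)$, equipped with the sup norm. The key estimate is a Lipschitz bound
\[
\|\Phi_\lambda(F)-\Phi_\lambda(G)\|_\infty\le C|\lambda|\,\|F-G\|_\infty
\]
for $F,G$ in the ball. To get it, I would apply the Helffer--Sjöstrand formula to the difference $f(H_F)-f(H_G)$ and use the second resolvent identity,
\[
R_F(z)-R_G(z)=-\lambda R_F(z)(W_F-W_G)R_G(z).
\]
The diagonal value of the kernel of this difference is then bounded by
\[
|\lambda|\,\|W_F-W_G\|\,\sup_x\|R_F(z)(x,\cdot)\|_{L^2}\,\sup_{x'}\|R_G(z)(\cdot,x')\|_{L^2}.
\]
Since $W_F-W_G=W_{F-G}$ is linear, its norm is at most $2\|v\|_1\|F-G\|_\infty$. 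Because these kernels depend on $x-x'$ only up to the phase, the sup over $x$ of $|\cdot|$ is controlled by the same quantity.

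I expect the main obstacle to be the uniform bound on $\sup_x\|R_F(z)(x,\cdot)\|_{L^2}$, polynomially in $\langle z\rangle/|\mathrm{Im} z|$, uniformly in $F$ in the ball. I would try to get it by writing $R_F=R_0(z)(1+\lambda W_FR_0(z))^{-1}$, or more robustly $R_F=(H_b+c)^{-1}\,\big[(H_b+c)R_F\big]$, where the bracket is bounded by $C\langle z\rangle/|\mathrm{Im} z|$ and the free kernel is explicit and in $L^2$. These polynomial losses are absorbed by the arbitrary-order vanishing of $\bar\partial\tilde f$ at the real axis. The same bound shows that $\Phi_\lambda$ maps the ball of radius $1$ to itself, since $\|\Phi_\lambda(F)-F_0\|\le C|\lambda|(\|F_0\|+1)$. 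Banach's fixed point theorem then yields a solution, which is in fact unique in the ball, for $|\lambda|\le\lambda_0$ with $\lambda_0\sim 1/C$ depending on $b$ and $f$.
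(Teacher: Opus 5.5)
Your argument is correct, but it reaches Theorem~\ref{thm1} by a genuinely different and much more elementary route than the paper.

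\textbf{The paper's route.} The paper works entirely at the level of magnetic symbols:
\begin{enumerate}
\item It shows that $(H_{b,\lambda,F}-z)^{-1}$ has a symbol in $S^{-2}_0(\R^2)$, with the $z$-control \eqref{hdc4} (Lemma~\ref{pseudo:resolvent2}).
\item It factors $f(H_{b,\lambda,F})=(H_{b,\lambda,F}-\iu)^{-k}f_k(H_{b,\lambda,F})$.
\item It obtains an $S^0_0$ symbol for $f_k(H_{b,\lambda,F})$ from Helffer--Sj\"ostrand plus the commutator/Gabor-matrix decay criterion.
\item It concludes, via the Moyal composition lemma, that $f(H_{b,\lambda,F})$ has a smoothing symbol.
\end{enumerate}
The Lipschitz bound \eqref{hdc12} is then proved in every weighted symbol seminorm, and the contraction is run on the ball $B_M$ centred at $0$.

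\textbf{Your route.} You bypass Section~\ref{sec:pseudodiffs} altogether. You only use:
\begin{enumerate}
\item the kernel of $(H_b+c)^{-1}$ is $\eu^{\iu b\phi(x,x')}g_c(x-x')$ with $g_c\in L^2(\R^2)$;
\item $(H_b+c)R_F(z)$ and $R_F(z)(H_b+c)$ are bounded by $C\langle z\rangle/|\mathrm{Im}\,z|$, uniformly for $F$ in a ball, since $D(H_{b,\lambda,F})=D(H_b)$;
\item Cauchy--Schwarz;
\item the second resolvent identity inside Helffer--Sj\"ostrand.
\end{enumerate}
The sandwich $f(H)=(H_b+c)^{-1}B(H_b+c)^{-1}$, with $B$ bounded, gives (ii) for every $\lambda$ and $F$. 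The uniform $L^2$ bounds on rows and columns of $R_F(z)$ give the sup-norm Lipschitz estimate needed for (iii).

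\textbf{What each buys.} Your route gains brevity and explicit constants. It loses everything beyond boundedness and continuity of the kernel. The paper additionally obtains $\Phi_\lambda(F)\in\mathscr{S}(\R^2)$, Schwartz fixed points, and rapid off-diagonal decay of $f(H_{b,\lambda,F})(x,x')$ and of $P_{b,\lambda,N}-P_{b,0,N}$ (with a factor $|\lambda|$). These are precisely the inputs of Proposition~\ref{prop:traces}: trace class of $\chi_LP_{b,\lambda,N}$, localization of the Kato--Nagy unitary, and the $\Og(L)$ boundary term.

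\textbf{Points to tidy up.}
\begin{enumerate}
\item Your Cauchy--Schwarz bound holds at every $(x,x')$, not only on the diagonal. That is what actually controls $\sup_x|\Phi_\lambda(F)(x)-\Phi_\lambda(G)(x)|$.
\item Passing from the norm-convergent Helffer--Sj\"ostrand integral to pointwise kernel values needs a short Fubini argument. Alternatively, estimate the closure of $(H_b+c)\big(f(H_{b,\lambda,F})-f(H_{b,\lambda,G})\big)(H_b+c)$ in operator norm, then apply your sandwich once.
\item The Neumann-series and log-singularity discussion in (ii) is unnecessary once you use the sandwich.
\end{enumerate}
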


\begin{remark} \label{rmk:mainThm1}
 The proof of Theorem~\ref{thm1} is covered by the results in Section~\ref{sec:kernels} and \ref{sec:fixed}. The main strategy is to work with the ``magnetic" symbols of the operators involved (see Section~\ref{sec:pseudodiffs}) instead of their integral kernels. 
 
 An important fact is that the resolvent of $H_b$, as well as the resolvent of some ``small" perturbations of $H_b$, have symbols with a certain decay, see Lemma~\ref{pseudo:resolvent2}, which translates into integral kernels having appropriate regularity. Combining this with magnetic pseudo-differential calculus and the properties of $f$, we can conclude that $f(H_{b,\lambda,F})$ has a smoothing symbol, as seen in Proposition~\ref{prop:fermi_dirac_kernel}. 
 
 Furthermore, when $|\lambda|$ is sufficiently small, then the difference in symbols for $f(H_{b,\lambda,F})$ and $f(H_{b,\lambda,G})$ behaves asymptotically as $\Og(|\lambda|\,\Vert F-G\Vert_{L^\infty(\R^2)})$ for $F,G$ in a bounded set (see \eqref{hdc12}). This sets us up to use the contraction mapping theorem, also known as Banach's fixed point theorem, in Corollary~\ref{thm:fix_point_final}, which also gives uniqueness of the solution to \eqref{eqn:FixedPoint} in $L^\infty$-balls in $BC_H(\R^2)$.
\end{remark}

The second main result is related to the so-called \emph{zero-temperature limit}. It concerns the self-consistent Fermi projection $P$, obtained from the above Theorem as the solution of the fixed-point equation corresponding to $f$'s which are smoothed characteristic functions of the interval $(-\infty, \mu)$ with $\mu$ in a spectral gap of $H_b$. The result states that, if the interaction is weak compared to the external magnetic field so that the gap which contains the chemical potential $\mu$ does not close, then $P$ has an \emph{integrated density of states} (which in view of the results of Proposition~\ref{prop:MagCommuteKernel} coincides with the constant value along the diagonal of its integral kernel) which grows linearly in $b$ with a quantized slope. 

\begin{theorem}\label{thm2}
Let us fix $b>0$ and let us choose any real-valued, compactly supported function $f_N$ which equals $1$ near $N<\infty$ Landau levels $\{(2n+1)b \mid n\geq 0\}$ and equals zero near all others. Let $\lambda_0>0$ be as in Theorem~\ref{thm1} \ref{thm1-iii} and denote by $F_{b,\lambda}$ the fixed point of $\Phi_\lambda$ for $|\lambda|\leq\lambda_0$.
Then:
\begin{enumerate}[label={\rm(\roman*)}, ref={\rm(\roman*)}]
    \item \label{thm2-i} There exists $0<\lambda_1\leq \lambda_0$ such that for all $|\lambda|\leq\lambda_1$ the operator $P_{b,\lambda,N}\coloneq f_N(H_{b,\lambda,F_{b,\lambda}})$ is an orthogonal projection. 
    \item \label{thm2-ii} There exists some $0<\lambda_2\leq \lambda_1$  such that for all $|\lambda|\leq \lambda_2$ we have that the integrated density of states of $P_{b,\lambda,N}$ is independent of $\lambda$ and equals   
    \begin{equation} \label{Streda}
    \mathcal{I}(P_{b,\lambda,N}) = P_{b,\lambda,N}(0,0)=\frac{N\, b}{2\pi}\, .
    \end{equation}
\end{enumerate}
\end{theorem}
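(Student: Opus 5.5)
For part (i), I will use the norm bound $\|W_F\|\le C(\|v\|_{L^1}+\|v\|_{L^1})\|F\|_\infty$ from Theorem~\ref{thm1}\ref{thm1-i}; it follows from the Schur test applied to the kernel of $Z_F$. By the contraction argument (Corollary~\ref{thm:fix_point_final}), the fixed points $F_{b,\lambda}$ stay in a fixed $L^\infty$-ball, so $\|\lambda W_{F_{b,\lambda}}\|\le C|\lambda|$ uniformly. The spectrum of $H_b$ consists of the Landau levels $(2n+1)b$, separated by gaps of size $2b$. By standard perturbation theory for self-adjoint operators, $\sigma(H_{b,\lambda,F_{b,\lambda}})$ lies in the $C|\lambda|$-neighbourhood of $\sigma(H_b)$. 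I choose $\lambda_1$ so small that this neighbourhood lies in the region where $f_N\in\{0,1\}$. The function $f_N$ is then locally constant, with values $0$ or $1$, on the spectrum, so $f_N(H_{b,\lambda,F_{b,\lambda}})=f_N^2(H_{b,\lambda,F_{b,\lambda}})$. This operator coincides with the Riesz projection $-\frac{1}{2\pi\iu}\oint_\Gamma (H_{b,\lambda,F_{b,\lambda}}-z)^{-1}\,\di z$, where $\Gamma$ encircles the first $N$ perturbed bands. This proves (i).

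For part (ii), the value $P(0,0)$ is the diagonal of the kernel, which is constant by Corollary~\ref{cor:MagCommuteKernel}. The strategy is continuity plus quantization. First, I will show that $\lambda\mapsto P_{b,\lambda,N}(0,0)$ is continuous. The symbols of the resolvents depend continuously on $\lambda$ via the bound \eqref{hdc12} and the continuity of the fixed point in $\lambda$, which follows from the uniform contraction. The Riesz formula then carries this continuity to the kernel on the diagonal. At $\lambda=0$, the quantity $P(0,0)$ is the density of $N$ Landau levels, namely $Nb/(2\pi)$. It therefore suffices to show that $P(0,0)$ takes values in a discrete set, or more directly that $\lambda\mapsto P_{b,\lambda,N}(0,0)$ is constant.

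The direct route I would try is a unitary-equivalence/index argument at fixed $b$. The two projections $P_\lambda$ and $P_0$ satisfy $\|P_\lambda-P_0\|<1$ for small $\lambda$, by the resolvent bound. The Kato–Nagy unitary $U_\lambda=(P_\lambda P_0+(1-P_\lambda)(1-P_0))(1-(P_\lambda-P_0)^2)^{-1/2}$ satisfies $U_\lambda P_0U_\lambda^*=P_\lambda$. Since $U_\lambda$ is a function of $P_0$ and $P_\lambda$, it commutes with magnetic translations. I then need a trace-per-unit-area identity $\underline{\tr}(UPU^*)=\underline{\tr}(P)$ for magnetic-translation-invariant $U$, where $\underline{\tr}(T)=T(0,0)$. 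This is where the pseudo-differential calculus enters. The trace per unit area of a magnetic Moyal product $a\sharp_b c$ equals $(2\pi)^{-2}\int a\sharp_b c\,\di\xi$. The integral of a Moyal product equals $(2\pi)^{-2}\int a\,c\,\di\xi$, which is symmetric in $a$ and $c$ for symbols with enough decay. I plan to apply this with $a=U\sharp P$ and $c=U^*$, which gives $\underline{\tr}(UPU^*)=\underline{\tr}(PU^*U)=\underline{\tr}(P)$. The symbol of $P$ is smoothing by Proposition~\ref{prop:fermi_dirac_kernel}, but $U$ is only bounded. To handle this I would write $U=1+(U-1)$ and note that $U-1$ is a function of $P_\lambda-P_0$ that vanishes at zero, so its symbol is also smoothing. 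The cyclicity identity then applies to all terms.

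The main obstacle is justifying this cyclicity of the trace per unit area with the needed symbol classes. Concretely, I must show that $P_\lambda-P_0$, and hence $U-1$, has an integrable, sufficiently decaying magnetic symbol, so that $\int a\sharp_b c=\int ac$ holds. If this becomes too delicate, the fallback is to differentiate in $\lambda$. The quantity $\partial_\lambda P=P(\partial_\lambda P)P^\perp+P^\perp(\partial_\lambda P)P$ is off-diagonal, so $\underline{\tr}(\partial_\lambda P)=\underline{\tr}(P\,\partial_\lambda P\,P)+\underline{\tr}(P^\perp\,\partial_\lambda P\,P^\perp)$. Both terms vanish by the same cyclicity once one of the factors is smoothing. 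This again yields $P_{b,\lambda,N}(0,0)=Nb/(2\pi)$ for $|\lambda|\le\lambda_2$.
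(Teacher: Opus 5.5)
Your part (i) is essentially the paper's argument: $\|\lambda W_{F_{b,\lambda}}\|=\mathcal{O}(|\lambda|)$ keeps the spectrum inside the region where $f_N\in\{0,1\}$.

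For part (ii) you use the same Kato--Nagy unitary $U_\lambda$ as the paper, but you reach the trace identity by a genuinely different route.

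\textbf{The paper's route.} It never works with a trace per unit area directly. It truncates with the disk indicator $\chi_L$ and uses cyclicity of the ordinary trace to write $\tr(\chi_L P_{b,\lambda,N})-\tr(\chi_L P_{b,0,N})=\tr([\chi_L,V_\lambda]P_{b,0,N}V_\lambda^*)-\tr([\chi_L,V_\lambda]P_{b,0,N})$, where $V_\lambda=U_\lambda-\mathrm{Id}$. It then bounds this by a boundary term $\mathcal{O}(L)$. That bound needs rapid off-diagonal decay of the kernel of $V_\lambda$. This is why the paper proves, via a contour integral and the Beals-type criterion, that $V_\lambda$ has a smoothing symbol.

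\textbf{Your route.} You use cyclicity of $T\mapsto T(0,0)$ on magnetic-translation-invariant operators. In this homogeneous setting that is just Parseval: $\int a\sharp_b c\,\di\xi=\int ac\,\di\xi$, which follows from the composition formula for symbols because $y^\perp\cdot y=0$. This is correct.

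\textbf{Your main obstacle disappears at the kernel level.} Set $A\coloneq U_\lambda P_{b,0,N}$. Magnetic covariance gives $A(x,x')=\eu^{\iu b\phi(x,x')}F_A(x-x')$ with $F_A=U_\lambda\big(P_{b,0,N}(\cdot,0)\big)\in L^2(\R^2)$. Since $AA^*=P_{b,\lambda,N}$ and $A^*A=P_{b,0,N}$,
\[
P_{b,\lambda,N}(0,0)=(AA^*)(0,0)=\Vert F_A\Vert_{L^2(\R^2)}^2=(A^*A)(0,0)=P_{b,0,N}(0,0),
\]
and no regularity of $U_\lambda-\mathrm{Id}$ is needed at all.

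\textbf{An inaccuracy.} Your justification of that regularity is not right as stated: $U_\lambda$ is not a function of $P_\lambda-P_0$ alone. Rather, $P_\lambda P_0+(\mathrm{Id}-P_\lambda)(\mathrm{Id}-P_0)=\mathrm{Id}+(P_\lambda-P_0)(2P_0-\mathrm{Id})$. So $U_\lambda-\mathrm{Id}$ is $(P_\lambda-P_0)$ times a bounded operator. Showing that this bounded factor has an $S^0_0$ symbol is exactly the extra work the paper does.

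\textbf{Comparison.} Your route is shorter and removes the pseudo-differential analysis of $V_\lambda$. The paper's $\chi_L$/boundary-term argument uses only off-diagonal kernel decay. It therefore carries over to settings without full magnetic translation invariance, such as periodic or disordered systems, where the integrated density of states must be defined as a thermodynamic limit.

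Your continuity-plus-quantization preamble and your derivative fallback are unnecessary. The fallback would also require differentiability of $\lambda\mapsto F_{b,\lambda}$, which is not established.
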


We prove Theorem~\ref{thm2} in Section~\ref{sec:zerotemp}. The proof uses the strategy from \cite{CorneanMonacoMoscolari2021,CorneanMoscolari2025}. The main idea is as follows: if $\lambda=0$, the integrated density of states of $P_{b,0,N}\coloneq f_N(H_b)$ is well known and equals $\frac{N\, b}{2\pi}$ \cite{Landau1930}. One also proves that
\[P_{b,\lambda,N}-P_{b,0,N}=f_N(H_{b,\lambda,F_{b,\lambda}})-f_N(H_{b,\lambda,0})\sim |\lambda|\, ,\]
hence, when $|\lambda|$ is small, there exists an intertwining unitary between $P_{b,\lambda,N}$ and $P_{b,0,N}$ having some extra special properties which ensure that the two projections must have the same integrated density of states. This is done in Proposition~\ref{prop:traces}.

From \eqref{Streda}, one trivially obtains that for all sufficiently small $\lambda$'s
\[ 2 \pi \, \frac{\partial \mathcal{I}(P_{b,\lambda,N})}{\partial b} = 2 \pi \, \frac{\partial \mathcal{I}(P_{b,0,N})}{\partial b} = N\,. \]
It was realized by St\v{r}eda \cite{Streda1982} that the derivative of the integrated density of states with respect to the magnetic field is proportional to the \emph{Hall conductivity} of the 2D magnetic fermion gas, in appropriate physical units (equal to $1/2\pi$ in our notation); see \cite{CorneanNenciuPedersen2006, CorneanNenciu2009} for proofs of this statement in the non-interacting approximation. Our second main result, Theorem~\ref{thm2}, thus shows that the self-consistent interacting Hartree--Fock projection carries the same Hall conductivity of the non-interacting one, provided the interaction strength $\lambda$ is small enough; moreover, its value is \emph{quantized} to an integer, which in the above Theorem is the number of occupied Landau levels, but that in general coincides with the \emph{Chern number} of the Fermi projection \cite{CorneanMonacoMoscolari2021}. This proves the Hall conductivity to be a \emph{universal} topological transport coefficient in weakly interacting fermionic systems, at least within the Hartree--Fock approximation. This result is in line with several similar statements in the recent literature, obtained by different methods and without resorting to Hartree--Fock theory, in weakly interacting fermionic systems on a lattice, showing quantization and proving universality of the Hall conductance \cite{HastingsMichalakis2015, GiulianiMastropietroPorta2017, BachmannEtAl2018} or of the Hall conductivity \cite{WesleEtAl2025}.

\section{Preliminaries: 2D magnetic pseudo-differential operators with constant magnetic field}\label{sec:pseudodiffs}

As stated in Remark~\ref{rmk:mainThm1}, our proofs rely on magnetic pseudo-differential calculus, which for the sake of completeness will be presented in this Section. In our setting, the magnetic field is constant and all the symbols only depend on the momentum, and not on the position variable. This leads to a significant simplification of the theory: many important results like the Calder{\' o}n-Vaillancourt theorem or Beals' commutator criterion admit simpler proofs and formulations in this framework. Most of the results in this section can be found, in a more general form, in \cite{Mantoiu2004,Iftimie2007,Iftimie2019,CorneanHelfferPurice2018,CorneanHelfferPurice2024}. Neverthless, we decided to reformulate (almost) the whole theory in our simplified setting,  in order to make our paper (almost) self-contained. If the reader is familiar with \cite{Mantoiu2004,CorneanHelfferPurice2018,CorneanHelfferPurice2024}, then they may skip forward to Subsection~\ref{subsec:resolvents}.

We denote by $S^{m}_0(\R^2)$, $m\in\Z$, the set of functions $p\in C^\infty(\R^2)$ such that for any $\alpha=(\alpha_1,\alpha_2)\in \Z_+^2$ we have a constant $C_\alpha$ such that
\begin{equation}\label{pseudo1}
    |\partial^\alpha p(\xi)|\leq C_\alpha\, \langle \xi\rangle^{m},\quad \forall \xi\in\R^2.
\end{equation}
We define a class of magnetic pseudo-differential operators by 
\begin{equation}\label{pseudo2}
\langle f_1, {\rm Op}^A(p) f_2\rangle \coloneq(2\pi)^{-2} \int_{\R^2}\, \di\xi\, \int_{\R^4}\di x\, \di y\, \overline{f_1(x)}\, f_2(y)\, \eu^{\iu b\phi(x,y)} \eu^{\iu\xi\cdot (x-y)} p(\xi),
\end{equation}
where $f_1,f_2\in\mathscr{S}(\R^2)$ are Schwartz functions and
\begin{equation*}
 \phi(x,y)=(x_2y_1-x_1y_2)/2 =x\cdot y^\perp,
\end{equation*}
is as in \eqref{pierls}, with $y^\perp=(-y_2/2,y_1/2)$. We call $p$ the symbol of the operator ${\rm Op}^A(p)$.

\subsection{A tight magnetic Gabor frame}

The following will be a technical tool used throughout this section. See also Lemma~2.1 \cite{CorneanHelfferPurice2018} and Proposition~2.2 \cite{CorneanHelfferPurice2024}.

\begin{lemma}[Gabor frame]\label{pseudo:frame}
    Let $\gamma, \gamma^*$ be in $\mathbb{Z}^2$ and let $\phi(\cdot, \cdot)$ be as in \eqref{pierls}.
    Consider the set of functions  
    \[\Psi_{\gamma,\gamma^*}(y)\coloneq\eu^{\iu b\phi(y,\gamma)}g(y-\gamma)(2\pi)^{-1} \eu^{\iu \gamma^* \cdot (y-\gamma)}, \qquad y\in\mathbb{R}^2
    \] 
    with $g\in C_0^\infty(\R^2)$ such that
    \begin{equation}\label{eq:prop g}
    \mathrm{supp}(g)\subset (-1,1)^2\subset\mathbb{R}^2, \qquad \sum_{\gamma\in \Z^2} g^2(x-\gamma)=1.
    \end{equation}
    Then the $\Psi_{\gamma,\gamma^*}$'s constitute a tight Parseval frame in $L^2(\R^2)$ and for $f\in\mathscr{S}(\R^2)$ we have
    \begin{equation*}
        f(x)=\sum_{\gamma,\gamma^*\in\Z^2} \Psi_{\gamma,\gamma^*}(x)\, \langle \Psi_{\gamma,\gamma^*},f\rangle,
    \end{equation*}
    with convergence in $\mathscr{S}(\R^2)$.
\end{lemma}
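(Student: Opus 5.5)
The plan is to factor $\Psi_{\gamma,\gamma^*}$ into a magnetic translation part and a standard Fourier part, and then reduce to the Parseval identity for Fourier series on a square of side $2\pi$. For fixed $\gamma$, write $\Psi_{\gamma,\gamma^*}(y)=\eu^{\iu b\phi(y,\gamma)}g(y-\gamma)\,e_{\gamma^*}(y-\gamma)$ with $e_{\gamma^*}(z)\coloneq(2\pi)^{-1}\eu^{\iu\gamma^*\cdot z}$. The family $\{e_{\gamma^*}\}_{\gamma^*\in\Z^2}$ is an orthonormal basis of $L^2$ of any square of side $2\pi$. In particular this holds on the square $Q=(-\pi,\pi)^2$, which contains $(-1,1)^2\supset\mathrm{supp}(g)$.

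For $f\in\mathscr{S}(\R^2)$, set $h_\gamma(z)\coloneq \eu^{-\iu b\phi(z+\gamma,\gamma)}g(z)f(z+\gamma)$. This is a smooth function supported in $(-1,1)^2\subset Q$. A change of variables gives
\[
\langle\Psi_{\gamma,\gamma^*},f\rangle=\int_Q \overline{e_{\gamma^*}(z)}\,h_\gamma(z)\,\di z,
\]
so these numbers are the Fourier coefficients of $h_\gamma$ on $Q$.

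Parseval on $Q$ then gives $\sum_{\gamma^*}|\langle\Psi_{\gamma,\gamma^*},f\rangle|^2=\|g(\cdot-\gamma)f\|^2_{L^2}$, since the unimodular phase drops out. Summing over $\gamma$ and using $\sum_\gamma g^2(x-\gamma)=1$ yields
\[
\sum_{\gamma,\gamma^*}|\langle\Psi_{\gamma,\gamma^*},f\rangle|^2=\|f\|^2.
\]
This is the tight Parseval frame identity on the dense set $\mathscr{S}$, and it extends to $L^2$ by continuity.

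For the reconstruction formula, I use that the Fourier series of the smooth function $h_\gamma$ converges to $h_\gamma$ on $Q$. The convergence is even uniform with all derivatives, provided $h_\gamma$ extends periodically in a smooth way. This holds because $h_\gamma$ is compactly supported inside the open square $Q$.

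Multiplying back by $g(x-\gamma)\eu^{\iu b\phi(x,\gamma)}$ gives
\[
\sum_{\gamma^*}\Psi_{\gamma,\gamma^*}(x)\langle\Psi_{\gamma,\gamma^*},f\rangle=g(x-\gamma)\,\eu^{\iu b\phi(x,\gamma)}\,h_\gamma(x-\gamma).
\]
The phases cancel, because they are conjugate and evaluated at the same point $x=z+\gamma$. The extra factor $g(x-\gamma)$ appears because we multiply by the cutoff once more, so the right-hand side equals $g^2(x-\gamma)f(x)$. Here it matters that the multiplier $g(\cdot-\gamma)$ equals $1$ times itself on the support, so no periodization artifacts appear. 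Summing over $\gamma$ then gives $f$.

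The main obstacle is convergence in $\mathscr{S}(\R^2)$, i.e.\ uniformity of the double sum with polynomial weights and derivatives. I would handle it by integrating by parts in $z$ via $(1-\Delta_z)^k e_{\gamma^*}=\langle\gamma^*\rangle^{2k}e_{\gamma^*}$. This gives
\[
|\langle\Psi_{\gamma,\gamma^*},f\rangle|\le C_{k,n}\langle\gamma^*\rangle^{-2k}\langle\gamma\rangle^{-n}.
\]
Here derivatives of the phase $\eu^{-\iu b\phi(z+\gamma,\gamma)}$ produce at most powers of $|\gamma|$, which are absorbed by the Schwartz decay of $f$ near $\gamma$. Conversely, for $|x-\gamma|<\sqrt2$ one has $|x^\alpha\partial^\beta\Psi_{\gamma,\gamma^*}(x)|\le C\langle\gamma\rangle^{|\alpha|+|\beta|}\langle\gamma^*\rangle^{|\beta|}$. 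Hence every Schwartz seminorm of the tail of the double series is bounded by a convergent sum. The series therefore converges absolutely in $\mathscr{S}$, and its limit is $f$ by the pointwise identification above.
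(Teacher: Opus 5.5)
Your proof is correct and follows the paper's route. You localize with $g(\cdot-\gamma)$, drop the unimodular magnetic phase, expand in Fourier series on a square of side $2\pi$ containing $\mathrm{supp}(g)$, sum over $\gamma$ using $\sum_\gamma g^2(\cdot-\gamma)=1$, and obtain convergence in $\mathscr{S}(\R^2)$ by playing polynomial bounds on the seminorms of $\Psi_{\gamma,\gamma^*}$ against rapid decay of the coefficients. You also supply details the paper leaves to the reader: the integration-by-parts decay estimate, and a direct pointwise identification of the limit, where the paper instead relies on the $L^2$ frame expansion.
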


\begin{remark}[Gabor frame in $L^2(\mathbb{R}^2)$]\label{rem: Gabor} Note that being a tight Parseval frame implies that the decomposition
\begin{equation*}
    f=\sum_{\gamma,\gamma^*\in\Z^2} \Psi_{\gamma,\gamma^*}\, \langle \Psi_{\gamma,\gamma^*},f\rangle,
\end{equation*}
holds for every $f\in L^2(\R^2)$, see Corollary~5.1.7 \cite{Christensen2016} or compare with the proof of Proposition~2.2 in \cite{CorneanHelfferPurice2024}.
\end{remark}
\begin{proof}[Proof of Lemma~\ref{pseudo:frame}]
    For $f_1,f_2\in L^2(\R^2)$ we have, using the properties of $g$ \eqref{eq:prop g},
    \[\langle f_1,f_2\rangle=\sum_{\gamma\in\Z^2}\langle g(\cdot-\gamma)f_1,g(\cdot-\gamma)f_2\rangle=\sum_{\gamma\in\Z^2}\langle \eu^{\iu b\phi(\cdot,\gamma)}g(\cdot-\gamma)f_1,\eu^{\iu b\phi(\cdot,\gamma)}g(\cdot-\gamma)f_2\rangle,\]
    and then, expanding each term in a Fourier series,
    \[\langle f_1,f_2\rangle=\sum_{\gamma\in\Z^2}\langle f_1,\Psi_{\gamma,\gamma^*}\rangle\langle\Psi_{\gamma,\gamma^*},f_2\rangle\,.\]
    Thus the $\Psi_{\gamma,\gamma^*}$'s constitute a Parseval frame and we have
    \[f(x)=\sum_{\gamma,\gamma^*\in\Z^2} \Psi_{\gamma,\gamma^*}(x)\, \langle \Psi_{\gamma,\gamma^*},f\rangle\]
    in $L^2(\R^2)$. To show convergence in the seminorms of $\mathscr{S}(\R^2)$ when $f\in\mathscr{S}(\R^2)$, we only need the sum to converge in $\mathscr{S}(\R^2)$. This follows from the seminorms of $\Psi_{\gamma,\gamma^*}$ being polynomially bounded in $\gamma,\gamma^*$ and the coefficients having arbitrary polynomial decay. We leave the details to the reader, see also \cite{Thorn2026}.
\end{proof}

\subsection{Sufficient conditions for being a magnetic pseudo-differential operator}

We now present sufficient conditions ensuring that a given operator on  $L^2(\R^2)$ is a magnetic pseudo-differential operator, in the sense of \eqref{pseudo2}; see part of Theorem~3.1 in \cite{CorneanHelfferPurice2024}. More precisely, we show that a suitable decay of the matrix elements of the operator
with respect to a magnetic Gabor frame implies that the associated symbol belongs to
$BC^\infty(\mathbb R^4)$, the space of bounded smooth functions with bounded derivatives
of all orders.

\begin{lemma}\label{pseudo:matrix}
    Consider the tight magnetic Gabor frame $(\Psi_{\gamma,\gamma^*})_{\gamma,\gamma^*\in\Z^2}$ as in Lemma~\ref{pseudo:frame}.  Let $R\colon \mathscr{S}(\R^2)\to L^2(\R^2)$ be a linear  operator such that both $R$ and its adjoint $R^\ast$ map $\mathscr{S}(\R^2)$ into $ L^2(\R^2)$. Assume moreover that for every $n\geq 1$ there exists $C>0$ such that
\begin{equation}\label{pseudo4}
    |\langle \Psi_{\alpha,\alpha^*}, R\Psi_{\beta,\beta^*}\rangle |\leq C \, \langle \alpha-\beta\rangle^{-n}\, \langle \alpha^*-\beta^*\rangle^{-n},\quad \forall \alpha,\beta,\alpha^*,\beta^*\in \Z^2. 
\end{equation}
Then there exists a symbol $r\in BC^\infty(\R^4)$ such that for all $f_1, f_2 \in\mathscr{S}(\mathbb{R}^2)$, it holds that
\begin{equation}\label{pseudo5}
\langle f_1,R f_2\rangle =(2\pi)^{-2} \int_{\R^2}\di\xi\, \int_{\R^4} \di x\, \di y\, \eu^{\iu\xi\cdot (x-y)} \, \eu^{\iu b\phi(x,y)}\, r\left ( \frac{x+y}{2}, \xi\right) \overline{f_1(x)}\, f_2(y),
\end{equation}
\end{lemma}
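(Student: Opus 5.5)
Following the strategy of Theorem~3.1 in \cite{CorneanHelfferPurice2024}, the plan is to expand $R$ in the tight frame of Lemma~\ref{pseudo:frame} and write down the symbol explicitly. For $f_1,f_2\in\mathscr{S}(\R^2)$, Lemma~\ref{pseudo:frame} (convergence in $\mathscr{S}$) together with the continuity of $R$ and $R^*$ on $\mathscr{S}$ into $L^2$ gives
\[
\langle f_1,Rf_2\rangle=\sum_{\alpha,\alpha^*,\beta,\beta^*}\langle f_1,\Psi_{\alpha,\alpha^*}\rangle\,\langle\Psi_{\alpha,\alpha^*},R\Psi_{\beta,\beta^*}\rangle\,\langle\Psi_{\beta,\beta^*},f_2\rangle ,
\]
and \eqref{pseudo4}, combined with the rapid decay of the frame coefficients of Schwartz functions, makes the series absolutely convergent. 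It therefore suffices to show that each rank-one operator $|\Psi_{\alpha,\alpha^*}\rangle\langle\Psi_{\beta,\beta^*}|$ has the form \eqref{pseudo5} with a symbol $r_{\alpha\alpha^*\beta\beta^*}$. The remaining task is to control the seminorms of these symbols well enough that the weighted sum converges in $BC^\infty(\R^4)$.

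For a general kernel $K(x,y)$, the symbol is obtained by inverting \eqref{pseudo5}:
\[
r(X,\xi)=\int_{\R^2}\di z\;\eu^{-\iu\xi\cdot z}\,\eu^{-\iu b\phi(X+z/2,\,X-z/2)}\,K\big(X+z/2,\,X-z/2\big).
\]
Here $\phi(X+z/2,X-z/2)$ is bilinear, of the form $\tfrac12 z\wedge X$ up to sign. We apply this to $K(x,y)=\Psi_{\alpha,\alpha^*}(x)\overline{\Psi_{\beta,\beta^*}(y)}$. Since $g$ is supported in $(-1,1)^2$, the $z$-integrand is supported where $|X+z/2-\alpha|<\sqrt2$ and $|X-z/2-\beta|<\sqrt2$. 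Hence $r_{\alpha\alpha^*\beta\beta^*}(X,\xi)$ vanishes unless $X$ lies within $\mathcal O(1)$ of $(\alpha+\beta)/2$, and $z$ lies within $\mathcal O(1)$ of $\alpha-\beta$. This gives the localization in $X$.

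The plane-wave factors and the magnetic phases combine into a single phase of the form $\eu^{-\iu z\cdot(\xi-(\alpha^*+\beta^*)/2+\ell(X,\alpha,\beta))}$, where $\ell$ is linear in $X,\alpha,\beta$ with coefficients of order $b$. The key gauge-covariance step is to use the identities $\phi(x,\alpha)-\phi(y,\beta)-\phi(x,y)=\phi(x-\alpha,\,y-\beta)+\dots$. These let us rewrite all magnetic phases in terms of $x-\alpha$ and $y-\beta$, whose sizes are $\mathcal O(1)$, plus a term $\phi(\alpha,\beta)$ that is constant in $(X,\xi)$ and harmless. Integrating by parts in $z$ then gives decay $\langle\xi-(\alpha^*+\beta^*)/2-\ell\rangle^{-N}$ with $\ell=\mathcal O(1)$ on the support. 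Differentiating in $X$ or $\xi$ only produces factors bounded on the support. Altogether this yields
\[
|\partial_X^a\partial_\xi^c\, r_{\alpha\alpha^*\beta\beta^*}(X,\xi)|\le C_{a,c,N}\,\langle\alpha-\beta\rangle^{k}\,\mathbf 1_{\{|X-(\alpha+\beta)/2|\le C\}}\,\langle \xi-(\alpha^*+\beta^*)/2\rangle^{-N},
\]
with a fixed polynomial loss $\langle\alpha-\beta\rangle^{k}$ coming from $z$-derivatives hitting phases that are linear in $z$.

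Finally we set $r=\sum\langle\Psi_{\alpha,\alpha^*},R\Psi_{\beta,\beta^*}\rangle\, r_{\alpha\alpha^*\beta\beta^*}$ and fix $(X,\xi)$. The localization restricts $\alpha+\beta$ to finitely many values, uniformly in $X$, and the weight $\langle\xi-(\alpha^*+\beta^*)/2\rangle^{-N}$ makes the sum over $\alpha^*+\beta^*$ converge. The remaining sums over $\alpha-\beta$ and $\alpha^*-\beta^*$ converge because \eqref{pseudo4}, applied with $n$ larger than $k+3$, beats the polynomial losses. This gives uniform bounds on every derivative, so $r\in BC^\infty(\R^4)$. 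Formula \eqref{pseudo5} then follows by exchanging the sum with the oscillatory integral, which is justified by absolute convergence when tested against $\overline{f_1}\otimes f_2\in\mathscr{S}(\R^4)$.

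The main obstacle is the phase bookkeeping in the second step: showing that the magnetic phases do not spoil the $\xi$-localization, that is, that the shift $\ell$ stays bounded on the support and the losses are only polynomial in $\alpha-\beta$ and not in $\alpha$ or $\beta$ separately. If direct algebra becomes messy, the first thing I would try is to conjugate by the magnetic translation $\tau_{b,\beta}$ to reduce to $\beta=0$, and then check the covariance of the symbol map under magnetic translations.
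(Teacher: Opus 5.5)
Your proposal is correct and follows essentially the same route as the paper: expand in the Gabor frame, compute the symbol of each rank-one term by a partial Fourier transform in $x-y$, use the support of $g$ to localize the midpoint near $(\alpha+\beta)/2$ and $x-y$ near $\alpha-\beta$, integrate by parts in $x-y$ to get decay around $\xi\approx(\alpha^*+\beta^*)/2$, and let \eqref{pseudo4} absorb the polynomial losses in $\alpha-\beta$ and $\alpha^*-\beta^*$. One small correction: up to a constant, the $z$-independent phase is $\eu^{\iu(\alpha^*-\beta^*)\cdot X}\,\eu^{\iu b\phi(X,\alpha-\beta)}$ (the paper's factor $\eu^{\iu u\cdot(b(\alpha-\beta)^\perp+\alpha^*-\beta^*)}$), so $X$-derivatives cost powers of $\langle\alpha^*-\beta^*\rangle$ and $\langle\alpha-\beta\rangle$ rather than bounded factors, and \eqref{pseudo4} absorbs these just as well.
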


\begin{proof}
Let $f_1,f_2\in\mathscr{S}(\R^2)$. We have that $R^*f_1\in L^2(\R^2)$ and using Lemma~\ref{pseudo:frame} and Remark~\ref{rem: Gabor}, we get that
\begin{equation}\label{pseudo6}
\begin{aligned}
\langle f_1, R f_2\rangle&= \langle R^*f_1,  f_2\rangle\\
&= \sum_{\beta,\beta^*\in\Z^2} \langle R^*f_1,\Psi_{\beta,\beta^*}\rangle \, \langle \Psi_{\beta,\beta^*},f_2\rangle =\sum_{\beta,\beta^*\in\Z^2} \langle f_1,R\,\Psi_{\beta,\beta^*}\rangle \, \langle \Psi_{\beta,\beta^*},f_2\rangle\\
&=\sum_{\alpha,\beta,\alpha^*,\beta^*\in\Z^2} \langle \Psi_{\alpha,\alpha^*}, R\Psi_{\beta,\beta^*}\rangle \int_{\R^4} \di x\, \di y\, \Psi_{\alpha,\alpha^*}(x)\,  \overline{\Psi_{\beta,\beta^*}(y)} f_2(y) \overline{f_1(x)}\, \\
&=\sum_{\alpha,\beta,\alpha^*,\beta^*\in\Z^2} \langle \Psi_{\alpha,\alpha^*}, R\Psi_{\beta,\beta^*}\rangle  \times 
\\ 
&\quad \times \int_{\R^4} \di u\, \di v\, \Psi_{\alpha,\alpha^*}\left(u+\frac{v}{2}\right)\,  \overline{\Psi_{\beta,\beta^*}\left(u-\frac{v}{2}\right)} f_2\left(u-\frac{v}{2}\right) \overline{f_1\left(u+\frac{v}{2}\right)},\,
\end{aligned}
\end{equation}
where in the last equality we made a linear change of coordinates $u=(x+y)/2$ and $v=x-y$ for later convenience. The distributional kernel of $R$, in the variables $(u,v)$, is 
\begin{equation*}
\begin{aligned}
R(u,v) = \sum_{\alpha,\beta,\alpha^*,\beta^*\in\Z^2} \langle \Psi_{\alpha,\alpha^*}, R\Psi_{\beta,\beta^*}\rangle\, \Psi_{\alpha,\alpha^*}\left(u+\frac{v}{2}\right)\,  \overline{\Psi_{\beta,\beta^*}\left(u-\frac{v}{2}\right)}.
\end{aligned}
\end{equation*}
We want to define an approximate magnetic symbol from the above integral kernel. To do that, we only keep finitely many terms in the series above. Moreover, it is convenient to multiply the sum by $\eu^{\iu b\phi(u,v)} = \eu^{-\iu b\phi(x,y)}$ and to take a partial inverse Fourier transform with respect to $v$. We denote by $\sum^N$ the truncated series where all indices are, in absolute value, less than $N>1$. Then the approximate symbol is:
\begin{equation}\label{pseudo7}
\begin{aligned}
r_N(u,\xi)\coloneq\sum^N\langle \Psi_{\alpha,\alpha^*}, R\Psi_{\beta,\beta^*}\rangle  \int_{\R^2} \di v\, \eu^{\iu b\phi(u,v)}\, \eu^{-\iu\xi\cdot v} \, \Psi_{\alpha,\alpha^*}\left(u+\frac{v}{2}\right)\,  \overline{\Psi_{\beta,\beta^*}\left(u-\frac{v}{2}\right)}.
\end{aligned}
\end{equation}
The main idea is to rearrange this expression so that it becomes clear that we can take $N\to \infty$ and $r_N$ converges uniformly on compact sets, together with all its derivatives. This limit will be our symbol.

From \eqref{pseudo4} we see that we have any polynomial decay we want in the differences $\alpha-\beta$ and $\alpha^*-\beta^*$, so we only need to control the series in e.g. the sums $\alpha+\beta$ and $\alpha^*+\beta^*$. Let us work on a single term:
\begin{equation*}
\begin{aligned}
    \int_{\R^2} \di v\, \eu^{\iu b\phi(u,v)}\, &\eu^{-\iu\xi\cdot v} \, \Psi_{\alpha,\alpha^*}\left(u+\frac{v}{2}\right)\,  \overline{\Psi_{\beta,\beta^*}\left(u-\frac{v}{2}\right)}\\
    &=\frac{\eu^{\iu(\beta\cdot\beta^*-\alpha\cdot\alpha^*)}\eu^{\iu u\cdot (b(\alpha-\beta)^\perp+\alpha^*-\beta^*)}}{(2\pi)^2}\times \\
    &\quad\times\int_{\R^2}\di v\,\eu^{\iu v\cdot(-\xi-bu^\perp+(b\alpha^\perp+b\beta^\perp+\alpha^*+\beta^*)/2)}g\left(u+\frac{v}{2} -\alpha\right)g\left(u-\frac{v}{2} -\beta\right)
\end{aligned}
\end{equation*}
Using the support properties of the function $g$ (see \eqref{eq:prop g}), we find that the integrand has support in $\alpha-\beta+(-2,2)^2$. By translating the integrand by $\beta-\alpha$ we get:
\begin{align}\label{pseudo8}
    &\frac{\eu^{\iu(\beta\cdot\beta^*-\alpha\cdot\alpha^*)}\eu^{\iu u\cdot(b(\alpha-\beta)^\perp+\alpha^*-\beta^*)}}{(2\pi)^2}\times\\
    &\times\int_{(-2,2)^2}\di w\,\eu^{\iu (w+\alpha-\beta)\cdot({-}\xi-bu^\perp+(b\alpha^\perp+b\beta^\perp+\alpha^*+\beta^*)/2)}g\left(u+\frac{w-\alpha-\beta}{2}\right)g\left(u-\frac{w+\alpha+\beta}{2}\right)\nonumber
\end{align}
Using again \eqref{eq:prop g}, we notice that if $u\notin (\alpha+\beta)/2+(-1,1)^2$, then the above is zero. Hence restricting $u$ to a compact set means that only finitely many values of the sum $\alpha+\beta$ gives non-zero terms in \eqref{pseudo7}, giving us control of the sum in $\alpha+\beta$.

Next using partial integration in $w$ we can get factors of $\langle\xi+(\alpha^*+\beta^*)/2\rangle^{-n}$ for arbitrary~$n$ at the expense of factors of $\langle u-(\alpha+\beta)/2\rangle^n$. The factors of $\langle\xi+(\alpha^*+\beta^*)/2\rangle^{-n}$ gives us control of $\alpha^*+\beta^*$ if $\xi$ is restricted to a compact set. The other factors are bounded since we already found that $u- (\alpha+\beta)/2\notin(-1,1)^2$ implies that \eqref{pseudo8} is equal to zero.

We conclude that $r_N$ converges uniformly on compact sets, leaving us only to study the derivatives of $r_N$. Taking derivatives in $u$ and $\xi$ in \eqref{pseudo8} only results in factors growing like $\langle w\rangle^n\langle\alpha-\beta\rangle^n\langle\alpha^*-\beta^*\rangle^n$ for some natural number $n$, along with derivatives of $g$. But this poses no problems in terms of convergence since $g$ is compactly supported and smooth, the integral in $w$ is over a compact set, and \eqref{pseudo4} provides us with arbitrary off-diagonal decay in $\alpha-\beta$ and $\alpha^*-\beta^*$.
\end{proof}

\begin{lemma}\label{pseudo:translation_invariant}
    Assume that the operator $R$ in \eqref{pseudo5} commutes with all magnetic translations. Then the symbol $r$ does not depend on $u$ and $R={\rm Op}^A(r)$ with $r\in S^0_0(\R^2)$. 
\end{lemma}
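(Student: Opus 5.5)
Here $R$ satisfies the hypotheses of Lemma~\ref{pseudo:matrix}, so \eqref{pseudo5} holds with some $r\in BC^\infty(\R^4)$. The plan has two steps. First, show that magnetic-translation invariance makes the distributional kernel of $R$ a Peierls phase times a function of $x-y$. Second, read off the symbol as a partial Fourier transform in $v=x-y$ alone, which then cannot depend on $u$.

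For the first step, I write the distributional kernel of $R$ using \eqref{pseudo5}:
\begin{equation*}
K_R(x,y)=\eu^{\iu b\phi(x,y)}\,\check r\left(\tfrac{x+y}{2},x-y\right),
\end{equation*}
where $\check r(u,v)\coloneq(2\pi)^{-2}\int \eu^{\iu \xi\cdot v} r(u,\xi)\,\di\xi$ is a tempered distribution in $v$, depending smoothly on $u$. Next I compute $\langle f_1, \tau_{b,z}^*R\tau_{b,z}f_2\rangle$ for Schwartz functions $f_1,f_2$. Changing variables $x\to x+z$, $y\to y+z$ gives the kernel
\begin{equation*}
\eu^{-\iu b\phi(x+z,z)}\,\eu^{\iu b\phi(x+z,y+z)}\,\eu^{\iu b\phi(y+z,z)}\,\check r\left(\tfrac{x+y}{2}+z,x-y\right).
\end{equation*}
Since $\phi$ is bilinear and antisymmetric, the three phases combine to $\eu^{\iu b\phi(x,y)}$; this is the computation behind Proposition~\ref{prop:MagCommuteKernel}, now done at the level of distributions. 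The commutation $\tau_{b,z}^*R\tau_{b,z}=R$ on $\mathscr{S}$ therefore yields
\begin{equation*}
\eu^{\iu b\phi(x,y)}\big(\check r(u+z,v)-\check r(u,v)\big)=0
\end{equation*}
as a distribution, for every $z\in\R^2$.

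For the second step, I multiply by the smooth, polynomially bounded function $\eu^{-\iu b\phi(x,y)}$ and take the partial Fourier transform in $v$. This gives $r(u+z,\xi)=r(u,\xi)$ for all $u,z,\xi$; the identity holds pointwise because $r$ is continuous. Hence $r(u,\xi)=r(0,\xi)\eqqcolon r(\xi)$. Plugging this back into \eqref{pseudo5} reproduces exactly \eqref{pseudo2}, so $R={\rm Op}^A(r)$. Finally, $r\in BC^\infty$ means every derivative $\partial^\alpha r$ is bounded, which is precisely the bound \eqref{pseudo1} with $m=0$, so $r\in S^0_0(\R^2)$.

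The main obstacle is justifying the symbol manipulation rigorously at the distributional level, i.e.\ making sure that $R\mapsto r$ is well defined, so that equal operators have equal symbols. I would handle it directly. Since $r$ is bounded, for $f_1,f_2\in\mathscr S$ the integral in \eqref{pseudo5} is an oscillatory integral that defines a tempered distribution kernel. The map $r\mapsto$ kernel is injective: it is multiplication by a nonvanishing smooth phase, composed with a linear change of variables and a partial Fourier transform, all of which are bijections on $\mathscr S'$. Uniqueness of the symbol then gives the translation identity $r(u+z,\xi)=r(u,\xi)$. A more hands-on alternative is to verify the covariance $r_N$-wise in \eqref{pseudo7}, but the uniqueness argument is cleaner.
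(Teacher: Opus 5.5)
Your proof is correct and is essentially the paper's argument: the same cancellation of the three Peierls phases after the shift $x\mapsto x+z$, $y\mapsto y+z$, followed by injectivity of the map built from the phase, the change of variables to $(u,v)$, and the partial Fourier transform in $v$. The only difference is that you argue dually, on distribution kernels. There you implicitly use that $\mathscr{S}(\R^2)\otimes\mathscr{S}(\R^2)$ is dense in $\mathscr{S}(\R^4)$, whereas the paper obtains that density explicitly from the Gabor frame expansion.
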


\begin{proof}
Recall the magnetic translations $(\tau_z f)(x)=\eu^{\iu b\phi(x,z)}f(x-z)$ with $\phi$ defined in \eqref{pierls} and such that $\tau_z^*=\tau_{-z}$. Let $f_{1},f_2\in\mathscr{S}(\R^2)$, then for all $z\in \R^2$ we have that
\[\langle  f_1,R f_2 \rangle = \langle \tau_z f_1,R\tau_z f_2\rangle.\]
By writing the two quantities above via \eqref{pseudo5}, making the change of variables $x-z\mapsto x$ and $y-z\mapsto y$ in the expression for $\langle \tau_z f_1,R\tau_z f_2\rangle$ and using the identities
\[ -\phi(x+z,z)+\phi(x+z,y+z) +\phi(y+z,z)=\phi(x,y)=\phi(v,u),\]
where $u=(x+y)/2$ and $v=x-y$, we get that for all $z\in \R^2$:
\begin{equation}\label{pseudo9}
0=\int_{\R^2}\di\xi\, \int_{\R^4} \di u\, \di v\, \eu^{\iu\xi\cdot v} \, \eu^{\iu b\phi(v,u)}\, \Big (r\big ( u+z, \xi\big ) -r\big ( u, \xi\big )\Big ) \overline{f_1 \left(u+\frac{v}{2}\right)}\, f_2\left(u-\frac{v}{2}\right).
\end{equation}
 Now for any Schwartz function $F$ in $\R^4$, and using the frame from Lemma~\ref{pseudo:frame}, we may write 
\[F(x,y)=\sum_{\alpha,\alpha^*\in\Z^2}\langle F(x,\cdot),\Psi_{\alpha,\alpha^*}\rangle \,\Psi_{\alpha,\alpha^*}(y),\quad x,y\in \R^2, \]
where the series converges in all Schwartz seminorms. Using this in \eqref{pseudo9} and taking $f_1(x)=\langle F(x,\cdot),\Psi_{\alpha,\alpha^*}\rangle$ and $f_2(y)=\Psi_{\alpha,\alpha^*}(y)$ we have
\[
0=\int_{\R^2}\di\xi\, \int_{\R^4} \di u\, \di v\, \eu^{\iu\xi\cdot v} \, \eu^{\iu\phi(v,u)}\, \Big (r\big ( u+z, \xi\big ) -r\big ( u, \xi\big )\Big ) F\left(u+\frac{v}{2},u-\frac{v}{2}\right)
\]
for all Schwartz functions $F$ in $\R^4$. The map 
\[\mathscr{S}(\R^4)\ni F(u,v)\mapsto G(u,\xi)\coloneq\int_{\R^2}\di v\, \eu^{i\xi\cdot v}\eu^{\iu b\phi(v,u)} F\left(u+\frac{v}{2},u-\frac{v}{2}\right)\in \mathscr{S}(\R^4)\]
is invertible (being the composition of a linear change of coordinates, multiplication by a pointwise invertible, polynomially bounded, smooth function, and an inverse Fourier transform), and so this leads to: 
\[0=\int_{\R^4} \di u\, \di\xi\, \Big (r\big ( u+z, \xi\big ) -r\big ( u, \xi\big )\Big ) \, G(u,\xi) \] 
for every Schwartz function $G$ in $\R^4$. This implies that $r$ does not depend on $u$ and thus, $R= {\rm Op}^A(r)$ and $r\in S_0^0(\mathbb{R}^2)$ from Lemma~\ref{pseudo:matrix}.
\end{proof}

Now we formulate the ``genuine" Beals commutator criterion for the special case in which we deal with constant magnetic fields and operators commuting with all magnetic translations (for another version, see Theorem~1.2 \cite{CorneanHelfferPurice2018}). We denote by $X_j$ multiplication with $x_j$, and let $H_b$ be the operator as in \eqref{eq: def Hb}. We define  
\begin{equation}\label{eq: def Pij op}
    \Pi_j\coloneq\frac{1}{2}\,\iu[H_b,X_j]=(-\iu\partial_{x_j}-bA_j(x)) \qquad j\in\{1,2\}
\end{equation}
with $A$ as in \eqref{eq: def Hb}, i.e., $A(x)=(1/2)(-x_2,x_1)$. They are operators which leave $\mathscr{S}(\R^2)$ invariant. Denote by $T_j$ with $j\geq 1$ any element of the set $\{X_1,X_2,\Pi_1,\Pi_2\}$. 
\begin{corollary}\label{corollary-beals}
    Let $R\colon L^2(\R^2)\rightarrow L^2(\R^2)$ be a bounded operator which commutes with all magnetic translations, and let us assume that all the multiple commutators
    \[[T_1,[T_2,\dots[T_n,R]\dots]],\quad n\geq 1,\]
    which initially are seen as tempered distributions, can be extended to bounded operators on $L^2(\R^2)$. Then there exists $r\in S_0^0(\R^2)$ such that $R={\rm Op}^A(r)$.
\end{corollary}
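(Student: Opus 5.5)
The plan is to reduce Corollary~\ref{corollary-beals} to Lemma~\ref{pseudo:matrix} and Lemma~\ref{pseudo:translation_invariant}. Since $R$ is bounded, both $R$ and $R^*$ map $\mathscr{S}(\R^2)$ into $L^2(\R^2)$. Once the off-diagonal decay \eqref{pseudo4} of the Gabor matrix elements $\langle \Psi_{\alpha,\alpha^*}, R\Psi_{\beta,\beta^*}\rangle$ is established, Lemma~\ref{pseudo:matrix} gives a symbol $r\in BC^\infty(\R^4)$ representing $R$ as in \eqref{pseudo5}. Lemma~\ref{pseudo:translation_invariant} then shows that $r$ is independent of $u$ and that $R={\rm Op}^A(r)$ with $r\in S^0_0(\R^2)$. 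So the whole task is to prove \eqref{pseudo4} from the boundedness of the multiple commutators.

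For the decay in $\alpha-\beta$, I would use the commutators with $X_j$. Because $\mathrm{supp}\,g\subset(-1,1)^2$, we have
\[ (X_j-\alpha_j)\Psi_{\alpha,\alpha^*}=\tilde\Psi^{(j)}_{\alpha,\alpha^*}, \]
where $\tilde\Psi^{(j)}_{\alpha,\alpha^*}$ is a frame-type function built from $g_j(y)=y_j\,g(y)$ instead of $g$. Its $L^2$ norm is bounded uniformly in $(\alpha,\alpha^*)$. Writing
\[ (\alpha_j-\beta_j)\langle \Psi_{\alpha,\alpha^*},R\Psi_{\beta,\beta^*}\rangle=\langle \Psi_{\alpha,\alpha^*},[R,X_j]\Psi_{\beta,\beta^*}\rangle+\text{terms with $(X_j-\alpha_j)$ or $(X_j-\beta_j)$ moved onto the frame vectors} \]
and iterating, I get
\[ (\alpha-\beta)^k\langle\Psi,R\Psi\rangle=\sum \langle \tilde\Psi,\mathrm{ad}_X^{m}(R)\tilde\Psi\rangle. \]
Each term on the right is bounded by $\|\mathrm{ad}_X^m R\|$ times the uniform norms of the modified frame vectors.

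For the decay in $\alpha^*-\beta^*$, I would use $\Pi_j$. A direct computation shows
\[ \Pi\Psi_{\alpha,\alpha^*}=\bigl(\alpha^*+b(\text{linear in }(y-\alpha))\bigr)\Psi_{\alpha,\alpha^*}-\iu\,(\text{derivative of }g)\text{-type term}. \]
This works because the Peierls phase $\eu^{\iu b\phi(y,\alpha)}$ turns $-\iu\partial-bA(y)$ into $-\iu\partial-bA(y-\alpha)$ acting on $g(\cdot-\alpha)\eu^{\iu\alpha^*\cdot(\cdot-\alpha)}$; here one uses $A(y)-A(\alpha)=A(y-\alpha)$ and the gradient of the phase. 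Hence $(\Pi_j-\alpha^*_j)\Psi_{\alpha,\alpha^*}$ is again a frame-type vector with a modified, compactly supported smooth window, uniformly bounded in norm. The same iteration therefore gives
\[ (\alpha^*-\beta^*)^k\langle\Psi,R\Psi\rangle = \text{finite sum of terms } \langle\tilde\Psi,\mathrm{ad}_{\Pi}^{m}\mathrm{ad}_X^{l}(R)\tilde\Psi\rangle . \]
Combining the two arguments, and using the hypothesis that mixed commutators of any order are bounded, yields \eqref{pseudo4} for every $n$.

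The main obstacle is the bookkeeping for $\Pi$. The multiplier $\alpha^*+b\,(\cdot-\alpha)$-linear term is not exactly a scalar times the frame vector, so each application of $\Pi_j-\alpha_j^*$ changes the window $g$ into a new window multiplied by polynomials in $y-\alpha$ and derivatives of $g$. I would handle this by working from the start with a general family $\Psi^{h}_{\alpha,\alpha^*}$ indexed by windows $h\in C_0^\infty((-1,1)^2)$ and proving that this family is closed under the operations $X_j-\alpha_j$ and $\Pi_j-\alpha_j^*$, with norms uniform in the indices. The statement to prove is then
\[ |\langle \Psi^{h_1}_{\alpha,\alpha^*},R\Psi^{h_2}_{\beta,\beta^*}\rangle|\le C_{h_1,h_2,n}\langle\alpha-\beta\rangle^{-n}\langle\alpha^*-\beta^*\rangle^{-n}, \]
proved by induction on $n$. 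Commutators are initially only tempered distributions, so each identity should first be checked on Schwartz frame vectors and only then replaced by the bounded extension.
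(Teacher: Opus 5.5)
Your proposal is correct and follows essentially the same route as the paper. You reduce to Lemma~\ref{pseudo:matrix} and Lemma~\ref{pseudo:translation_invariant}, and you obtain \eqref{pseudo4} by trading factors of $\alpha_j-\beta_j$ and $\alpha_j^*-\beta_j^*$ for commutators with $X_j$ and $\Pi_j$ (using the Peierls-phase identity for $\Pi_j$), with remainder terms controlled by the compact support of the window. Your family of windows $h\in C_0^\infty((-1,1)^2)$, closed under $X_j-\alpha_j$ and $\Pi_j-\alpha_j^*$, together with mixed commutators for the product decay, simply makes explicit the iteration the paper summarizes as ``repeating the argument.''
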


\begin{proof}
The idea is to show that we may apply Lemma~\ref{pseudo:matrix} in order to construct some $r(u,\xi)$ as in \eqref{pseudo5}; then the invariance to magnetic translations and Lemma~\ref{pseudo:translation_invariant} would show that $r$ only depends on $\xi$. Thus the only thing we need to check is that \eqref{pseudo4} is satisfied. We write
\begin{align*}
    (\alpha_j-\beta_j) \langle \Psi_{\alpha,\alpha^*}, R\Psi_{\beta,\beta^*}\rangle &=\langle X_j\, \Psi_{\alpha,\alpha^*}, R\Psi_{\beta,\beta^*}\rangle + \langle (\alpha_j \mathrm{Id} - X_j)\, \Psi_{\alpha,\alpha^*}, R\Psi_{\beta,\beta^*}\rangle
    \\
    &\quad  - \langle  \Psi_{\alpha,\alpha^*}, R\, X_j\Psi_{\beta,\beta^*}\rangle - \langle \Psi_{\alpha,\alpha^*}, R(\beta_j \mathrm{Id} - X_j)\Psi_{\beta,\beta^*}\rangle  \\
    &=\langle\, \Psi_{\alpha,\alpha^*}, [X_j,R]\Psi_{\beta,\beta^*}\rangle  + \langle (\alpha_j \mathrm{Id} - X_j)\, \Psi_{\alpha,\alpha^*}, R\Psi_{\beta,\beta^*}\rangle
    \\
    &\quad- \langle \Psi_{\alpha,\alpha^*}, R(\beta_j \mathrm{Id} - X_j)\Psi_{\beta,\beta^*}\rangle ,
\end{align*}
Since the function $g$ defining the Gabor frame is compactly supported, the last two terms above, i.e., $\langle (\alpha_j \mathrm{Id} - X_j)\, \Psi_{\alpha,\alpha^*}, R\Psi_{\beta,\beta^*}\rangle - \langle \Psi_{\alpha,\alpha^*}, R(\beta_j \mathrm{Id} - X_j)\Psi_{\beta,\beta^*}\rangle$ are bounded using the compact support of $(x_j-\gamma_j)g(x-\gamma)$. The first term instead, $\langle\, \Psi_{\alpha,\alpha^*}, [X_j,R]\Psi_{\beta,\beta^*}\rangle$, is bounded because the commutator  $[X_j,R]$ can be extended to a bounded operator by assumptions. Hence $(\alpha_j-\beta_j) \langle \Psi_{\alpha,\alpha^*}, R\Psi_{\beta,\beta^*}\rangle$ is uniformly bounded. By repeating the argument we obtain that $(\alpha_j-\beta_j)^n \langle \Psi_{\alpha,\alpha^*}, R\Psi_{\beta,\beta^*}\rangle$ is uniformly bounded for all $n\geq 1$.

Using that $(-\iu\partial_{x_j}-bA_j(x))\eu^{\iu b\phi(x,\gamma)}=\eu^{\iu b\phi(x,\gamma)}\big (-\iu\partial_{x_j}-bA_j(x-\gamma)\big )$, one can similarly show that
\[(\alpha_j^*-\beta_j^*) \langle \Psi_{\alpha,\alpha^*}, R\Psi_{\beta,\beta^*}\rangle =\langle \Pi_j\, \Psi_{\alpha,\alpha^*}, R\Psi_{\beta,\beta^*}\rangle -\langle  \Psi_{\alpha,\alpha^*}, R\, \Pi_j\Psi_{\beta,\beta^*}\rangle +{\rm bounded},\]
uniformly in the indices $\alpha,\beta,\alpha^*,\beta^*$. Since $[\Pi_j,R]$  extends to a bounded operator, $(\alpha_j^*-\beta_j^*) \langle \Psi_{\alpha,\alpha^*}, R\Psi_{\beta,\beta^*}\rangle$ is uniformly bounded. By repeating the argument we obtain that $(\alpha_j^*-\beta_j^*)^n \langle \Psi_{\alpha,\alpha^*}, R\Psi_{\beta,\beta^*}\rangle$ is uniformly bounded for all $n\geq 1$.
\end{proof}

\subsection{Criterion for boundedness}

As a criterion for boundedness, we present a frame-based proof of a simplified magnetic Calderón-Vaillancourt theorem (for the general case see Theorem~3.7 \cite{CorneanHelfferPurice2024}).

\begin{lemma}\label{pseudo:cald-vaill}
    For $p\in S_0^0(\R^2)$, the operator ${\rm Op}^A(p)$ can be extended (by continuity in the strong topology) to a bounded operator on $L^2(\R^2)$. 
\end{lemma}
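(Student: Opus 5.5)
We use the Gabor frame of Lemma~\ref{pseudo:frame} and reduce boundedness to a Schur test on the matrix $M_{\alpha\alpha^*,\beta\beta^*}\coloneq\langle \Psi_{\alpha,\alpha^*}, {\rm Op}^A(p)\Psi_{\beta,\beta^*}\rangle$. By Remark~\ref{rem: Gabor}, the analysis map $U\colon f\mapsto (\langle\Psi_{\gamma,\gamma^*},f\rangle)_{\gamma,\gamma^*}$ is an isometry $L^2(\R^2)\to\ell^2(\Z^4)$. For $f_1,f_2\in\mathscr{S}(\R^2)$, inserting the frame expansion of $f_2$ and $f_1$ into \eqref{pseudo2} gives $\langle f_1,{\rm Op}^A(p)f_2\rangle=\langle Uf_1, M\,Uf_2\rangle_{\ell^2}$. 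The interchange of sums and integrals is justified by the $\mathscr{S}$-convergence in Lemma~\ref{pseudo:frame}, since \eqref{pseudo2} is continuous in $f_2\in\mathscr{S}$ for fixed $f_1$. Hence $|\langle f_1,{\rm Op}^A(p)f_2\rangle|\le \|M\|_{\ell^2\to\ell^2}\|f_1\|\|f_2\|$, and the operator extends by density. It therefore suffices to prove the off-diagonal decay
\[
|M_{\alpha\alpha^*,\beta\beta^*}|\le C_n\,\langle\alpha-\beta\rangle^{-n}\langle\alpha^*-\beta^*\rangle^{-n}.
\]
With $n=3$, say, the Schur test then bounds $\|M\|$ by $\sup_{\alpha,\alpha^*}\sum_{\beta,\beta^*}|M|<\infty$, and symmetrically in the other index.

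To get the decay, we compute $M$ explicitly:
\[
M=(2\pi)^{-2}\int \di\xi\int\di x\,\di y\,\overline{\Psi_{\alpha,\alpha^*}(x)}\,\Psi_{\beta,\beta^*}(y)\,\eu^{\iu b\phi(x,y)}\eu^{\iu\xi\cdot(x-y)}p(\xi),
\]
treated as an oscillatory integral. The $\xi$-integral is handled by regularizing with $\langle x-y\rangle^{-2k}(1-\Delta_\xi)^k$. The phases combine, using the identity $\phi(x,\alpha)-\phi(x,y)+\phi(y,\beta)$-type cocycle relations (as in the proof of Lemma~\ref{pseudo:translation_invariant}), into $\eu^{\iu b\phi(x-\alpha,y-\alpha)}$ times a constant phase times $\eu^{\iu b\,\phi(\cdot,\alpha-\beta)}$-type linear terms. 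Since $x\in\alpha+(-1,1)^2$ and $y\in\beta+(-1,1)^2$, the kernel is $\langle x-y\rangle^{-2k}$ times bounded integrands. The effective kernel of ${\rm Op}^A(p)$, namely $\eu^{\iu b\phi(x,y)}\check p(x-y)$ away from the diagonal, decays faster than any power in $|x-y|$ after $2k$ integrations by parts, because $p\in S^0_0$. This yields the factor $\langle\alpha-\beta\rangle^{-n}$.

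For the $\langle\alpha^*-\beta^*\rangle^{-n}$ decay we integrate by parts in $x$ and $y$, against the oscillations $\eu^{-\iu\alpha^*\cdot x}$ and $\eu^{\iu\beta^*\cdot y}$. The cleanest route mirrors the proof of Corollary~\ref{corollary-beals} in reverse. Here ${\rm Op}^A(p)$ commutes with magnetic translations by construction, and its commutators satisfy $[X_j,{\rm Op}^A(p)]={\rm Op}^A(\iu\partial_j p)$ up to sign, together with an analogous formula for $[\Pi_j,{\rm Op}^A(p)]$. The latter should follow from a direct calculation with the kernel in \eqref{pseudo2}: $\Pi_j$ acting on $\eu^{\iu b\phi(x,y)}\eu^{\iu\xi\cdot(x-y)}$ produces $\xi_j$ plus a term linear in $(x-y)$, which is again a derivative in $\xi$. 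Thus every multiple commutator is again ${\rm Op}^A$ of a symbol in $S^0_0$. The identities in the proof of Corollary~\ref{corollary-beals} express $(\alpha_j-\beta_j)^n M$ and $(\alpha^*_j-\beta^*_j)^n M$ as finite sums of matrix elements of such commutators against modified frame functions $(x_j-\gamma_j)^k g(x-\gamma)$ (and derivatives of $g$). It then suffices to know that these matrix elements are uniformly bounded in the indices.

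The main obstacle is this uniform bound. It is circular if we try to use $L^2$-boundedness, so we must prove it directly. For $p\in S^0_0$ and $h_1,h_2\in C_0^\infty((-1,1)^2)$, the matrix element of ${\rm Op}^A(p)$ between the localized functions $\eu^{\iu b\phi(\cdot,\alpha)}h_1(\cdot-\alpha)\eu^{\iu\alpha^*\cdot}$ and $\eu^{\iu b\phi(\cdot,\beta)}h_2(\cdot-\beta)\eu^{\iu\beta^*\cdot}$ is bounded uniformly in all indices. We would show this by writing the matrix element as $\int \di\xi\, p(\xi)\,\overline{\hat G_1(\xi)}\hat G_2(\xi)$, after absorbing the Peierls phase $\eu^{\iu b\phi(x,y)}$. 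For $|x-y|\lesssim 1$ this phase is a smooth bounded factor, and otherwise one uses the regularization $\langle x-y\rangle^{-2k}(1-\Delta_\xi)^k$ above. Here $G_1$ and $G_2$ are compactly supported with $L^2$-norms controlled by those of $h_1,h_2$ and finitely many derivatives. Cauchy--Schwarz with $\|p\|_\infty$ then gives the bound, uniformly in the modulations, since these only shift $\hat G_i$.
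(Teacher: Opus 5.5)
Your reduction is the same as the paper's: Parseval frame, Schur test on $\ell^2(\Z^4)$, and off-diagonal decay \eqref{pseudo4}. Replacing the paper's integration by parts in $u$ with the commutator identities of Corollary~\ref{corollary-beals} is a legitimate idea. However, it moves the whole Calder\'on--Vaillancourt difficulty into the uniform bound on localized matrix elements of ${\rm Op}^A(q)$, $q\in S^0_0(\R^2)$, and the argument you give for that bound does not work.

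Put $x=\alpha+s$, $y=\beta+t$ with $s,t\in(-1,1)^2$. The total phase is then a constant, plus terms linear in $s$ and in $t$ (harmless modulations), plus $b\,\phi(s,t)$. That last term is bilinear, so it is not a product of a function of $x$ and a function of $y$. It therefore cannot be ``absorbed'' to give $\int p\,\overline{\widehat{G_1}}\,\widehat{G_2}\,\di\xi$.

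That it is smooth and bounded on the supports does not help. The $\xi$-integration runs over all of $\R^2$, so a bounded integrand does not control it. The regularization $\langle x-y\rangle^{-2k}(1-\Delta_\xi)^k$ only adds another non-separable factor and does not supply the needed $\xi$-decay. That decay has to come from integrating by parts in $x$ or $y$; this is the role of the paper's integration by parts in $v$, which produces $\langle\zeta\rangle^{-3}$ with $\zeta=\xi-(\alpha^*+\beta^*)/2$. The same objection applies to your paragraph on $\langle\alpha-\beta\rangle^{-n}$.

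Note also that for $p\in S^0_0$ derivatives gain no decay, so $\check p$ need not be a function decaying off the diagonal. For example, $p(\xi)=\eu^{\iu a\cdot\xi}$ gives $\check p=\delta_{-a}$.

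The step can be repaired inside your scheme:
\begin{enumerate}[label={(\arabic*)}]
\item Expand $\eu^{\iu b\phi(s,t)}=\sum_{n\geq0}\frac{(\iu b/2)^n}{n!}(s_2t_1-s_1t_2)^n$. Here $(s_2t_1-s_1t_2)^n=\sum_{k=0}^n\binom{n}{k}(-1)^{n-k}\,(s_2^k s_1^{n-k})(t_1^k t_2^{n-k})$ is a sum of separable products whose factors are bounded by $1$ on the supports.
\item Each term equals $\int p(\xi)\,\eu^{\iu\xi\cdot(\alpha-\beta)}\,\overline{\widehat{s^aA}(\xi)}\,\widehat{t^cB}(\xi)\,\di\xi$, where $A,B$ are $h_1,h_2$ times linear phases. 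By Cauchy--Schwarz and Plancherel it is bounded by a constant times $\|p\|_\infty\|h_1\|_2\|h_2\|_2$.
\item Summing the series gives $C\,\eu^{|b|}\|p\|_\infty\|h_1\|_2\|h_2\|_2$, uniformly in $\alpha,\beta,\alpha^*,\beta^*$. The same estimate justifies exchanging sum and integral.
\end{enumerate}

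With this lemma, use the commutator formulas listed in the proof of Lemma~\ref{pseudo:resolvent2} (with $w$ replaced by $p$) and the modified bumps $(x_j-\gamma_j)g$, $\partial_j g$, $A_j g$. These give \eqref{pseudo4} in both $\alpha-\beta$ and $\alpha^*-\beta^*$, so you can drop the $\langle x-y\rangle^{-2k}$ paragraph entirely.

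The repaired proof uses only sup norms of finitely many derivatives of $p$ and makes the Beals mechanism explicit. The paper's route avoids verifying commutator identities, at the cost of more hands-on oscillatory-integral bookkeeping.
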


\begin{proof}
    The adjoint of ${\rm Op}^A(p)$ equals ${\rm Op}^A(\overline{p})$, which sends $\mathscr{S}(\R^2)$ into $L^2(\R^2)$. Hence if $f_1$ and $f_2$ are Schwartz functions we may use the frame from Lemma~\ref{pseudo:frame} and \eqref{pseudo6} {to write}:
    \begin{equation*}
    \begin{aligned}
    \langle f_1, {\rm Op}^A(p) f_2\rangle
    &=\sum_{\alpha,\beta,\alpha^*,\beta^*\in\Z^2} \langle f_1,\Psi_{\alpha,\alpha^*}\rangle\, \langle \Psi_{\alpha,\alpha^*}, {\rm Op}^A(p)\Psi_{\beta,\beta^*}\rangle\, \langle\Psi_{\beta,\beta^*} f_2\rangle\,.
    \end{aligned}
    \end{equation*}
    If we can prove that the matrix $\langle \Psi_{\alpha,\alpha^*}, {\rm Op}^A(p)\Psi_{\beta,\beta^*}\rangle$ is ``sufficiently nice'', a use of the Schur test in $\ell^2(\Z^4)$ and the fact that the frame coefficients obey a Parseval identity, would end the proof. ``Sufficiently nice'' will in this instance mean sufficient off-diagonal decay, i.e. that condition \eqref{pseudo4} holds.
    
    We compute
    \begin{align*} 
    &\langle \Psi_{\alpha,\alpha^*}, {\rm Op}^A(p)\, \Psi_{\beta,\beta^*}\rangle \,  \eu^{\iu \beta\cdot \beta^*-\iu \alpha\cdot \alpha^*} \\
    &= (2\pi)^{-2}\int_{\R^2} \di\xi \int_{\R^4} \di x\, \di y\, \eu^{\iu b\phi(\alpha,x)}g(x-\alpha) \eu^{-\iu \alpha^* \cdot x} \, \eu^{\iu b\phi(x,y)}\, \eu^{\iu\xi\cdot (x-y)} \, p(\xi)\, \eu^{\iu b\phi(y,\beta)}g(y-\beta) \eu^{\iu \beta^* \cdot y}.
    \end{align*}
    Making the change of variables:
    \begin{align*}
        \xi- (\alpha^*+\beta^*)/2&\mapsto \zeta\\
        (x+y-\alpha-\beta)/2&\mapsto u\\
        x-y+\beta-\alpha&\mapsto v
    \end{align*}
    one gets
    \begin{align*} 
    &\langle \Psi_{\alpha,\alpha^*}, {\rm Op}^A(p)\, \Psi_{\beta,\beta^*}\rangle \, \eu^{\iu \beta\cdot \beta^*-\iu \alpha\cdot \alpha^*} \\
    &= (2\pi)^{-2}\eu^{\iu(b\alpha\cdot \beta^\perp+\alpha\cdot\beta^*/2-\beta\cdot\alpha^*/2)}\int_{\R^2} \di\zeta \int_{\R^4} \di u\, \di v\, \eu^{\iu[\zeta\cdot(v+\alpha-\beta)-u\cdot v^\perp+u\cdot(2 b\, (\beta-\alpha)^\perp+\beta^*-\alpha^*)]} \\
    &\quad \quad \times p(\zeta +(\alpha^*+\beta^*)/2)\,g(u+v/2) \,g(u-v/2) .
    \end{align*}
    The function $p(\zeta +(\alpha^*+\beta^*)/2)\,g(u+v/2) \,g(u-v/2) $ has compact support in $u$ and $v$ and has continuous and bounded derivatives in all variables. Using partial integration with respect to $v$, and remembering that the integration in $u$ and $v$ takes place on a compact set, we can create a factor dominated by $\langle\zeta\rangle^{-3}$ which decays even faster if we differentiate it again with respect to $\zeta$. By performing repeated partial integration with respect to $\zeta$, we create an arbitrary decay in  $\langle\alpha-\beta\rangle$, at the price of powers in $v$ which remain bounded on the domain of integration. Finally, partial integration with respect to $u$ gives us an arbitrary decay in $\langle\alpha^*-\beta^*\rangle$ at the price of grow in $|b|\, \langle\alpha-\beta\rangle,$ but this growth can be absorbed by extra partial integration in $\zeta$. Hence \eqref{pseudo4} is satisfied.
\end{proof}

\subsection{Composition of magnetic pseudo-differential operators}

When it comes to composition of magnetic pseudo-differential operators one works with the magnetic Moyal product, see Section~V \cite{Mantoiu2004} and Corollary~3.6 \cite{CorneanHelfferPurice2024}, of which we present a specific case.

\begin{lemma}\label{pseudo:moyal}
If $p\in S_0^{-m}(\R^2)$ and $q\in S_0^{-n}(\R^2)$ with $m,n\geq 0$, then there exists $t\in S_0^{-n-m}(\R^2)$ such that as bounded operators on $L^2(\R^2)$ we have: \[{\rm Op}^A(t)={\rm Op}^A(p)\, {\rm Op}^A(q).\] 
\end{lemma}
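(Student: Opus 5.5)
The natural route is to compute the composition directly at the level of integral kernels and read off the symbol, using that both symbols depend only on $\xi$. For Schwartz symbols, \eqref{pseudo2} shows that ${\rm Op}^A(p)$ has kernel $\eu^{\iu b\phi(x,y)}\check p(x-y)$, with $\check p(z)=(2\pi)^{-2}\int \eu^{\iu\xi\cdot z}p(\xi)\,\di\xi$. Composing and using the cocycle identity $\phi(x,z)+\phi(z,y)=\phi(x,y)+\tfrac12 (x-z)\wedge(z-y)$ (up to the sign convention of \eqref{pierls}), the kernel of ${\rm Op}^A(p){\rm Op}^A(q)$ becomes $\eu^{\iu b\phi(x,y)}K(x-y)$, where $K$ is a twisted convolution: $K(w)=\int \eu^{\iu b\, \sigma(w-s,s)}\check p(w-s)\check q(s)\,\di s$ with $\sigma$ a bilinear antisymmetric form. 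Fourier transforming $K$ gives the magnetic Moyal product in the momentum-only case:
\[ t(\xi)=\frac{1}{\pi^2 b^2}\int_{\R^4}\eu^{\frac{2\iu}{b}(\eta-\xi)\wedge(\zeta-\xi)}\, p(\eta)\, q(\zeta)\,\di\eta\,\di\zeta \]
(constants and sign to be fixed), an oscillatory integral which I will take as the definition of $t$ for general symbols.

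Next I will show that $t\in S_0^{-n-m}$. Change variables $\eta=\xi+a$, $\zeta=\xi+c$ to get $t(\xi)=c_b\int \eu^{\frac{2\iu}{b}a\wedge c}p(\xi+a)q(\xi+c)\,\di a\,\di c$. Derivatives in $\xi$ fall on $p$ and $q$ and keep the same form, so it suffices to bound this quantity by $\langle\xi\rangle^{-m-n}$. I will give meaning to the integral by inserting $\langle a\rangle^{-2k}(1-\tfrac{b^2}{4}\Delta_c)^k$ and $\langle c\rangle^{-2k}(1-\tfrac{b^2}{4}\Delta_a)^k$ and integrating by parts. This is justified because $L_c=\langle a\rangle^{-2}(1-\frac{b^2}{4}\Delta_c)$ leaves the phase invariant (here $\nabla_c (a\wedge c)$ is $a$ rotated). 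It produces an absolutely convergent integrand bounded by $\langle a\rangle^{-2k}\langle c\rangle^{-2k}\langle\xi+a\rangle^{-m}\langle\xi+c\rangle^{-n}$. Peetre's inequality $\langle\xi+a\rangle^{-m}\le 2^{m/2}\langle\xi\rangle^{-m}\langle a\rangle^{m}$ then gives the decay $\langle\xi\rangle^{-m-n}$ once $k$ is large. The same argument, with cutoffs $\chi(\epsilon a)\chi(\epsilon c)$, shows that the definition is independent of the regularization and agrees with the absolutely convergent formula for Schwartz symbols.

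Finally, I will identify operators. For $p,q\in\mathscr S$ the kernel computation is rigorous. For general symbols, approximate by $p_\epsilon(\xi)=\chi(\epsilon\xi)p(\xi)$ and $q_\epsilon$ similarly. These are bounded in $S_0^{-m}$ and $S_0^{-n}$ and converge in $S_0^{0}$-seminorms weighted by $\langle\xi\rangle^{\delta}$. The estimates above give $t_\epsilon\to t$ pointwise with uniform $S_0^{0}$ bounds. Then $\langle f_1,{\rm Op}^A(p_\epsilon){\rm Op}^A(q_\epsilon)f_2\rangle\to\langle f_1,{\rm Op}^A(p){\rm Op}^A(q)f_2\rangle$ strongly: Lemma~\ref{pseudo:cald-vaill} gives uniform boundedness, and on Schwartz functions \eqref{pseudo2} converges by dominated convergence after integrating by parts in $x,y$. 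Likewise $\langle f_1,{\rm Op}^A(t_\epsilon)f_2\rangle\to\langle f_1,{\rm Op}^A(t)f_2\rangle$, and density of $\mathscr S$ completes the proof.

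I expect the main obstacle to be this justification of the oscillatory integral and of the limit exchange, specifically bookkeeping the integrations by parts uniformly in $\epsilon$. The algebraic derivation of the Moyal formula is routine. An alternative would be Corollary~\ref{corollary-beals}: the product commutes with magnetic translations, and since $[X_j,{\rm Op}^A(p)]={\rm Op}^A(\iu\partial_jp)$ and $[\Pi_j,{\rm Op}^A(p)]=0$ up to sign conventions, iterated commutators of the product are bounded. That route yields a symbol in $S_0^0$, but not the decay order $-n-m$, so the explicit formula is still needed.
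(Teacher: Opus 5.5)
Your argument is correct, but it is organized differently from the paper's.

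The paper regularizes with $\eu^{-\epsilon\langle\xi\rangle}$ and keeps the mixed representation $t_\epsilon(\xi)=\int p_\epsilon(\xi+by^\perp)\,\widecheck{q_\epsilon}(y)\,\eu^{-\iu\xi\cdot y}\,\di y$, in which only $q$ is Fourier transformed. It expands $\widecheck{q_\epsilon}$ in the magnetic Gabor frame of Lemma~\ref{pseudo:frame}. Uniform $S^0_0$ bounds then come from rapid decay in $\gamma$ and in $\gamma^*+b\gamma^\perp-\xi$, together with Peetre's inequality. The order $-m-n$ is a separate step: $\xi_j^{m+n}t_\epsilon$ is rewritten, via integration by parts in $y$, as a sum of expressions of the same form built from $S^0_0$ symbols, and one divides back using the conic partition of unity $f_0+f_1+f_2=1$.

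You instead Fourier transform both factors, which produces the $\xi$-independent, non-degenerate phase $\frac{2}{b}\,a\wedge c$. The phase-invariant operators $\langle a\rangle^{-2}(1-\frac{b^2}{4}\Delta_c)$ and $\langle c\rangle^{-2}(1-\frac{b^2}{4}\Delta_a)$ give absolute convergence for amplitudes that are merely bounded with bounded derivatives. Because the amplitude factorizes as $p(\xi+a)\,q(\xi+c)$, Peetre's inequality yields the order $-m-n$ in one line, for all derivatives at once.

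What each route buys:
\begin{itemize}
\item Your route is shorter and standard. It also makes explicit that every seminorm of $t$ is bounded by products of finitely many seminorms of $p$ and $q$. This is exactly what is tacitly used in Proposition~\ref{prop:fermi_dirac_kernel}~\ref{item:FDKiv}, where composed symbols must have seminorms of order $|\lambda|\,\Vert F-G\Vert_{L^\infty(\R^2)}$.
\item The paper's route stays inside the frame machinery already used for the Calder\'on--Vaillancourt and Beals arguments, and it avoids oscillatory integrals altogether.
\item The paper's formula is regular in $b$, including $b=0$. Your substitution $\eta=\xi+by^\perp$ requires $b\neq 0$, and your constants degenerate as $b\to 0$. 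This is harmless here, since $b$ is fixed, and for $b=0$ one simply has $t=pq$.
\end{itemize}

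One inaccuracy in your closing aside: $[\Pi_j,{\rm Op}^A(p)]$ is not zero. As computed in the paper, $[\Pi_1,{\rm Op}^A(p)]=\iu b\,{\rm Op}^A(\partial_{\xi_2}p)$ and $[\Pi_2,{\rm Op}^A(p)]=-\iu b\,{\rm Op}^A(\partial_{\xi_1}p)$. The operators that commute with ${\rm Op}^A(p)$ are instead the generators $-\iu\partial_{x_j}+bA_j(x)$ of the magnetic translations. These commutators are still bounded, so neither the aside's conclusion nor your main proof is affected.
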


\begin{proof}
For $\epsilon>0$ we denote by $p_\epsilon$ and $q_\epsilon$ the regularization of $p$ and $q$ with respect to $\epsilon$, i.e., 
\begin{equation}\label{eq: def peps qeps}
    p_\epsilon(\xi)=p(\xi)\eu^{-\epsilon\langle \xi\rangle}, \qquad  q_\epsilon(\xi)=q(\xi)\eu^{-\epsilon\langle \xi\rangle}
\end{equation}
The corresponding operators ${\rm Op}^A(p_\epsilon)$ and ${\rm Op}^A(q_\epsilon)$  have integral kernels given by:
\begin{equation*}
\begin{aligned}
\eu^{\iu b\phi(x,y)} \widecheck{p_\epsilon}(x-y)&\coloneq\eu^{\iu b\phi(x,y)} (2\pi)^{-2} \int_{\R^2}\di\xi\, p_\epsilon(\xi)\eu^{ \iu\xi\cdot (x-y)}.
\end{aligned}
\end{equation*}
\begin{equation*}
\begin{aligned}
\eu^{\iu b\phi(x,y)} \widecheck{q_\epsilon}(x-y)&\coloneq\eu^{\iu b\phi(x,y)} (2\pi)^{-2} \int_{\R^2}\di\xi\, q_\epsilon(\xi)\eu^{ \iu\xi\cdot (x-y)}.
\end{aligned}
\end{equation*}
Thus the integral kernel of ${\rm Op}^A(p_\epsilon){\rm Op}^A(q_\epsilon)$ is given by:
\[
\int_{\R^2}\di y\, \eu^{\iu b(\phi(x,y) +\phi(y,x'))}\widecheck{p_\epsilon}(x-y) \widecheck{q_\epsilon}(y-x')
\]
Using Proposition~\ref{prop:MagCommuteKernel} and the fact that ${\rm Op}^A(p_\epsilon){\rm Op}^A(q_\epsilon)$ commutes with all the magnetic translations, its kernel equals $\eu^{\iu b\phi(x,x')} T_\epsilon(x-x')$ with 
\[
T_\epsilon(x)\coloneq\int_{\R^2}\di y\, \eu^{\iu b\phi(x,y)}\widecheck{p_\epsilon}(x-y) \widecheck{q_\epsilon}(y).
\]
This is a Schwartz function and so we see that ${\rm Op}^A(p_\epsilon){\rm Op}^A(q_\epsilon)$ is a magnetic pseudo-differential operator with symbol:
\begin{equation}\label{pseudo10}
\begin{aligned}
t_\epsilon(\xi)&=\int_{\R^2}\di x\, \eu^{-\iu\xi \cdot x} T_\epsilon(x)  =\int_{\R^2} \di y\, \int_{\R^2} \di x\,\eu^{-\iu(\xi+by^\perp) \cdot (x-y)}\widecheck{p_\epsilon}(x-y)\,  \eu^{-\iu\xi\cdot y} \widecheck{q_\epsilon}(y)  \\
&= \int_{\R^2}\di y\, p_\epsilon(\xi +by^\perp) \widecheck{q_\epsilon}(y) \eu^{-\iu\xi \cdot y}.
\end{aligned}
\end{equation}
Now the goal is to show that $t_\epsilon$ converges suitably to an element of $S_0^{-m-n}(\R^2)$ when $\epsilon\rightarrow0$. Consider a tight magnetic Gabor frame $(\Psi_{\gamma,\gamma^*})_{\gamma,\gamma^*\in\Z^2}$ from Lemma~\ref{pseudo:frame}.
Since $\widecheck{q_\epsilon}(y)$ is a Schwartz function we have a ``pointwise Parseval decomposition''
\begin{equation*}
\begin{aligned}
\widecheck{q_\epsilon}(y)&=\sum_{\gamma,\gamma^*\in\Z^2}\Psi_{\gamma,\gamma^*}(y)\langle\Psi_{\gamma,\gamma^*},\widecheck{q_\epsilon}\rangle\\
&=\sum_{\gamma,\gamma^*\in\Z^2} (2\pi)^{-2} g(y-\gamma)\eu^{\iu (\gamma^*+b\gamma^\perp) \cdot (y-\gamma)}\int_{\R^2}\di w\, g(w-\gamma)\eu^{-\iu (\gamma^*+b\gamma^\perp) \cdot (w-\gamma)}\widecheck{q_\epsilon}(w)
\end{aligned}
\end{equation*}
and using the change of variables $y-\gamma \mapsto y$ and $w-\gamma\mapsto w$ we have
\begin{align*}
  & t_\epsilon(\xi)=\sum_{\gamma,\gamma^*\in\Z^2} (2\pi)^{-2} \int_{\R^2}\di y\, p_\epsilon(\xi +y^\perp) \eu^{- \iu\xi \cdot y} g(y-\gamma)\eu^{\iu (\gamma^*+b\gamma^\perp) \cdot (y-\gamma)}\\
  &\quad\quad\quad\quad\times\int_{\R^2}\di w\, g(w-\gamma)\eu^{-\iu (\gamma^*+b\gamma^\perp) \cdot (w-\gamma)}\widecheck{q_\epsilon}(w) \\ 
  &=\sum_{\gamma,\gamma^*\in\Z^2} (2\pi)^{-2} \int_{\R^2}\di y\, p_\epsilon(\xi +\gamma^\perp +y^\perp) \eu^{-\iu\xi \cdot (y+\gamma)} g(y)\eu^{\iu (\gamma^*+b\gamma^\perp) \cdot y}\\
  &\quad\quad\quad\quad\times\int_{\R^2}\di w\, g(w)\eu^{-\iu (\gamma^*+b\gamma^\perp) \cdot w}\widecheck{q_\epsilon}(w+\gamma) 
  \\ 
  &=\sum_{\gamma,\gamma^*\in\Z^2} (2\pi)^{-2} \int_{\R^2} \di y\,\eu^{\iu(\gamma^*+b\gamma^\perp-\xi) \cdot y-\iu\xi\cdot \gamma}p_\epsilon(\xi +\gamma^\perp +y^\perp) g(y)\\
  &\quad\quad\quad\quad \times\int_{\R^2}\di w\, g(w)\eu^{-\iu (\gamma^*+b\gamma^\perp) \cdot w}\widecheck{q_\epsilon}(w+\gamma).
\end{align*}
We first show that $t_\epsilon\in S_0^0$ uniformly in $\epsilon$, which would imply that $t_\epsilon$ and all its derivatives have a uniform limit on compact sets. Secondly, we prove that if $p\in S_0^{-m}(\R^2)$ and $q\in S_0^{-n}(\R^2)$, then the limit $t$ of $t_\epsilon$ belongs to $S_0^{-n-m}(\R^2)$. Lastly, we verify that ${\rm Op}^A(t)={\rm Op}^A(p)\, {\rm Op}^A(q)$.

\noindent{\bf Step 1}. The integral with respect to $w$ in the above last equality does not depend on $\xi$. We have:
\begin{align*}
  &\int_{\R^2}\di w\, g(w)\eu^{-\iu (\gamma^*+b\gamma^\perp) \cdot w}\widecheck{q_\epsilon}(w+\gamma) =(2\pi)^{-1} \int_{\R^2} \di\eta \, q_\epsilon(\eta) \, \eu^{\iu\eta\cdot \gamma} \int_{\R^2} \di w \, g(w) \eu^{\iu(\eta-\gamma^*-b\gamma^\perp)\cdot w}
  \\
 &=\int_{\R^2} q_\epsilon(\eta) \eu^{\iu\eta\cdot \gamma} \widecheck{g}(\eta-\gamma^*-b\gamma^\perp)\di\eta =\int_{\R^2}  \eu^{\iu(\eta+\gamma^*)\cdot \gamma}\, q_\epsilon(\eta+\gamma^*+b\gamma^\perp) \,\widecheck{g}(\eta)\di\eta 
\end{align*}
which by partial integration in $\eta$ provides rapid decay in $\gamma$ uniformly in $\epsilon$ and $\gamma^*$, using the definition of $q_\epsilon$ in \eqref{eq: def peps qeps} and that $q\in S_0^{-n}(\mathbb{R}^2)$ together with $g\in C^\infty_0(\mathbb{R}^2)$. The variable $\xi$ only appears in the first integral:
\begin{align*}
&\partial_{\xi_j}^N \int_{\R^2}\di y\, \eu^{\iu(\gamma^*+b\gamma^\perp-\xi) \cdot y-\iu\xi\cdot \gamma}p_\epsilon(\xi +\gamma^\perp +y^\perp) g(y)\\
&\qquad =\sum_{0\leq k,r,s\leq N}C_{k,r,s}\gamma_j^k\int_{\R^2}\di y\, \eu^{\iu(\gamma^*+b\gamma^\perp-\xi) \cdot y-\iu\xi\cdot \gamma}\partial_{\xi_j}^{r}p_\epsilon(\xi +\gamma^\perp +y^\perp) \, y_j^{s} g(y).
\end{align*}
By partial integration with respect to $y$, all integrals decay faster than any power of $\langle \gamma^*+b\gamma^\perp-\xi\rangle$ where we are using that $g\in C^\infty_0$ and $p_\epsilon \in C^\infty$. Together with Peetre's inequality and the previous factors which control any growth in $\gamma$, we insure the absolute convergence of the series in $\gamma^*$, uniformly in $\epsilon$. This guarantees that $t_\varepsilon \in S_0^0$.

\noindent{\bf Step 2.} Now let us show that $t_\epsilon(\xi)$ belongs to $S_0^{-n-m}(\R^2)$ uniformly in $\epsilon$, provided that $p\in S_0^{-m}(\R^2)$ and $q\in S_0^{-n}(\R^2)$.

Using \eqref{pseudo10} we have 
\begin{align*}
\xi_j^{m+n} t_\epsilon(\xi)&= \int_{\R^2}\di y\, \xi_j^m p_\epsilon(\xi +y^\perp) \widecheck{q_\epsilon}(y) \Big ((i\partial_{y_j})^n\eu^{-\iu\xi \cdot y}\Big )\\
&=\sum_{k=0}^n C_{k} \int_{\R^2}\di y\, \xi_j^m p_{\epsilon,k}(\xi +y^\perp) \widecheck{{q}_{\epsilon,k}}(y) \eu^{-\iu\xi \cdot y},
\end{align*}
where ${p}_{\epsilon,k}(\xi + y^\perp)\coloneq\partial_{-\xi^\perp}^{n-k} p_\epsilon(\xi + y^\perp)$ and $q_{\epsilon,k}(\eta)\coloneq\eta_j^k q_\epsilon(\eta)$. We write $\xi_j^m=(\xi_j+y^\perp_j-y_j^\perp)^m$ and 
\begin{align}\label{pseudo11}
\xi_j^{m+n}  t_\epsilon(\xi)
&=\sum_{k=0}^n\sum_{r=0}^m C_{k,r} \int_{\R^2}\di y\,  p_{\epsilon,k,r}(\xi +y^\perp) \widecheck{{q}_{\epsilon,k,r}}(y) \eu^{-\iu\xi \cdot y},
\end{align}
where $p_{\epsilon,k,r}(\xi)\coloneq\xi_j^rp_{\epsilon,k}(\xi)$ and $q_{\epsilon,k,r}(\eta)\coloneq\partial_{-\eta_j^\perp}^{m-r}\eta_j^k  q_\epsilon(\eta)$. All these symbols belong to $S_0^0(\R^2)$, uniformly in $\epsilon$. Now consider a smooth partition of unity in $\R^2$ given by \[f_0(\xi)+f_1(\xi)+f_2(\xi)=1,\]
where $f_0$ is supported in $\{|\xi|\leq 10\}$, $f_1$ is supported in $\{|\xi|\geq 1 \, \&\, |\xi_1|\geq  |\xi_2|/2\}$  and 
$f_2$ is supported in $\{|\xi|\geq 1 \, \&\,|\xi_2|\geq |\xi_1|/2\}$. We see that $\xi_j\neq 0$ on the support of $f_j$, $j \in \{1,2\}$. Then $\xi_j^{-m-n} f_j(\xi)$ are smooth and decay like $\langle \xi\rangle^{-n-m}$ together with all their derivatives. By writing 
\[t_\epsilon(\xi)=f_0(\xi)\, t_\epsilon(\xi)+\sum_{j=1}^2 \xi_j^{-m-n} f_j(\xi) \Big (\xi_j^{m+n} t_\epsilon(\xi)\Big )\]
and applying Step 1 to each term in \eqref{pseudo11}, we conclude that $\langle \xi\rangle^{n+m}t_\epsilon$ will have a uniform limit on compacts when $\epsilon\to 0$. In a similar way one can prove that $\langle \xi\rangle^{n+m}\partial^\alpha t_\epsilon$ has a uniform limit on compacts when $\epsilon\to 0$. We can then conclude that the limit $t$ of $t_\epsilon$ as $\epsilon \rightarrow 0$ is such that $t\in S_0^{-m-n}(\mathbb{R}^2)$.

\noindent{\bf Step 3.} Let $t\in S_0^{-n-m}(\R^2)$ denote the limit of $t_\epsilon$. 

The convergence of $t_\epsilon\rightarrow t$ on compacts, along with the derivatives, implies that ${\rm Op}^A(t_\epsilon)f$ converges to ${\rm Op}^A(t)f$ in $L^2(\R^2)$-norm for every Schwartz function $f$. Similar conclusions hold for $p_\epsilon,p$ and $q_\epsilon,q$. Then for Schwartz functions $f_1,f_2$, we can conclude that
\[\langle {\rm Op}^A(t)f_1,f_2\rangle=\langle {\rm Op}^A(p){\rm Op}^A(q)f_1,f_2\rangle.\]
Since ${\rm Op}^A(t)$ and ${\rm Op}^A(p){\rm Op}^A(q)$ are both bounded, they must then be equal.
\end{proof}

\subsection{Resolvents are also magnetic pseudo-differential operators}\label{subsec:resolvents}

The resolvent of the free Hamiltonian $-\Delta$ without a magnetic field, i.e. $b=0$, is a pseudo-differential operator with an explicit symbol, e.g. $(-\Delta+1)^{-1}$ has symbol $\langle\xi\rangle^{-2}\in S_0^{-2}(\R^2)$. We can also say something when a magnetic field is present:

\begin{lemma}\label{pseudo:resolvent}
Let $H_b=(-\iu\nabla_x -bA(x))^2$ with $b>0$. Then the resolvent  $(H_b+1)^{-1}$ is a magnetic pseudo-differential operator with a symbol of class $S_0^{-2}(\R^2)$.
\end{lemma}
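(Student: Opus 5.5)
Our plan is to apply the Beals-type criterion, Corollary~\ref{corollary-beals}, to $R=(H_b+1)^{-1}$, and then to upgrade the resulting $S_0^0$ symbol to $S_0^{-2}$.

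\textbf{Hypotheses of the criterion.} Since $H_b\ge 0$ is self-adjoint, $R$ is bounded with $\|R\|\le 1$. It commutes with all magnetic translations $\tau_{b,y}$, because $H_b$ does: the covariant momenta satisfy $\Pi_j\tau_{b,y}=\tau_{b,y}\Pi_j$ for the symmetric gauge. The remaining task is the commutator condition, handled as follows.
\begin{itemize}
\item For $T=\Pi_j$, the commutation relation $[\Pi_1,\Pi_2]=\iu b$ gives $[H_b,\Pi_j]=\pm 2\iu b\,\Pi_k$ with $k\neq j$. Hence
\[[\Pi_j,R]=R\,[H_b,\Pi_j]\,R,\]
and $\Pi_k R$ is bounded since $\|\Pi_k f\|^2\le\langle f,H_bf\rangle$.
\item For $T=X_j$, we have $[X_j,H_b]=2\iu\Pi_j$ by \eqref{eq: def Pij op}, so $[X_j,R]=-2\iu R\,\Pi_jR$, again bounded.
\end{itemize}
Iterating, every multiple commutator is a finite sum of products of the form $R\,B_1R\,B_2\cdots R$, where each $B_i$ is a polynomial of degree at most one in $\Pi_1,\Pi_2$ (the constants come from $[X_j,\Pi_k]=\iu\delta_{jk}$ and $[\Pi_1,\Pi_2]=\iu b$). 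The number of $R$ factors always exceeds the number of $\Pi$ factors, and each $\Pi_j$ can be paired with an adjacent $R$ or $R^{1/2}$ (using $\|\Pi_jR^{1/2}\|\le1$). Therefore all multiple commutators extend to bounded operators. One point needs care: these identities are derived on $\mathscr S(\R^2)$, and the quadratic-form sense must be justified. This works because $R$ maps $\mathscr S$ into the domain of $H_b$ and, by elliptic regularity with polynomial coefficients, $R$ preserves $\mathscr S$. Corollary~\ref{corollary-beals} then yields $r\in S_0^0(\R^2)$ with $R={\rm Op}^A(r)$.

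\textbf{Upgrading the decay (main obstacle).} To obtain $r\in S_0^{-2}$, we compose with the explicit symbol of $H_b+1$. A direct computation of ${\rm Op}^A$ in the gauge of \eqref{pseudo2} shows:
\begin{itemize}
\item $\Pi_j={\rm Op}^A(\xi_j)$;
\item $H_b+1={\rm Op}^A(|\xi|^2+1)$, with no lower-order correction because the field is constant: the magnetic Moyal product $\xi_j\sharp\xi_j$ has corrections involving only $b^2\partial^2$ of linear symbols, which vanish.
\end{itemize}
From the identity ${\rm Op}^A(\langle\xi\rangle^2)\,{\rm Op}^A(r)=\mathrm{Id}$ we then want to read off $\langle\xi\rangle^2 r=1+(\text{derivative terms})$. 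To make this rigorous, we use the product formula \eqref{pseudo10}, justified for polynomial $p$ by Fourier duality, which gives
\[\langle\xi\rangle^2 r(\xi)=1+\iu b\,(\text{terms with }\xi^\perp\!\cdot\nabla r)+b^2\,\Delta r.\]

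Since $r$ and all its derivatives are bounded, the right-hand side a priori grows only like $\langle\xi\rangle$. This is the step we expect to be hardest, and we would attack it by bootstrapping. Apply the same argument to $\Pi_jR$ and $\Pi_j\Pi_kR$, which are also bounded and translation invariant, so their commutators are controlled exactly as above. Corollary~\ref{corollary-beals} then gives that $\xi_j\sharp r$ and $\xi_j\xi_k\sharp r$ lie in $S_0^0$. Expanding these Moyal products, which are exact for polynomial left factors, shows that $\xi^\alpha r$ is bounded modulo derivatives of $r$ for $|\alpha|\le2$. Induction on the number of derivatives, using the corresponding results for $\partial^\beta r$, which come from commutators with $X$, then yields $|\partial^\beta r(\xi)|\le C_\beta\langle\xi\rangle^{-2}$.

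Alternatively, as a fallback, we would compute $r$ explicitly. Since $e^{-tH_b}$ has the Mehler kernel, the symbol of $e^{-tH_b}$ is an explicit Gaussian $(\cosh bt)^{-1}e^{-|\xi|^2\tanh(bt)/b}$. Then
\[r(\xi)=\int_0^\infty e^{-t}(\cosh bt)^{-1}e^{-|\xi|^2\tanh(bt)/b}\,dt,\]
and the $S_0^{-2}$ estimates follow by splitting the integral at $t\sim\langle\xi\rangle^{-2}$.
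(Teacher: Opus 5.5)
Your proof is correct and has the same two-step architecture as the paper's. First you produce some $r\in S_0^0(\R^2)$ with $(H_b+1)^{-1}={\rm Op}^A(r)$ via commutators. Then you compose with the exact symbol $|\xi|^2+1$ of $H_b+1$ to get $(|\xi|^2+1)\,r=1+(\text{terms } b\,\xi_k\partial_{\xi_l}r)+(\text{const})\,b^2\Delta_\xi r$, and bootstrap.

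\textbf{First step.} The paper deliberately avoids Corollary~\ref{corollary-beals}. It checks \eqref{pseudo4} directly with the regularized operators $X_j/(1+\iu\epsilon X_j)$ and $(1+\epsilon H_b)^{-1}\Pi_j(1+\epsilon H_b)^{-1}$, precisely to sidestep domain questions for multiple commutators with $R$. You instead legitimize the commutator algebra through $R\,\mathscr S(\R^2)\subset\mathscr S(\R^2)$. That inclusion is true, and with it your route is cleaner, but the reason you give is not right. The symbol $(\xi-bA(x))^2$ vanishes on a plane in phase space, so $H_b$ is not globally elliptic, and local elliptic regularity gives no decay. A correct argument: $R$ commutes with the magnetic translations, hence with their generators $K_j=-\iu\partial_j+bA_j$, which commute with $\Pi_1,\Pi_2$. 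Moreover $H_b+K_1^2+K_2^2=-2\Delta+\frac{b^2}{2}|x|^2$ is a harmonic oscillator. Therefore $R$ preserves $\bigcap_n D((-2\Delta+\frac{b^2}{2}|x|^2)^n)=\mathscr S(\R^2)$. Equivalently, $R$ is diagonal with bounded eigenvalues in the Landau--Hermite basis, where Schwartz functions are exactly those with rapidly decaying coefficients.

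\textbf{Second step.} The step you flag as the main obstacle is not one; the paper closes it directly from the identity you already wrote. Since all $\partial^\alpha r$ are bounded, the right-hand side is $O(\langle\xi\rangle)$, so $r=O(\langle\xi\rangle^{-1})$. Applying $\partial^\alpha$ to the identity, the commutators of $\partial^\alpha$ with $|\xi|^2+1$ and with the first-order terms only produce terms $\xi_k\partial^\beta r$ and $\partial^\beta r$ with constant coefficients. Hence every $\partial^\alpha r$ is also $O(\langle\xi\rangle^{-1})$, i.e.\ $r\in S_0^{-1}$. Reinserting this, the only growing terms $\xi_k\partial^\beta r$ become $O(1)$, which gives $r\in S_0^{-2}$.

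Your induction via Corollary~\ref{corollary-beals} applied to $\Pi_jR$, $\Pi_j\Pi_kR$ and their $X$-commutators also works. It needs the $H_b$-boundedness of $\Pi_j\Pi_k$ and uniqueness of symbols, and is much heavier than necessary. The Mehler fallback is a genuinely independent proof. Your formula for the symbol of $\eu^{-tH_b}$ is correct, and it identifies ${\rm Op}^A(r)$ without any Beals-type step. It even gives $|\partial^\alpha r(\xi)|\le C_\alpha\langle\xi\rangle^{-2-|\alpha|}$. The price is importing the explicit heat kernel, whereas the paper uses only the commutation relations, an approach that then extends to $H_b+W$ in Lemma~\ref{pseudo:resolvent2}.
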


\begin{proof} 
\noindent{\bf Step 1.} We will first show that \eqref{pseudo4} is satisfied by $R=(H_b+1)^{-1}$, which will imply that $(H_b+1)^{-1}|_{\mathscr{S}(\R^2)}={\rm Op}^A(r)$ for some $r\in S_0^0(\R^2)$. We do not apply Corollary~\ref{corollary-beals} because this would require a rigorous control on the multiple commutators with the resolvent, which is more difficult than proving \eqref{pseudo4} directly. Even this will need a certain regularization procedure explained in what follows.
    
Let $T_j$ be one of the operators $X_1$, $X_2$, $\Pi_1$, or $\Pi_2$, where $X_j$ is the multiplication by $x_j$ and $\Pi_j$ is as defined in \eqref{eq: def Pij op}. We have the commutators (on $\mathscr{S}(\R^2)$):
\[[X_1, \Pi_1]=[X_2,\Pi_2]= \iu,\quad [\Pi_1,\Pi_2]=\iu b,\quad [X_j,H_b]=2\iu\Pi_j,\]
\[[\Pi_1,H_b]=2\iu b\Pi_2,\quad [\Pi_2,H_b]=-2\iu b\Pi_1.\]
We see that $[T_j,H_b]$ is relatively bounded with respect to $H_b$ and, at least formally, we have
\[[T_j,(H_b+1)^{-1}]=-(H_b+1)^{-1}[T_j,H_b](H_b+1)^{-1},\]
where the right hand side is a bounded operator on $L^2(\R^2)$. {We now replace} $X_j$ with $X_j/(1+\iu\epsilon X_j)$ and $\Pi_j$ with $(1+\epsilon H_b)^{-1} \Pi_j (1+\epsilon H_b)^{-1}$ for some $\epsilon>0$, and we see that the new regularized operators $T_{j,\epsilon}$ are bounded, leave the domain of $H_b$ invariant, and converge strongly to $T_j$. Getting back to the proof of Corollary~\ref{corollary-beals}, and denoting $(H_b+1)^{-1}$ by $R$, we have:
\[(\alpha_j-\beta_j) \langle \Psi_{\alpha,\alpha^*}, R\Psi_{\beta,\beta^*}\rangle =\langle X_j\, \Psi_{\alpha,\alpha^*}, R\Psi_{\beta,\beta^*}\rangle -\langle  \Psi_{\alpha,\alpha^*}, R\, X_j\Psi_{\beta,\beta^*}\rangle +{\rm bounded},\]
uniformly in $\alpha,\beta,\alpha^*,\beta^*$. At this moment we may replace $X_j$ with $X_j/(1+\iu\epsilon X_j)$ and perform the commutator with $R$. This commutator will remain uniformly bounded when $\epsilon\to 0$. The same happens for higher order commutators, and also when we consider $\Pi_{j,\epsilon}$ in order to control the decay in $\alpha^*-\beta^*$.  
    
\noindent{\bf Step 2.}  We will now show by a boot-strap argument that the symbol $r$ belongs to $S_0^{-2}(\R^2)$. 
For $f_1,f_2\in\mathscr{S}(\R^2)$, and using that $(H_b+1)(H_b+1)^{-1}$ is the identity operator on $L^2(\R^2)$ we have
\begin{align*}
    \langle f_1,f_2\rangle_{\mathscr{S}'(\R^2),\mathscr{S}(\R^2)}&=\langle (H_b+1){\rm Op}^A(r)f_1,f_2\rangle_{\mathscr{S}'(\R^2),\mathscr{S}(\R^2)}\\
    &=\langle {\rm Op}^A(r)f_1,[(\iu\nabla_x-bA(x))^2+1]f_2\rangle_{\mathscr{S}'(\R^2),\mathscr{S}(\R^2)}.
\end{align*}
By partial integration this identity becomes
\begin{align*}
    \langle f_1,f_2\rangle_{\mathscr{S}'(\R^2),\mathscr{S}(\R^2)}&=\left\langle {\rm Op}^A\left(\left[\left(2A(\xi)+\iu\frac{b}{2}\nabla_\xi\right)^2+1\right] r\right)f_1,f_2\right\rangle_{\mathscr{S}'(\R^2),\mathscr{S}(\R^2)}.
\end{align*}
The identity operator on $\mathscr{S}(\R^2)$ has symbol $1\in S_0^0(\R^2)$, hence by a similar argument as in Lemma~\ref{pseudo:translation_invariant} we obtain:
\[1=\left[\left(2A(\xi)+\iu\frac{b}{2}\nabla_\xi\right)^2+1\right] r(\xi).\]
This is the same as: 
\[r(\xi)=\frac{1}{|\xi|^2+1}\left(b\iu\xi_1\partial_{\xi_2}r(\xi)-b\iu\xi_2\partial_{\xi_1}r(\xi)-\frac{b^2}{4}\Delta_\xi r(\xi)+1\right),\]
and using the a-priori information that all the derivatives of $r$ are uniformly bounded, we get in the first step that $r\in S_0^{-1}(\R^2)$, where the dominating terms are $\frac{1}{|\xi|^2+1}b\iu\xi_1\partial_{\xi_2}r(\xi)$ and $-\frac{1}{|\xi|^2+1}b\iu\xi_2\partial_{\xi_1}r(\xi)$. Now using in the same identity that $r\in S_0^{-1}(\R^2)$, we finally obtain that $r\in S_0^{-2}(\R^2)$.
\end{proof}

\begin{lemma}\label{pseudo:resolvent2}
Let $W={\rm Op}^A(w)$ with a real symbol $w\in S^0_0(\R^2)$ and consider the self-adjoint operator $H=H_b+W$. Then for all $z\not\in \sigma(H)$, the resolvent $(H-z)^{-1}$ is a magnetic pseudo-differential operator with a symbol $r_z\in S^{-2}_0(\R^2)$. Moreover, for every $\alpha\in \Z^2_+$ there exists $C,N>0$ independent on $z$ such that 
\begin{equation}\label{hdc4}
    |\partial^\alpha r_z(\xi)|\leq C \, \langle \xi\rangle^{-2}\frac{\langle z\rangle^N}{\min\big \{1,\, {\rm dist}(z,\sigma(H))\big\}^N}. 
\end{equation}
\end{lemma}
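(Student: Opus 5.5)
The plan follows the two-step structure of Lemma~\ref{pseudo:resolvent}. First we show that $R_z\coloneq(H-z)^{-1}$ is of the form ${\rm Op}^A(r_z)$ with $r_z\in S^0_0(\R^2)$, with seminorms controlled by $\langle z\rangle^N/\min\{1,{\rm dist}(z,\sigma(H))\}^N$. Then we bootstrap the decay to $S^{-2}_0$ from the symbolic equation $(H-z)R_z=\mathrm{Id}$.

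\textbf{Step 1 (frame decay).} $W={\rm Op}^A(w)$ is bounded by Lemma~\ref{pseudo:cald-vaill}, so $H$ is self-adjoint on the domain of $H_b$. $H$ commutes with magnetic translations, because $H_b$ does and $W$ has a kernel of Peierls form. Hence $R_z$ commutes with them too. As in Lemma~\ref{pseudo:resolvent}, we bound the frame matrix elements $(\alpha-\beta)^{k}(\alpha^*-\beta^*)^{l}\langle\Psi_{\alpha,\alpha^*},R_z\Psi_{\beta,\beta^*}\rangle$ by multiple commutators of $R_z$ with the regularized $T_{j,\epsilon}$. Formally,
\[
[T_j,R_z]=-R_z[T_j,H]R_z,\qquad [T_j,H]=[T_j,H_b]+[T_j,W].
\]
The term $[T_j,H_b]$ is a linear combination of $\Pi_1,\Pi_2$, hence $H_b$-bounded. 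For $W={\rm Op}^A(w)$, a computation on the symbol level gives $[X_j,W]={\rm Op}^A(\iu\partial_{\xi_j}w)$ up to sign. The commutator $[\Pi_j,W]$ equals ${\rm Op}^A$ of a combination of $b\,\partial_\xi w$. This follows from the intertwining identity used in Corollary~\ref{corollary-beals}, or from the fact that $\Pi_j$ acts on magnetic-translation-invariant kernels as a derivative of the translation-invariant part. So all these commutators are bounded, and iterated commutators are again of this type.

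Each iterated commutator is a finite sum of products of $R_z$ with bounded operators and operators $\Pi_j$. The factors $\Pi_jR_z$ are bounded by a constant times $\|(H_b+1)R_z\|\lesssim \langle z\rangle/\min\{1,{\rm dist}(z,\sigma(H))\}$. To see this, write $(H_b+1)R_z=\mathrm{Id}+(z+1-W)R_z$ and use $\|R_z\|=1/{\rm dist}(z,\sigma(H))$. Counting factors gives the polynomial bound in \eqref{pseudo4}, with constants of the form $C\langle z\rangle^N\min\{1,{\rm dist}\}^{-N}$. Lemma~\ref{pseudo:matrix} and Lemma~\ref{pseudo:translation_invariant} then give $r_z\in S^0_0$. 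We still need the seminorms of $r_z$ to be bounded by the frame constants. This follows by tracking the proof of Lemma~\ref{pseudo:matrix}: the symbol and its derivatives are bounded linearly by the decay constants in \eqref{pseudo4}.

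\textbf{Step 2 (bootstrap).} From $(H-z)R_z=\mathrm{Id}$, tested against Schwartz functions as in Lemma~\ref{pseudo:resolvent}, we get the symbolic identity
\[
1=\Big[\big(2A(\xi)+\iu\tfrac b2\nabla_\xi\big)^2-z\Big]r_z+w\sharp r_z ,
\]
where $w\sharp r_z$ is the symbol of ${\rm Op}^A(w){\rm Op}^A(r_z)$ given by Lemma~\ref{pseudo:moyal}. Rearranging,
\[
r_z=\frac{1}{|\xi|^2+1}\Big(1+(1+z)r_z-w\sharp r_z+b\iu(\xi_1\partial_{\xi_2}-\xi_2\partial_{\xi_1})r_z-\tfrac{b^2}{4}\Delta r_z\Big).
\]
If $r_z\in S^{-k}_0$, the bracket is $O(\langle\xi\rangle^{1-k})$ for the first-order term and $O(\langle\xi\rangle^{-k})$ otherwise, with the same type of $z$-dependence. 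By Lemma~\ref{pseudo:moyal}, $w\sharp r_z\in S^{-k}_0$, with seminorms controlled by those of $w$ and $r_z$; one checks this in Step~1 of that proof. Two iterations give $S^{-1}_0$ and then $S^{-2}_0$. At each iteration the $z$-power grows by at most one, and derivatives are handled by differentiating the identity. This yields \eqref{hdc4}.

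\textbf{Main obstacle.} The hard part is making Step~1 quantitative: justifying the commutator expansion with the regularized $T_{j,\epsilon}$ uniformly in $\epsilon$, and tracking the $z$-dependence through the Gabor-frame construction of the symbol. Checking that the seminorm bounds in Lemmas~\ref{pseudo:matrix} and \ref{pseudo:moyal} are linear or polynomial in the input constants is routine but essential.
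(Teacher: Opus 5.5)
Your proposal is correct. Step~1 is essentially the paper's argument. The paper also uses the commutators $[X_j,W]$ and $[\Pi_j,W]$ as ${\rm Op}^A$ of derivatives of $w$, bounded by Lemma~\ref{pseudo:cald-vaill}, and the commutator strategy of Lemma~\ref{pseudo:resolvent} for \eqref{pseudo4}. It controls the $z$-dependence through $\Pi_j(H-z)^{-1}$ in the same way. Your identity $(H_b+1)R_z=\mathrm{Id}+(z+1-W)R_z$ is a compressed form of the two resolvent identities the paper writes down.

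The real difference is Step~2. The paper does not bootstrap. It writes
\[
(H+1)^{-1}=(H_b+1)^{-1}\bigl(\mathrm{Id}-W(H+1)^{-1}\bigr),\qquad (H-z)^{-1}=(H+1)^{-1}\bigl(\mathrm{Id}+(z+1)(H-z)^{-1}\bigr).
\]
The $S^{-2}_0$ decay is then inherited from the free resolvent (Lemma~\ref{pseudo:resolvent}) via Lemma~\ref{pseudo:moyal}, using $S^{-2}_0\circ S^0_0\subset S^{-2}_0$. The $z$-growth appears directly as the factor $(z+1)$ times the $S^0_0$ seminorms of $r_z$ from Step~1.

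You instead re-derive the decay from the symbolic form of $(H-z)R_z=\mathrm{Id}$. This needs three things: the symbol of $H_b\,{\rm Op}^A(r_z)$ (the partial-integration and uniqueness argument in the proof of Lemma~\ref{pseudo:resolvent}), the composition $w\sharp r_z\in S^{-k}_0$ (Lemma~\ref{pseudo:moyal} with $m=0$, $n=k$), and two iterations with differentiated identities. The argument works. The factor $(|\xi|^2+1)^{-1}$ gains two orders, the first-order term $b(\xi_1\partial_{\xi_2}-\xi_2\partial_{\xi_1})r_z$ costs one, and the constant $1$ in the bracket caps the gain at $S^{-2}_0$. That constant is $O(1)$, not $O(\langle\xi\rangle^{-k})$ as you wrote, but this is harmless.

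Your route depends only on the symbolic calculus of Lemma~\ref{pseudo:resolvent}, not on its conclusion, so it effectively reproves that lemma with $W$ and $z$ included. The cost is more work than the paper's route, which needs only the composition lemma and the already-established free case. Both routes rely equally on the quantitative seminorm bounds implicit in Lemmas~\ref{pseudo:matrix} and~\ref{pseudo:moyal}.
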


\begin{proof}
By using partial integration we may compute the following commutators on $\mathscr{S}(\R^2)$: 
\[[X_j,W]=\iu{\rm Op}^A(\partial_{\xi_j}w),\quad [\Pi_1,W]=\iu b{\rm Op}^A(\partial_{\xi_2}w),\quad [\Pi_2,W]=-\iu b{\rm Op}^A(\partial_{\xi_1}w),\]
These commutators can be extended to bounded operators on $L^2(\R^2)$ by Lemma~\ref{pseudo:cald-vaill}. This implies that we may again use the commutator strategy of Lemma~\ref{pseudo:resolvent} in order to control the decay in \eqref{pseudo4} and to construct $r_z$ as an element of $S_0^0(\R^2)$. In order to control the behavior in $z$ we need to bound the operator norms of products of operators like 
\[\Pi_j (H-z)^{-1}=\Pi_j (H+1)^{-1}+(z+1)\Pi_j (H+1)^{-1}(H-z)^{-1}\]
and 
\[\Pi_j (H+1)^{-1}=\Pi_j (H_b+1)^{-1}-\Pi_j (H_b+1)^{-1}W (H+1)^{-1},\]
and use the spectral theorem. Finally, we conclude that $r_z\in S_0^{-2}(\R^2)$ by using again the above resolvent identities together with the symbol composition result of Lemma~\ref{pseudo:moyal}. 
\end{proof}

\section{Magnetic translations and integral kernels}\label{sec:kernels}
We now focus on proving the first result announced in Section~\ref{subsec:main}.

Recall that we are interested in formulating the self-consistency equations~\eqref{eqn:W} and \eqref{eqn:fixed-point-gammastar} directly in the infinite-volume limit, namely
\begin{equation}\label{eq: self consist eq 0}
\begin{aligned}
W(x,x')& = \delta (x-x')\int_{\R^2} v(x-y)\, f (H_b+ \lambda W)(y,y)\di y\\ 
&\qquad - v(x-x')\, f (H_b+ \lambda W)(x,x')\,,\quad x,x'\in \R^2,
\end{aligned}
\end{equation}
where we recall that the self-consistent operator $W$ should be bounded and self-adjoint and that $v(x-y) = V(x;y)$ is the two-body interaction potential such that $\langle x \rangle^n v(x) \in L^1(\mathbb{R}^2)$ for all $n\geq 0$. As already anticipated,  $v$ can be chosen as the screened Coulomb potential in 2D, i.e.\ $v(x)=e^{-\alpha|x|}\,\ln(|x|)$ with $\alpha>0$. 

As explained in Section~\ref{subsec:main}, in order to prove Theorem~\ref{thm1}  we first reformulate the self-consistent equation above. To do that, we needed Proposition~\ref{prop:MagCommuteKernel}, which we now prove.

\begin{proof}[Proof of Proposition~\ref{prop:MagCommuteKernel}]
    Let us first assume that $\tau_{b,y}\,T\,\tau_{b,-y}=T$ for all $y\in\R^2$ (note: $\tau_{b,y}^*=\tau_{b,y}^{-1}=\tau_{b,-y}$). We have
    \begin{align*}
        (\tau_{b,y} \, T \,\tau_{b,-y}\, \psi)(x) &= \int_{\R^2} \eu^{\iu \, b \, \phi(x,y)} \, T(x-y,x') \, \eu^{\iu \, b \, \phi(x',-y)} \,\psi(x'+y) \, \di x'\\
        &= \int_{\R^2} \eu^{\iu \, b \, \phi(x-x',y)} \, T(x-y,x'-y) \, \psi(x') \, \di x'.
    \end{align*}
    Since $\tau_{b,y}\,T\,\tau_{b,-y}=T$, the above readily implies the identity: 
    \[ \eu^{\iu \, b \, \phi(x-x',y)} \, T(x-y,x'-y) = T(x,x') \quad \text{for all } x, y, x' \in \R^2 \,.\]
    The statement follows upon setting $y=x'$.

    Conversely, let us assume that $T(x,x')=\eu^{\iu \, b \, \phi(x,x')} \, F_T(x-x')$, where $F_T$ is locally integrable. Let $\psi\in C_0^\infty(\R^2)$ and compute:
    \begin{align*}
        (\tau_{b,y}\,T\,\tau_{b,-y})\,\psi(x)&=\int_{\R^2} \eu^{\iu \, b \, \phi(x-x',y)} \, \eu^{\iu \, b \, \phi(x-y,x'-y)}\, F_T(x-y-x'+y) \, \psi(x') \, \di x'\\
        &=\int_{\R^2} \eu^{\iu \, b \, \phi(x,x')}\, F_T(x-x') \, \psi(x') \, \di x'=T\,\psi(x)
    \end{align*}
    We conclude that $\tau_{b,y} \, T \, \psi = T \, \tau_{b,y} \, \psi$ for all $\psi\in C_0^\infty(\R^2)$, and by density that $T$ commutes with $\tau_{b,y}$ for all $y \in \R^2$.
\end{proof}

Corollary~\ref{cor:MagCommuteKernel} follows immediately from Proposition~\ref{prop:MagCommuteKernel}.

Assuming that the operators $W$ and $f(H_b+\lambda W)$ commute with all magnetic translations, by using Proposition~\ref{prop:MagCommuteKernel} and Corollary~\ref{cor:MagCommuteKernel}, we can restate \eqref{eq: self consist eq 0} as follows: find a function $F$ such that 
\begin{equation}\label{1sec}
\begin{aligned}
    &\qquad \qquad \qquad \qquad F(x)=f(H_b+\lambda W_{F})(x,0),\\
    W_{F}&= \left( \int_{\R^2} v(y) \, \di y \right) \, F(0) + Z_{F}\, ,\quad Z_{F}(x,x') \coloneq - \eu^{\iu \, b \, \phi(x,x')} \, v(x-x') \, F(x-x') .
    \end{aligned}
\end{equation}
In the following we will assume that $F$ belongs to the space 
\begin{equation}\label{cdh1sec3} BC_H(\R^2)=\{G\in BC(\R^2)\mid \overline{G(-x)} = G(x)\,,\:\forall x\in\R^2\}\,,
\end{equation}
and we write 
\begin{equation}\label{eq: def HblambdaF}
    H_{b,\lambda, F}= H_b + \lambda W_F.
\end{equation}
\subsection{The operators $Z_F$ and $W_F$}
We let $\mathcal{L}(B,B')$ denote the space of continuous linear maps between Banach spaces $B$ and $B'$, and we write $\mathcal{L}(B)$ in place of $\mathcal{L}(B,B)$. We define a map 
\[L^\infty(\R^2)\ni F\mapsto \mathcal{Z}(F) \in \mathcal{L}(L^2(\R^2)),\]
where $\mathcal{Z}(F)$ is an integral operator with integral kernel given by 
\[ \big (\mathcal{Z}(F)\big )(x,x')= - \eu^{\iu \, b \, \phi(x,x')} \, v(x-x') \, F(x-x')\,. \]
We note that if $F\in BC_H(\R^2)$ (see \eqref{cdh1sec3}) then $\mathcal{Z}(F)=Z_F$. 
\begin{lemma}\label{opZ}
The map $\mathcal{Z}$ enjoys the following properties:
\begin{enumerate}[label={\rm(\roman*)}, ref={\rm(\roman*)}]
 \item $\mathcal{Z} : L^\infty(\R^2) \to \mathcal{L}(L^2(\R^2))$ is bounded with operator norm bounded by $\Vert v\Vert_{L^1(\R^2)}$. 
 \item \label{item:opZii} For any $F \in L^{\infty}(\R^2)$, the operator $\mathcal{Z}(F)$ commutes with all magnetic translations~\eqref{eqn:MagTransl}. Moreover, $\mathcal{Z}(F)=\mathcal{Z}(F)^*$ if and only if $F(x)=\overline{F(-x)}$. 
 \item $\mathcal{Z}(F)$ is a magnetic pseudo-differential operator with symbol 
 \[ \zeta_F(\xi) = - \int_{\R^2} \eu^{\iu \xi \cdot x}\, v(x) \, F(x)\, \di x\: \in S^0_0(\R^2)\,. \]
\end{enumerate}
\end{lemma}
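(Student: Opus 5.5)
For (i) I would use the Schur test. The kernel of $\mathcal{Z}(F)$ satisfies
\[
|\mathcal{Z}(F)(x,x')| \le |v(x-x')|\,\|F\|_{L^\infty},
\]
and the bound depends only on $x-x'$. Hence
\[
\sup_x\int |\mathcal{Z}(F)(x,x')|\,\di x' \le \|v\|_{L^1}\|F\|_{L^\infty},
\]
and the same bound holds with the roles of $x$ and $x'$ exchanged. The Schur test (equivalently, Young's inequality for the dominating convolution kernel $|v|*|\cdot|$) then gives $\|\mathcal{Z}(F)\|\le \|v\|_{L^1}\|F\|_{L^\infty}$. The map is clearly linear in $F$, so this proves (i).

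For (ii), the kernel is already in the form required by Proposition~\ref{prop:MagCommuteKernel}, with $F_{\mathcal{Z}(F)}(x) = -v(x)F(x)$. This function is locally integrable, since $v\in L^1$ and $F\in L^\infty$, so the converse direction of that proposition yields commutation with all $\tau_{b,y}$. For self-adjointness I would invoke Corollary~\ref{cor:MagCommuteKernel}(ii): $\mathcal{Z}(F)=\mathcal{Z}(F)^*$ if and only if $\overline{v(-x)F(-x)} = v(x)F(x)$. Since $v$ is real and even, this reads $v(x)\big(\overline{F(-x)}-F(x)\big)=0$. Strictly speaking, this gives the equivalence only where $v\neq 0$. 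I would note that the ``only if'' direction should be read modulo the set where $v$ vanishes, or else assume $v\neq0$ a.e., as for the screened Coulomb example. The ``if'' direction holds without any such caveat.

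For (iii) the plan is as follows. Since $G\coloneq vF$ satisfies $\langle x\rangle^n G\in L^1$ for all $n$, its Fourier transform
\[
\zeta_F(\xi) = -\int \eu^{\iu\xi\cdot x}v(x)F(x)\,\di x
\]
is smooth. Moreover $|\partial^\alpha\zeta_F(\xi)|\le \|x^\alpha vF\|_{L^1} \le \|\langle x\rangle^{|\alpha|}v\|_{L^1}\|F\|_\infty$, so $\zeta_F\in S^0_0(\R^2)$. Next I would check that ${\rm Op}^A(\zeta_F)$ agrees with $\mathcal{Z}(F)$ on Schwartz functions. Inserting $\zeta_F$ into \eqref{pseudo2}, the $\xi$-integral $(2\pi)^{-2}\int \eu^{\iu\xi\cdot(x-y)}\zeta_F(\xi)\,\di\xi$ should return $-v(x-y)F(x-y)$ by Fourier inversion. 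The sign convention of the Fourier transform in $\zeta_F$ has to be matched against the $\eu^{\iu\xi\cdot(x-y)}$ in \eqref{pseudo2}; this may well require $\eu^{-\iu\xi\cdot x}$ rather than $\eu^{\iu\xi\cdot x}$, and I would check that carefully.

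The main obstacle is rigor, because $G$ is only in $L^1$, so pointwise Fourier inversion is not available. I would regularize exactly as in Lemma~\ref{pseudo:moyal}, with $\zeta_{F,\epsilon}(\xi)=\zeta_F(\xi)\eu^{-\epsilon\langle\xi\rangle}$. The inverse transform of $\zeta_{F,\epsilon}$ equals $G$ convolved with an approximate identity, and this converges to $G$ in $L^1$ as $\epsilon\to0$. The pairing $\langle f_1,{\rm Op}^A(\zeta_{F,\epsilon})f_2\rangle$ is then an absolutely convergent integral against $\overline{f_1(x)}f_2(y)\eu^{\iu b\phi(x,y)}$, and it converges to $\langle f_1,\mathcal{Z}(F)f_2\rangle$. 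On the other side, $\zeta_{F,\epsilon}\to\zeta_F$ in the sense used in Step~3 of Lemma~\ref{pseudo:moyal}, so the pairing also converges to $\langle f_1,{\rm Op}^A(\zeta_F)f_2\rangle$. Equating the two limits, together with boundedness from Lemma~\ref{pseudo:cald-vaill} and part (i), identifies the two operators on all of $L^2(\R^2)$.
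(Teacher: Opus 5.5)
Your proposal is correct and follows the paper's route: the Schur test for (i), Proposition~\ref{prop:MagCommuteKernel} and Corollary~\ref{cor:MagCommuteKernel} for (ii), and the moment bound $|\partial^\alpha\zeta_F|\le\|x^\alpha vF\|_{L^1}$ for (iii). Your regularization argument also supplies the identification ${\rm Op}^A(\zeta_F)=\mathcal{Z}(F)$, which the paper leaves implicit.

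Both of your caveats are justified. With the convention \eqref{pseudo2}, and consistently with \eqref{pseudo10}, the symbol must be $-\int_{\R^2}\eu^{-\iu\xi\cdot x}v(x)F(x)\,\di x$, because the stated sign yields the kernel $-\eu^{\iu b\phi(x,y)}v(x-y)F(y-x)$. Likewise, the ``only if'' in (ii) holds only for almost every $x$ in $\{v\neq0\}$.
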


\begin{proof}{\color{white}=}

\begin{enumerate}[label={(\roman*)}, ref={(\roman*)}]
 \item The Schur test (Theorem~1.8 in \cite{Hedenmalm2000}) gives
 \begin{align*}
    \Vert \mathcal{Z}(F)\Vert_{\mathcal{L}(L^2(\R^2))} & \leq \left( \sup_{x\in\R^2} \int_{\R^2} \big| \big ( \mathcal{Z}(F)\big )(x,x') \big| \, \di x' \right)^{1/2} \left( \sup_{x'\in\R^2} \int_{\R^2} \big|\big (\mathcal{Z}(F)\big )(x,x')\big|\, \di x \right)^{1/2} \\
    & \leq \Vert v\Vert_{L^1(\R^2)}\, \Vert F\Vert_{L^\infty(\R^2)}\,.
 \end{align*}
 \item The statement follows immediately from Proposition~\ref{prop:MagCommuteKernel} and Corollary~\ref{cor:MagCommuteKernel}.
 \item We see that all the derivatives of $\zeta_F$ are uniformly bounded due to the fast decay of $v$ and the fact that $F$ is bounded. The symbol neither has, nor gains decay in $\xi$ after differentiation, because $v\, F$ is not necessarily continuous anywhere. \qedhere
\end{enumerate}
\end{proof}

In view of the definition~\eqref{1} of $W_F$, we have the following result:
\begin{corollary} \label{cor:WF}
Let $F \in BC_H(\R^2)$.
\begin{enumerate}[label={\rm(\roman*)}, ref={\rm(\roman*)}]
 \item The operator $W_F$ is bounded on $L^2(\R^2)$ with
 \[ \Vert W_F \Vert_{L^2(\R^2)} \le \Vert v \Vert_{L^1(\R^2)} \, \left( |F(0)| + \Vert F \Vert_{L^\infty(\R^2)} \right) \le 2 \,\Vert v \Vert_{L^1(\R^2)} \, \Vert F \Vert_{L^\infty(\R^2)}\,. \] 
 \item $W_F$ is self-adjoint and commutes with all magnetic translations~\eqref{eqn:MagTransl}.
 \item $W_F$ is a magnetic pseudo-differential operator with symbol 
 \[ w_F(\xi) = \left(\int_{\R^2} v(y) \, \di y \right) F(0) - \int_{\R^2} \eu^{\iu \xi \cdot x}\, v(x) \, F(x)\, \di x\: \in S^0_0(\R^2)\,. \]
\end{enumerate}
In particular, for all $\lambda \in \R$ the operator $H_{b,\lambda,F}$ defined in \eqref{eqn:HblF} is bounded from below and self-adjoint on the domain of $H_b$. 
\end{corollary}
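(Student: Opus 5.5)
The plan is to write $W_F = c_F\,\mathrm{Id} + Z_F$, with $c_F \coloneq \left(\int_{\R^2} v(y)\,\di y\right) F(0)$, and deduce all three items from Lemma~\ref{opZ}.

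\emph{Item (i).} I would first check that $c_F$ is a real scalar. Since $F \in BC_H(\R^2)$, we have $\overline{F(0)} = F(0)$, so $F(0) \in \R$. Also $v$ is real and $\langle x \rangle^n v \in L^1$ with $n=0$, so $\int v$ is a finite real number. Hence $c_F\,\mathrm{Id}$ is bounded and self-adjoint, with norm $|F(0)|\,\bigl|\int v\bigr| \le \Vert v\Vert_{L^1}\,|F(0)|$. Adding the bound $\Vert Z_F\Vert \le \Vert v\Vert_{L^1}\Vert F\Vert_{L^\infty}$ from Lemma~\ref{opZ}(i) and using $|F(0)| \le \Vert F\Vert_{L^\infty}$ (valid pointwise because $F$ is continuous) gives the stated estimate.

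\emph{Item (ii).} Self-adjointness of $Z_F$ follows from Lemma~\ref{opZ}\ref{item:opZii}, since $F(x) = \overline{F(-x)}$. Commutation with all magnetic translations also follows from Lemma~\ref{opZ}\ref{item:opZii}. Multiples of the identity trivially commute with the $\tau_{b,y}$ and are self-adjoint when the coefficient is real. So $W_F$ has both properties.

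\emph{Item (iii).} The identity operator is ${\rm Op}^A(1)$. This follows from \eqref{pseudo2}, since $\int \eu^{\iu\xi\cdot(x-y)}\,\di\xi = (2\pi)^2\delta(x-y)$ and the phase $\eu^{\iu b\phi(x,x)}$ equals $1$. The map $p \mapsto {\rm Op}^A(p)$ is linear, so adding the symbol $\zeta_F$ from Lemma~\ref{opZ}(iii) gives $w_F = c_F + \zeta_F$. This lies in $S^0_0(\R^2)$ because constants belong to $S^0_0$. If one insists on an identification of bounded operators rather than quadratic forms on $\mathscr{S}$, Lemma~\ref{pseudo:cald-vaill} provides the bounded extension, and it agrees with $W_F$ by density.

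\emph{Final claim.} $H_b$ is self-adjoint and nonnegative, and $\lambda W_F$ is bounded and symmetric. By the Kato--Rellich theorem (here just a bounded perturbation), $H_{b,\lambda,F}$ is self-adjoint on $\mathcal{D}(H_b)$ and bounded below by $-|\lambda|\,\Vert W_F\Vert$.

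There is no real obstacle here. The only points that need care are that $F(0)$ is real, which is what makes the constant part self-adjoint, and that the symbol of the identity is $1$ in this magnetic quantization.
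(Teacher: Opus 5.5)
Your argument is correct and follows the route the paper intends: the paper gives no separate proof and treats the corollary as immediate from writing $W_F$ as the real scalar $\bigl(\int_{\R^2} v\bigr)F(0)$ times the identity plus $Z_F$, invoking Lemma~\ref{opZ}, and using a bounded-perturbation argument for $H_{b,\lambda,F}$. The points needing care are exactly the ones you single out, $F(0)\in\R$ and ${\rm Op}^A(1)={\rm Id}$. One incidental remark: with the convention \eqref{pseudo2}, the symbol of a kernel $\eu^{\iu b\phi(x,y)}G(x-y)$ is $\int_{\R^2}\eu^{-\iu\xi\cdot z}G(z)\,\di z$, so the exponent in the formula for $\zeta_F$, and hence for $w_F$, should carry a minus sign. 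This is a harmless reflection $\xi\mapsto-\xi$ inherited from Lemma~\ref{opZ}(iii), and it does not affect membership in $S^0_0(\R^2)$.
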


\section{The fixed point}\label{sec:fixed}

\subsection{Well-posedness of the fixed-point equation}
We now turn our attention to the fixed-point equation~\eqref{eqn:FixedPoint}. An important technical tool to analyze it, which will be used several times, is the Helffer--Sjöstrand formula, so let us introduce almost analytic extensions and the formula itself. This introduction will be based upon \cite{Davies1995}.

Fix $g\in\mathscr{S}(\R)$ a real valued function and construct $\chi\in C_0^\infty(\R)$ such that $1_{[-1,1]}\leq\chi\leq 1_{[-2,2]}$ holds pointwise. Then for any $N\in\N$ we define
\[\Co\ni z=x+\iu y\mapsto \tilde{g}_N(z)\coloneq\chi\left(y\right)\sum_{n=0}^N\,\frac{\iu^n}{n!}\, g^{(n)}(x)\, y^n\in \Co\, .\]
We call $\tilde{g}_N$ an \emph{almost analytic extension} of $g$. It satisfies (here $\overline{\partial_z}=\frac{1}{2}(\partial_x+\iu\partial_y)$): 

\begin{equation}\label{almost_analytic}
    \begin{aligned}
        \tilde{g}_N|_\R&=g|_\R,\\
        |\tilde{g}_N(z)|&\leq C_m\langle z\rangle^{-m},\quad &\forall m\geq0,\\
        |\overline{\partial_z}\tilde{g}_Z(z)|&\leq C_{m,N}|\Im(z)|^N\langle z\rangle^{-m},\quad&\forall m\geq0.
    \end{aligned}
\end{equation}
Then, for any self-adjoint operator $T$ and for any $N\in\N$, the Helffer--Sjöstrand formula allows us to write:
\[ g(T)=-\frac{1}{\pi}\int_{\R\times [-2,2]}\overline{\partial_z}\tilde{g}_N(z)(T-z)^{-1}\di x\di y,\]
where the integral is understood in the norm topology. If $g$ is complex valued, then we split it into its real and imaginary parts.  

The next proposition is one of the main technical results of our paper. In particular, we show that $\Phi_\lambda$ leaves $BC_H(\R^2)$ invariant. 

\begin{proposition}\label{prop:fermi_dirac_kernel} 
Let $f$ be a real-valued function such that for any $n,k\geq 0$ it satisfies
\begin{equation}\label{cdh2sec4}
\sup_{x\in[0,\infty)}\langle x\rangle^n|f^{(k)}(x)|<\infty.
\end{equation}
and let $H_{b,\lambda,F}$ as in \eqref{eq: def HblambdaF}. It holds that

\begin{enumerate}[label={\rm(\roman*)}, ref={\rm(\roman*)}]
 \item For every $F\in BC_H(\R^2)$, the operator $f(H_{b,\lambda,F})$ is bounded, self-adjoint, and commutes with all magnetic translations. 
 \item \label{item:FDKii} $f(H_{b,\lambda,F})$ is also a magnetic pseudo-differential operator with a smoothing symbol $p_F\in \bigcap_{n\geq0}S_0^{-n}(\R^2)$. 
 \item \label{item:FDKiii} $f(H_{b,\lambda,F})$ has a smooth  integral kernel given by 
 \begin{equation}\label{eq: fhblambdaF} f(H_{b,\lambda,F})(x,x') \coloneq \eu^{\iu \, b \, \phi(x,x')} \, (2\pi)^{-2} \, \int_{\R^2} \eu^{\iu \, \xi \cdot (x-x')}\, p_F(\xi)\, \di\xi\,, \end{equation}
 and setting $x'=0$ we have $f(H_{b,\lambda,F})(\cdot,0)\in BC_H(\R^2)\cap\mathscr{S}(\R^2)$. 
 \item \label{item:FDKiv} Let us fix $M>0$ and consider $B_M\subset BC_H(\R^2)$, the closed ball of radius $M$ around the origin. Then for every $n\geq 0$ and $\alpha\in \Z_{\geq0}^2$ there exists a constant $C<\infty$ such that for every $F,G\in B_M$ and $|\lambda| \leq 1/2$ we have 
 \begin{equation}\label{hdc12}
    \sup_{\xi\in \R^2} \, \langle \xi\rangle^n\, \big |\partial^\alpha \big (p_F(\xi)-p_G(\xi)\big) \big |\, \leq C \, |\lambda|\, \Vert F-G\Vert_{L^\infty(\R^2)}\, . 
 \end{equation}
\end{enumerate}
\end{proposition}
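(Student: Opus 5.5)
The plan is to run everything through the Helffer--Sjöstrand formula combined with the resolvent symbol estimates of Lemma~\ref{pseudo:resolvent2}. By Corollary~\ref{cor:WF}, $W_F={\rm Op}^A(w_F)$ with $w_F\in S^0_0$ real; indeed $\overline{w_F(\xi)}=w_F(\xi)$ follows from $\overline{F(-x)}=F(x)$ and $v$ even and real. Hence $H_{b,\lambda,F}$ is self-adjoint and bounded from below, with $\sigma(H_{b,\lambda,F})\subset[-2|\lambda|\,\Vert v\Vert_1\Vert F\Vert_\infty,\infty)$.

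For (i), modify $f$ on $(-\infty,-c]$, with $c$ below the bottom of the spectrum, so that it becomes a Schwartz function $g$; functional calculus is unaffected. Then $f(H_{b,\lambda,F})=g(H_{b,\lambda,F})$ is bounded and self-adjoint, since $g$ is real. It commutes with the magnetic translations because $H_b$ and $W_F$ do, so every resolvent does, and hence so does the norm-convergent Helffer--Sjöstrand integral.

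For (ii), write
\[ g(H)=-\frac1\pi\int \overline{\partial_z}\tilde g_N(z)\,(H-z)^{-1}\,\di x\,\di y . \]
By \eqref{hdc4}, $(H-z)^{-1}={\rm Op}^A(r_z)$ with seminorms bounded by $\langle z\rangle^{N'}|\Im z|^{-N'}$, where $N'$ depends only on the seminorm considered. Choosing $N\ge N'$ in \eqref{almost_analytic} makes the symbol integral $p_F=-\frac1\pi\int\overline{\partial_z}\tilde g_N\, r_z$ converge in each seminorm of $S^{-2}_0$, and it represents $g(H)$ when tested against Schwartz functions. To obtain smoothing, write $g(t)=(t+c')^{-k}h(t)$ with $h$ Schwartz and $c'$ large. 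Then $g(H)=\big((H+c')^{-1}\big)^k h(H)$, and by Lemma~\ref{pseudo:moyal} the symbol lies in $S_0^{-2k}$ for every $k$.

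For (iii), since $p_F\in\bigcap_n S_0^{-n}$, its inverse Fourier transform $\check p_F$ is smooth, and in fact Schwartz because all derivatives of $p_F$ are integrable. Arguing as in the proof of Lemma~\ref{pseudo:moyal}, this gives the kernel \eqref{eq: fhblambdaF}, and at $x'=0$ it equals $\check p_F\in\mathscr S$. The Hermitian symmetry $\overline{F(-x)}=F(x)$ then follows from self-adjointness via Corollary~\ref{cor:MagCommuteKernel}.

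For (iv), use the resolvent identity
\[ (H_F-z)^{-1}-(H_G-z)^{-1}=-\lambda\,(H_F-z)^{-1}W_{F-G}(H_G-z)^{-1}. \]
Both resolvents have symbols in $S^{-2}_0$ with the bounds \eqref{hdc4}, uniform for $F,G\in B_M$ and $|\lambda|\le1/2$; the uniformity requires checking that the constants in Lemma~\ref{pseudo:resolvent2} depend on $w$ only through finitely many seminorms. The seminorms of $w_{F-G}$ are $\le C\Vert F-G\Vert_\infty$, by Corollary~\ref{cor:WF} together with the moments of $v$. Lemma~\ref{pseudo:moyal}, made quantitative with seminorm bounds that are polynomial and bilinear in the inputs, gives a symbol of the difference with seminorms $\le C|\lambda|\,\Vert F-G\Vert_\infty\,\langle z\rangle^{N''}|\Im z|^{-N''}$. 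Integrating against $\overline{\partial_z}\tilde g_N$ yields the $S_0^{-2}$ bound. To upgrade this to arbitrary $\langle\xi\rangle^{-n}$ decay, apply the same factorization $g=(t+c')^{-k}h$. The difference $(H_F+c')^{-k}h(H_F)-(H_G+c')^{-k}h(H_G)$ telescopes into terms each containing one resolvent difference or one $h$-difference, and every other factor is smoothing or of negative order.

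The main obstacle is the bookkeeping needed to make Lemmas~\ref{pseudo:resolvent2} and \ref{pseudo:moyal} quantitative, uniformly over $B_M$ and in $z$. I would trace through their proofs to extract explicit seminorm dependence, which should be polynomial in the data.
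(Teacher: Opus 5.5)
Your proposal is correct. It shares the paper's overall architecture: the Helffer--Sj\"ostrand formula; the factorization of $f$ into a resolvent power times a Schwartz function, which gains arbitrary decay through Lemma~\ref{pseudo:moyal}; the second resolvent identity with $\lambda W_{F-G}$ in the middle; and telescoping.

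The real difference is where the Helffer--Sj\"ostrand integral is performed.

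\emph{The paper's route.} It never integrates symbols. For (ii) it applies the Gabor matrix-element criterion \eqref{pseudo4} to $f_k(H_{b,\lambda,F})$. Each matrix element is written as a Helffer--Sj\"ostrand integral of $\langle\Psi_{\alpha,\alpha^*},(H-z)^{-1}\Psi_{\beta,\beta^*}\rangle$. Bounded multiple commutators $[T_j,\mathrm{Op}^A(r_{z,F})]=\iu\,\mathrm{Op}^A(\partial r_{z,F})$ (via Calder\'on--Vaillancourt) are turned into off-diagonal decay, and the growth in $\langle z\rangle$ and $|\Im z|^{-1}$ is absorbed by the almost analytic extension. The paper then invokes Lemmas~\ref{pseudo:matrix} and \ref{pseudo:translation_invariant} to get a symbol in $S^0_0$. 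For the $S^0_0$ part of (iv) it repeats this with $(H_F-z)^{-1}\lambda W_{F-G}(H_G-z)^{-1}$. There the factor $|\lambda|\,\Vert F-G\Vert_{L^\infty}$ comes from operator norms of $W_{F-G}$ and of its commutators.

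\emph{Your route.} You take \eqref{hdc4} as a black box, integrate the symbols $r_z$ directly, and justify the representation by Fubini against Schwartz functions. This is shorter, already gives $S^{-2}_0$ before the factorization, and avoids rerunning the frame argument. The cost is that you need a quantitative, bilinear seminorm bound for the Moyal product, including for the triple composition $r_{z,F}\sharp w_{F-G}\sharp r_{z,G}$. You also need the constants in \eqref{hdc4} to depend on only finitely many seminorms of $w$. You correctly flag both points. The paper also needs quantitative Moyal bounds for the telescoped resolvent powers, and it asserts uniformity of the resolvent-symbol seminorms over $B_M$ without proof, so your extra cost is modest.

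Two small points should be stated explicitly:
\begin{enumerate}
\item Since the order $N$ in \eqref{almost_analytic} must grow with the seminorm considered, note that the resulting symbol does not depend on $N$, because it is determined by the operator.
\item In (iv), choose the Schwartz modification $g$ and the constant $c'$ uniformly for all $F\in B_M$ and $|\lambda|\le 1/2$. This is possible because $\sigma(H_{b,\lambda,F})\subset[b-\Vert v\Vert_{L^1}M,\infty)$ in that range.
\end{enumerate}
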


\begin{proof} 
The proof will make use of many results and arguments which are presented in Section~\ref{sec:pseudodiffs}.

\begin{enumerate}[label={(\roman*)}, ref={(\roman*)}]
 \item This follows from the properties of $H_{b,\lambda,F}$, Corollary~\ref{cor:WF} and of the boundedness of $f$.
 \item Construct a smooth real-valued function $0\leq h\leq 1$ such that 
 \[ h\equiv1 \text{ on }  [\inf\sigma(H_{b,\lambda,F})-1, \infty) \quad \text{and} \quad  h\equiv0 \text{ on }(-\infty,\inf\sigma(H_{b,\lambda,F})-2]\,. \]
 Fix an arbitrary $k\geq 1$ and define $f_k(x)=h(x)\, (x-\iu)^k\, f(x)$. Then $f_k$ is a Schwartz function and using the support properties of $h$, we get
 \[ f(H_{b,\lambda,F}) = (H_{b,\lambda,F}-\iu)^{-k}\, f_k(H_{b,\lambda,F})\,. \]
 In view of Lemma~\ref{pseudo:resolvent2} we know that $(H_{b,\lambda,F}-\iu)^{-1}$ has a magnetic symbol in $S_0^{-2}(\R^2)$, hence due to Lemma~\ref{pseudo:moyal} we have that $(H_{b,\lambda,F}-\iu)^{-k}$ has a magnetic symbol in $S_0^{-2k}(\R^2)$. The same would be true for $f(H_{b,\lambda,F})$ if we could prove that $f_k(H_{b,\lambda,F})$ has a symbol in $S_0^{0}(\R^2)$. For that we exploit the ``matrix elements decay" criterion in \eqref{pseudo4}, with $R$ replaced by $f_k(H_{b,\lambda,F})$. We have 
 \begin{equation}\label{hdc15}
 \begin{aligned}
     \langle \Psi_{\alpha,\alpha^*}\, &,\, f_k(H_{b,\lambda,F})\, \Psi_{\beta,\beta^*}\rangle\\
     &=-\frac{1}{\pi}\int_{\R\times [-2,2]}\overline{\partial_z}(\tilde{f_k})_N(z)\langle \Psi_{\alpha,\alpha^*}\, ,(H_{b,\lambda,F}-z)^{-1}\, \Psi_{\beta,\beta^*}\rangle\, \di x\, \di y\,.
 \end{aligned}
 \end{equation}
For $j\in\{1,2$\}, recall that we denote by $X_j$ the multiplication operator in $L^2(\R^2)$ by $x_j$, and by  $\Pi_j\coloneq(\iu/2)\,[H_b,X_j]=-\iu\partial_{x_j}-b\,A_j(x)$ the $j$-th component of the magnetic momentum operator.
 Let $T_j$ be one of the operators $X_1$, $X_2$, $\Pi_1$ or $\Pi_2$. They leave $\mathscr{S}(\R^2)$ invariant. From Lemma~\ref{pseudo:resolvent2} we know that $(H_{b,\lambda,F}-z)^{-1}$ has a symbol $r_{z,F}\in S^{-2}_0(\R^2)$ and obeys \eqref{hdc4}. By using partial integration we may compute the following commutators on $\mathscr{S}(\R^2)$: 
 \begin{align*}
 [X_1,\mathrm{Op}^A(r_{z,F})] &= \iu \, \mathrm{Op}^A(\partial_{\xi_1}r_{z,F})\,, & [X_2,\mathrm{Op}^A(r_{z,F})] &= \iu \, \mathrm{Op}^A(\partial_{\xi_2}r_{z,F})\,, \\
 [\Pi_1, \mathrm{Op}^A(r_{z,F})] & = \iu \, b \, \mathrm{Op}^A(\partial_{\xi_2}r_{z,F})\, , & [\Pi_2, \mathrm{ Op}^A(r_{z,F})] &= -\iu \, b \, \mathrm{Op}^A(\partial_{\xi_1}r_{z,F})\,,
 \end{align*}
 which can be extended to bounded operators on $L^2(\R^2)$ by Lemma~\ref{pseudo:cald-vaill}. By a recursive procedure we obtain similar formulas for any possible multiple commutators of $T_j$ with the resolvent, so that they can be extended to bounded operators. Using the same strategy as in Corollary~\ref{corollary-beals}, we may transform multiple commutators with $T_j$ into decay in $\alpha-\beta$ and $\alpha^*-\beta^*$, the price being some growing power-like factors in $\langle z\rangle$, which are controlled by the decay of $f_k$, and growing factors in powers of $1/|{\rm Im}(z)|$, which are controlled by choosing $N$ large enough in the almost analytic extension \eqref{almost_analytic}. 
 \item It is a direct consequence of \ref{item:FDKii}. 
 \item In view of Corollary~\ref{cor:WF}, for $F\in B_M$ the operator norm of $W_F$ is bounded by $2 \, \Vert v\Vert_{L^1(\R^2)}\, M$. Thus, provided that $|\lambda| \le 1/2$, the spectrum of $H_{b,\lambda,F}$ is contained in the interval $[b-\Vert v\Vert_{L^1(\R^2)}\, M,\infty)$, hence the cut-off function $h$ from point \ref{item:FDKii} can be chosen independently of $F\in B_M$ and of $\lambda \in [-1/2,1/2]$.
 
 The operator $W_F-W_G$ is a magnetic pseudo-differential operator with a symbol 
 \[ w_{F-G}(\xi) = \left(\int v(y) \, \di y \right) \, \big (F(0)-G(0)\big )-(2\pi)^{-2}\int_{\R^2} \eu^{\iu \, \xi\cdot x}\, v(x) \, \big (F(x)-G(x)\big )\, \di x \]
 belonging to $S_0^0(\R^2)$.
 Due to the fast decay of $v$, all the seminorms of $w_{F-G}$ are bounded by $\Vert F-G\Vert_{L^\infty(\R^2)}$. In particular, for $G=0$ this implies that all the constants appearing in the seminorms of $w_F$ are uniformly bounded when $F\in B_M$, and the same holds for the constants appearing in the seminorms of the resolvent symbols~$r_{z,F}$.
 
 Now we consider the identity 
 \begin{align*} 
 f(H_{b,\lambda,F})-f(H_{b,\lambda,G})& =\Big ((H_{b,\lambda,F}-\iu)^{-k}-(H_{b,\lambda,G}-\iu)^{-k} \Big )\, f_k(H_{b,\lambda,F})\\
 &\quad + (H_{b,\lambda,G}-\iu)^{-k}\, \Big (f_k(H_{b,\lambda,F})-f_k(H_{b,\lambda,G})\Big ).
 \end{align*}
 We have 
 \begin{align*}
 (H_{b,\lambda,F}-\iu)^{-k} &-(H_{b,\lambda,G}-\iu)^{-k} = \Big ((H_{b,\lambda,F}-\iu)^{-1} -(H_{b,\lambda,G}-\iu)^{-1}\Big )(H_{b,\lambda,F}-\iu)^{-k+1}\\
 &\quad +(H_{b,\lambda,G}-\iu)^{-1}\Big ((H_{b,\lambda,F}-\iu)^{-1} -(H_{b,\lambda,G}-\iu)^{-1}\Big )(H_{b,\lambda,F}-\iu)^{-k+2}\\
 &\quad + \cdots + (H_{b,\lambda,G}-\iu)^{-k+1}\Big ((H_{b,\lambda,F}-\iu)^{-1} -(H_{b,\lambda,G}-\iu)^{-1}\Big )
 \end{align*}
 where each term, using the second resolvent identity and the composition rule for symbols from Lemma~\ref{pseudo:moyal}, will have a symbol in $S_0^{-2k-2}(\R^2)$ with all its seminorms bounded by $|\lambda|\,  \Vert F-G\Vert_{L^\infty(\R^2)}$.  
    
 It remains to prove that $f_k(H_{b,\lambda,F})-f_k(H_{b,\lambda,G})$ has a symbol in $S_0^0(\R^2)$ with all its seminorms bounded by $|\lambda|\, \Vert F-G\Vert_{L^\infty(\R^2)}$. Reasoning like in \eqref{hdc15} we need to show strong decay in $\alpha-\beta$ and $\alpha^*-\beta^*$ for scalar products of the type 
 \[\langle \Psi_{\alpha,\alpha^*}, (H_{b,\lambda,F}-z)^{-1} \lambda W_{F-G} (H_{b,\lambda,G}-z)^{-1}\, \Psi_{\beta,\beta^*}\rangle,\]
  by also controlling the behavior in $z$. The fact that we have a term $\lambda W_{F-G}$ in the middle provides us with a bound of the form $|\lambda|\,  \Vert F-G\Vert_{L^\infty(\R^2)}$ for all these scalar products, while by the same commutator method we obtain a decay in $\alpha-\beta$ and $\alpha^*-\beta^*$ which leads to a polynomial growth in $\langle z\rangle $ and in $|{\rm Im}(z)|^{-1}$, both of these being taken care of by the almost analytic extension of $f_k$. 
  
  We have thus proved that the difference of symbols in \eqref{hdc12} is smoothing, with all its seminorms bounded by $|\lambda| \, \|F-G\|_{L^\infty(\mathbb{R}^2)}$, and we are done.
  \qedhere 
\end{enumerate}
\end{proof}

\subsection{Existence of a fixed point}

In this subsection we prove a corollary which will be useful for the proof Theorem~\ref{thm1} and we use the notation introduced there. We are interested in the existence of a fixed point for the equation
\[
    \Phi_{\lambda}(F) = F, 
\]
with $F\in BC_H(\mathbb{R}^2)$, where 
\[ \Phi_{b,\lambda}^{f}(F)(\cdot ) \equiv \Phi_\lambda(F)(\cdot )=f(H_{b,\lambda,F})(\cdot,0)\,.\]

\begin{corollary}\label{thm:fix_point_final}
    For any $M>\Vert f(H_b)(\cdot,0)\Vert_{L^\infty(\R^2)}$, there exists $0<\lambda_0\leq 1$ such that the map $\Phi_\lambda$ has a unique fixed point in the closed ball $B_M$ of $BC_H(\R^2)$ for every $0\leq |\lambda|\leq \lambda_0$. The fixed points are Schwartz functions.
\end{corollary}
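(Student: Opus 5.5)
The plan is to apply Banach's contraction mapping theorem to $\Phi_\lambda$ on the closed ball $B_M\subset BC_H(\R^2)$, equipped with the $L^\infty$ norm. $BC_H(\R^2)$ is a closed subspace of the Banach space $BC(\R^2)$, since the Hermiticity condition $\overline{G(-x)}=G(x)$ passes to uniform limits. Hence $B_M$ is a complete metric space. By Proposition~\ref{prop:fermi_dirac_kernel}~\ref{item:FDKiii}, $\Phi_\lambda$ maps $BC_H(\R^2)$ into $BC_H(\R^2)\cap\mathscr{S}(\R^2)$. Two things remain: that $\Phi_\lambda(B_M)\subset B_M$, and that $\Phi_\lambda$ is a contraction on $B_M$ for $|\lambda|$ small.

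The key tool is a pointwise bound on the kernel in terms of the symbol. By \eqref{eq: fhblambdaF}, for $F,G\in B_M$,
\[
\sup_{x\in\R^2}\big|\Phi_\lambda(F)(x)-\Phi_\lambda(G)(x)\big|\le (2\pi)^{-2}\int_{\R^2}|p_F(\xi)-p_G(\xi)|\,\di\xi\le (2\pi)^{-2}\Big(\int_{\R^2}\langle\xi\rangle^{-3}\di\xi\Big)\sup_\xi \langle\xi\rangle^{3}|p_F(\xi)-p_G(\xi)|.
\]
Here the phase $\eu^{\iu b\phi(x,0)}=1$, so it drops out. Applying \eqref{hdc12} with $n=3$ and $\alpha=0$ gives a constant $C_M$ such that
\[
\|\Phi_\lambda(F)-\Phi_\lambda(G)\|_{L^\infty}\le C_M|\lambda|\,\|F-G\|_{L^\infty}\qquad\text{for all } F,G\in B_M,\ |\lambda|\le 1/2.
\]

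For invariance of the ball, take $G=0$. Then $H_{b,\lambda,0}=H_b$ because $W_0=0$, so $\Phi_\lambda(0)=f(H_b)(\cdot,0)$ for every $\lambda$. This gives
\[
\|\Phi_\lambda(F)\|_{L^\infty}\le \|f(H_b)(\cdot,0)\|_{L^\infty}+C_M|\lambda| M .
\]
Set $\delta:=M-\|f(H_b)(\cdot,0)\|_{L^\infty}>0$ and choose $\lambda_0\le\min\{1/2,\ \delta/(C_MM),\ 1/(2C_M)\}$. With this choice, $\Phi_\lambda$ maps $B_M$ into itself and has Lipschitz constant at most $1/2$ there. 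Banach's theorem then yields a unique fixed point $F_{b,\lambda}\in B_M$. It is Schwartz because $F_{b,\lambda}=\Phi_\lambda(F_{b,\lambda})\in\mathscr{S}(\R^2)$ by Proposition~\ref{prop:fermi_dirac_kernel}~\ref{item:FDKiii}.

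The only delicate point is that the constant in \eqref{hdc12} must be uniform over the whole ball $B_M$ and over $|\lambda|\le1/2$, not just pointwise in $F$. Proposition~\ref{prop:fermi_dirac_kernel}~\ref{item:FDKiv} is stated with exactly this uniformity, because the cutoff $h$ and the resolvent symbol seminorms are controlled uniformly on $B_M$. So the argument reduces to the bookkeeping above, with $\lambda_0$ depending on $M$, $b$, $f$ and $v$.
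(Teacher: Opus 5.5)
Your proposal is correct and follows the paper's proof essentially verbatim. It obtains invariance of $B_M$ from \eqref{hdc12} with $G=0$, $\alpha=0$, $n=3$ plus the inverse Fourier transform, the contraction property from \eqref{hdc12}, and then applies Banach's fixed point theorem, while also spelling out the completeness of $B_M$ and the Schwartz property of the fixed point, which the paper leaves implicit.
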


\begin{proof}
We have $f(H_b)(\cdot,0)=\Phi_\lambda(0)$ for all $\lambda\in \R$. With the notation in \eqref{hdc12}, by choosing $G=0$, $\alpha=(0,0)$ and $n=3$, Proposition~\ref{prop:fermi_dirac_kernel} \ref{item:FDKiv} implies that there exists $C<\infty$ such that for every $F\in B_M$ and $| \lambda|\leq 1/2$ we have 
\[|p_F(\xi)-p_0(\xi)|\leq C\, M\, |\lambda| \, \langle \xi\rangle^{-3}.  \]
By taking the inverse Fourier transform (see \eqref{eq: fhblambdaF}) and using the triangle inequality we obtain that if $|\lambda|$ is small 
enough then 
\[\Vert \Phi_\lambda(F)\Vert_{L^\infty(\R^2)}\leq \Vert \Phi_\lambda(0)\Vert_{L^\infty(\R^2)} + C'\, |\lambda|\, \leq M,\]
which shows that the closed ball $B_M$ is left invariant by $\Phi_\lambda$. 

The last ingredient we need in order to apply Banach's fixed point theorem is the contraction property, but this also follows from \eqref{hdc12} when $|\lambda|$ is sufficiently small. 
\end{proof}

\subsection{Proof of Theorem~\ref{thm1}}
In this subsection we complete the proof of Theorem~\ref{thm1}.

\begin{proof}[Proof of Theorem~\ref{thm1}]
The fact that, for any $F\in BC_H(\R^2)$, the operator $W_F$ is self-adjoint and bounded implies, in particular, that $H_{b,\lambda,F}$ is a lower-bounded self-adjoint operator with the same domain as $H_b$. This follows from Corollary~\ref{cor:WF}.

If $f$ satisfies \eqref{cdh2}, then the operator $f(H_{b,\lambda,F})$ admits a jointly continuous integral kernel $f(H_{b,\lambda,F})(x,x')$. Moreover, for every $F\in BC_H(\R^2)$ we have
\[
\Phi_\lambda(F)(\cdot)=f(H_{b,\lambda,F})(\cdot,0)\in BC_H(\R^2).
\]
These properties follow from Proposition~\ref{prop:fermi_dirac_kernel}, parts (ii) and (iii). In particular, for all $\lambda\in\R$,
\[
\Phi_\lambda\bigl(BC_H(\R^2)\bigr)\subseteq \mathscr{S}(\R^2).
\]

Finally, combining Proposition~\ref{prop:fermi_dirac_kernel}, part (iv), with Corollary~\ref{thm:fix_point_final}, we conclude that there exists $\lambda_0>0$, depending on $b$ and $f$, such that the fixed-point equation
\[
\Phi_\lambda(F)=F, \qquad F\in BC_H(\R^2),
\]
admits a solution for all $|\lambda|\le \lambda_0$.
\end{proof}

\section{Quantization of the self-consistent Hall conductivity at ``zero temperature"}\label{sec:zerotemp}

In order to prove Theorem~\ref{thm2}, we first prove the Proposition below, where for the moment we assume that $P_{b,\lambda,N}$ as defined in \ref{thm2-i} of Theorem~\ref{thm2} is an orthogonal projection. This fact will be proved later on.

\begin{proposition}\label{prop:traces}
    Let $|\lambda|\leq \lambda_1$, let $P_{b,\lambda,N}$ be defined as in \ref{thm2-i} of Theorem~\ref{thm2}, and let $\chi_L$ be the characteristic function of the disk $\Lambda_L\coloneq\{x\in\R^2\mid |x|<L\}$. Assume that $P_{b,\lambda,N}$ is an orthogonal projection. Then there exists $0<\lambda_2\leq \lambda_1$ such that for any $|\lambda|\leq \lambda_2$ we have that $\chi_LP_{b,\lambda,N}$ is trace class and 
    \[\lim_{L\rightarrow\infty}\frac{1}{|\Lambda_L|}|\tr(\chi_LP_{b,\lambda,N})-\tr(\chi_LP_{b,0,N})|=0.\]
\end{proposition}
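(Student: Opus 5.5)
Write $P_\lambda := P_{b,\lambda,N}$ and $P_0 := P_{b,0,N}$. Both operators commute with magnetic translations and have smooth, rapidly decaying kernel profiles. Indeed, $P_\lambda = f_N(H_{b,\lambda,F_{b,\lambda}})$, and since $f_N$ is compactly supported, Proposition~\ref{prop:fermi_dirac_kernel} applies: the symbol $p$ of $P_\lambda$ lies in $\bigcap_n S^{-n}_0(\R^2)$. Hence $P_\lambda(x,x') = \eu^{\iu b\phi(x,x')} F(x-x')$ with $F \in \mathscr{S}(\R^2)$.

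\textbf{Trace class.} Since $P_\lambda$ is a projection, $\chi_L P_\lambda = (\chi_L P_\lambda) P_\lambda$. The operator $\chi_L P_\lambda$ is Hilbert--Schmidt, because
\[
\int\!\!\int \chi_L(x)\,|F(x-x')|^2 \,\di x\,\di x' = |\Lambda_L|\,\|F\|_{L^2}^2 < \infty .
\]
So $\chi_L P_\lambda = (\chi_L P_\lambda)(P_\lambda)$ is a product of a Hilbert--Schmidt operator and a bounded one; to get trace class, write instead $\chi_L P_\lambda = (\chi_L P_\lambda \langle X\rangle^{2})(\langle X\rangle^{-2}P_\lambda)$. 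The second factor is Hilbert--Schmidt since $\langle x\rangle^{-2}$ is square integrable in $x$ and $F \in L^2$. For the first factor, its kernel $\chi_L(x)\,\eu^{\iu b\phi(x,x')}F(x-x')\langle x'\rangle^{2}$ is square integrable because $\langle x'\rangle^2 \le C\langle x\rangle^2\langle x-x'\rangle^2$, $x$ ranges over a bounded set, and $F$ has rapid decay. Thus $\chi_L P_\lambda$ is trace class, and its trace is $\int \chi_L(x) P_\lambda(x,x)\,\di x = |\Lambda_L|\, F(0)$ by continuity of the kernel (Corollary~\ref{cor:MagCommuteKernel}). The same argument applies to $P_0$.

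\textbf{Comparing traces.} It remains to show $\tr(\chi_L(P_\lambda - P_0)) = o(|\Lambda_L|)$. I would use the Kato--Nagy intertwining unitary. By Proposition~\ref{prop:fermi_dirac_kernel}\ref{item:FDKiv}, applied with $G=0$, the difference $P_\lambda - P_0$ has a smoothing symbol with all seminorms of order $\Og(|\lambda|)$. Hence $\|P_\lambda - P_0\| \le C|\lambda| < 1$ for $|\lambda| \le \lambda_2$. Set
\[
U = \big(P_\lambda P_0 + (1-P_\lambda)(1-P_0)\big)\big(1-(P_\lambda-P_0)^2\big)^{-1/2}.
\]
This is a unitary with $U P_0 U^* = P_\lambda$, and it commutes with magnetic translations. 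By the functional calculus, expressing the inverse square root as a norm-convergent power series in $(P_\lambda-P_0)^2$, the operator $U - 1$ is a convergent series of products of magnetic pseudo-differential operators. Each product contains at least one factor $P_\lambda - P_0$. Lemma~\ref{pseudo:moyal} lets us multiply symbols, with seminorms growing at most geometrically, so $U - 1 = \mathrm{Op}^A(s)$ for a symbol $s$ in every $S_0^{-n}$. Consequently $U - 1$ has an integral kernel $\eu^{\iu b\phi}K(x-x')$ with $K \in \mathscr{S}$.

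Then
\[
P_\lambda - P_0 = U P_0 U^* - P_0 = (U-1)P_0U^* + P_0(U^*-1),
\]
so
\[
\tr(\chi_L(P_\lambda - P_0)) = \tr\big([\chi_L, U]P_0U^*\big) + \tr\big(U\chi_L P_0 U^*\big) - \tr(\chi_L P_0),
\]
and the last two terms cancel by cyclicity, since $\chi_L P_0$ is trace class. The \textbf{main obstacle} is to show that the commutator term $[\chi_L, U] P_0 U^*$ is trace class with trace-norm $\Og(L) = o(L^2)$. Its kernel is $(\chi_L(x) - \chi_L(y))K(x-y)$ times a phase, and this is supported near the boundary circle up to rapidly decaying tails. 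I would factor it as $[\chi_L, U]\langle X\rangle^{2}$ times a localized piece, as before, and use
\[
\int\!\!\int |\chi_L(x)-\chi_L(y)|\,\langle x-y\rangle^{-m}\,\di x\,\di y \le C L,
\]
to obtain Hilbert--Schmidt bounds of order $L^{1/2}$ for each of two factors. This gives a trace-norm bound $\Og(L)$, and dividing by $|\Lambda_L| = \pi L^2$ yields the limit zero.
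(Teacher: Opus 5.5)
Your overall architecture matches the paper's: a weighted Hilbert--Schmidt factorization for trace class, the Kato--Nagy unitary $U$, cyclicity to reduce to $\tr([\chi_L,U]P_0U^*)$, and a boundary estimate of order $L$. However, two steps do not go through as written.

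\textbf{The boundary estimate.} With the weight $\langle X\rangle^2$ centred at the origin, the factorization $[\chi_L,U]P_0U^*=([\chi_L,U]\langle X\rangle^{2})(\langle X\rangle^{-2}P_0U^*)$ does not give $\Og(L^{1/2})$ for both factors. The kernel of $[\chi_L,U]=[\chi_L,U-1]$ is $(\chi_L(x)-\chi_L(y))\,\eu^{\iu b\phi(x,y)}K(x-y)$, which is concentrated where $|x|\approx|y|\approx L$. Consequently $\|[\chi_L,U]\langle X\rangle^2\|_2^2=\int\!\!\int|\chi_L(x)-\chi_L(y)|\,|K(x-y)|^2\langle y\rangle^4\,\di x\,\di y$ picks up a factor $\langle y\rangle^4\sim L^4$ and is of order $L^5$. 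Meanwhile $\|\langle X\rangle^{-2}P_0U^*\|_2=\Og(1)$, so the resulting trace-norm bound $\Og(L^{5/2})$ is not $o(|\Lambda_L|)$. The weight has to live on the circle $\partial\Lambda_L$: take $w_L(x)=\langle |x|-L\rangle^{s}$ with $s>1$. If $\chi_L(x)\neq\chi_L(y)$, then $\bigl||y|-L\bigr|\le|x-y|$, so $\|[\chi_L,U]w_L\|_2^2\le\int\!\!\int|\chi_L(x)-\chi_L(y)|\,|K(x-y)|^2\langle x-y\rangle^{2s}\,\di x\,\di y=\Og(L)$. Likewise $\|w_L^{-1}P_0U^*\|_2^2=\|G\|_2^2\int\langle|x|-L\rangle^{-2s}\,\di x=\Og(L)$, where $G$ is the kernel profile of $P_0U^*$. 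Together these give the trace-norm bound $\Og(L)$ you want. The paper instead splits $U^*=1+V^*$ and integrates the modulus of the diagonal kernels, obtaining $C\int\!\!\int|\chi_L(x)-\chi_L(y)|\langle x-y\rangle^{-6}\,\di x\,\di y=\Og(L)$.

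\textbf{Regularity of $U-1$.} Lemma~\ref{pseudo:moyal} is purely qualitative. Its proof controls a seminorm of $p\# q$ only through \emph{higher-order} seminorms of $p$ and $q$, because the integrations by parts in $y$ and $\eta$ cost derivatives. Iterating it $2k$ times for $D^{2k}$, with $D=P_\lambda-P_0$, therefore loses a number of derivatives that grows with $k$. No fixed seminorm is shown to grow geometrically, so the symbol-level power series is not justified. The paper sidesteps this by writing $h(D)=D\cdot\frac{1}{2\pi\iu}\oint_{|z|=2/3}\frac{h(z)}{z}(z-D)^{-1}\,\di z$. It then shows, via the commutator criterion, that $(z-D)^{-1}$ has an $S^0_0$ symbol with seminorms uniform on the contour, so only a single composition (smoothing times $S^0_0$) is needed. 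Your series can be rescued by working with kernel profiles rather than symbols. The norms $\int(1+|x|)^n|K(x)|\,\di x$ are submultiplicative under the twisted convolution, since $|\eu^{\iu b\phi}|=1$ and $1+|x|\le(1+|x-y|)(1+|y|)$. By Proposition~\ref{prop:fermi_dirac_kernel}~\ref{item:FDKiv}, the profile of $D$ has norm $\Og_n(|\lambda|)$. Sandwiching $D^{2k}=D\,D^{2k-2}\,D$ then yields pointwise decay $\langle x\rangle^{-n}$ for the kernel of $U-1$, for $|\lambda|$ small depending on $n$. That suffices for the trace estimate, though it is weaker than your ``every $S^{-n}_0$'' claim.
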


Proposition~\ref{prop:traces} and its proof are heavily inspired by Lemma~2.1 and Appendix~C of \cite{CorneanMonacoMoscolari2021} and Proposition~3.3 and Appendix~A of \cite{CorneanMoscolari2025}.

\begin{proof}[Proof of Proposition~\ref{prop:traces}]
    From Proposition~\ref{prop:fermi_dirac_kernel} \ref{item:FDKiii} we conclude that the kernel $P_{b,\lambda,N}(x,y)=f_N(H_{b,\lambda,F_{b,\lambda}})(x,y)$ is localized around the diagonal in the sense that for every $n\geq1$ there exists $C>0$ such that
    \begin{equation*}
      |P_{b,\lambda,N}(x,y)|\leq C \langle x-y\rangle^{-n},\quad \forall x,y\in \R^2.
    \end{equation*}
    This implies that $\chi_LP_{b,\lambda,N}\langle\cdot\rangle^3$ and $\langle\cdot\rangle^{-3} P_{b,\lambda,N}$ are Hilbert-Schmidt operators, hence their product $\chi_LP_{b,\lambda,N}$ is trace class.

    Furthermore, Proposition~\ref{prop:fermi_dirac_kernel} \ref{item:FDKiv} leads (via an inverse Fourier transform) to the fact that for every $n\geq1$ there exists $C>0$ such that
    \begin{equation*}
      |P_{b,\lambda,N}(x,y)-P_{b,0,N}(x,y)|\leq C\, |\lambda| \, \langle x-y\rangle^{-n},\quad \forall x,y\in \R^2.
    \end{equation*}
    By the Schur test, the above estimate gives that if $|\lambda|$ is smaller than some $\lambda_2\leq \lambda_1$  then $\Vert P_{b,\lambda,N}-P_{b,0,N}\Vert \leq 1/2$ and we may construct the Kato-Nagy intertwining unitary \cite{Kato1966}
    \[U_\lambda=\big ( {\rm Id} -(P_{b,\lambda,N}-P_{b,0,N})^2\big )^{-1/2}\big(P_{b,\lambda,N}P_{b,0,N}+({\rm Id}-P_{b,\lambda,N})({\rm Id}-P_{b,0,N})\big),\]
    so that $P_{b,\lambda,N}U_\lambda=U_\lambda P_{b,0,N}$. Moreover, as we show next, $V_\lambda\coloneq U_\lambda-{\rm Id}$ is a magnetic pseudo-differential operator with smoothing symbol (term introduced in Proposition~\ref{prop:fermi_dirac_kernel} \ref{item:FDKii}), and so it has an integral kernel with the same decay properties as $P_{b,\lambda,N}$. 
    
    We first write:
    \begin{align*}
        V_\lambda=\Big(\big ( {\rm Id} &-(P_{b,\lambda,N}-P_{b,0,N})^2\big )^{-1/2}-{\rm Id}\Big)\big(P_{b,\lambda,N}P_{b,0,N}+({\rm Id}-P_{b,\lambda,N})({\rm Id}-P_{b,0,N})\big)\\
        &+\big(P_{b,\lambda,N}P_{b,0,N}+({\rm Id}-P_{b,\lambda,N})({\rm Id}-P_{b,0,N})\big)-{\rm Id}
    \end{align*}
    and we will concentrate on the factor \[\big ( {\rm Id} -(P_{b,\lambda,N}-P_{b,0,N})^2\big )^{-1/2}-{\rm Id}.\] Indeed, if we can show that this factor has a smoothing symbol, it then follows from Lemma~\ref{pseudo:moyal} that $V_\lambda$ also has one.

    Define $h(z)\coloneq(1-z^2)^{-1/2}-1$ on the unit open ball in $\Co$. This is a holomorphic function where $h(0)=h'(0)=0$, so for $w\in\Co$ with $|w|\leq1/2$ we have
    \[h(w)=(2\pi i)^{-1}\int_{|z|=\frac{2}{3}}\frac{h(z)}{z-w}\,\di z=(2\pi i)^{-1}\int_{|z|=\frac{2}{3}}\frac{h(z)w}{z(z-w)}\,\di z\]
    since $1/(z-w)=1/z+w/(z(z-w))$ and $h(z)/z$ has a removable singularity at $0$. Thus
    \begin{equation}\label{mht1}
        \begin{aligned}
            \big ( {\rm Id} &-(P_{b,\lambda,N}-P_{b,0,N})^2\big )^{-1/2}-{\rm Id}=h(P_{b,\lambda,N}-P_{b,0,N})\\
            &=(2\pi i)^{-1}(P_{b,\lambda,N}-P_{b,0,N})\int_{|z|=\frac{2}{3}}\frac{h(z)}{z}\big (z-(P_{b,\lambda,N}-P_{b,0,N})\big )^{-1}\,\di z.
        \end{aligned}
    \end{equation}
    By using the commutator criterion from Corollary~\ref{corollary-beals} as in the proof of Lemma~\ref{pseudo:resolvent} we can conclude that $\big (z-(P_{b,\lambda,N}-P_{b,0,N})\big )^{-1}$ is a magnetic pseudo-differential operator with symbol in $S_0^0(\R^2)$ when $|z|=\frac{2}{3}$. Furthermore, using the commutator method used in Lemma~\ref{pseudo:resolvent2} where instead of $H$ we use  $P_{b,\lambda,N}-P_{b,0,N}$, we get a uniform bound on the seminorms of these symbols, which implies that 
    \[\int_{|z|=\frac{2}{3}}\frac{h(z)}{z}\big (z-(P_{b,\lambda,N}-P_{b,0,N})\big )^{-1}\,\di z\]
    is a magnetic pseudo-differential operator with symbol in $S_0^0(\R^2)$. Then \eqref{mht1} together with Lemma~\ref{pseudo:moyal} shows that $\big ( {\rm Id} -(P_{b,\lambda,N}-P_{b,0,N})^2\big )^{-1/2}-{\rm Id}$, and consequently also $V_\lambda$, has a smoothing symbol.

    Now using the trace cyclicity we get:
    \begin{align*}
        \tr(\chi_LP_{b,\lambda,N})-\tr(\chi_LP_{b,0,N})&=\tr(\chi_LU_\lambda P_{b,0,N}U_\lambda^*)-\tr(\chi_LP_{b,0,N})=\tr([\chi_L,U_\lambda] P_{b,0,N}U_\lambda^*)\\
        &=\tr([\chi_L,V_\lambda] P_{b,0,N}V_\lambda^*)-\tr([\chi_L,V_\lambda] P_{b,0,N})\,.
    \end{align*}
    The traces of the operators $[\chi_L,V_\lambda] P_{b,0,N}V_\lambda^*$ and $[\chi_L,V_\lambda] P_{b,0,N}$ equal the integral of the diagonal values of their integral kernels. This diagonal value for $[\chi_L,V_\lambda] P_{b,0,N}$ is given by: 
    \[\Big ([\chi_L,V_\lambda] P_{b,0,N}\Big )(x,x)=\int_{\R^2}\di y\,(\chi_L(x)-\chi_L(y))V_\lambda(x,y)P_{b,0,N}(y,x)\]
    and something similar for the other operator where $P_{b,0,N}$ is replaced by $P_{b,0,N}V_\lambda^*$. By the previously discussed decay properties, both traces can then be bounded by a constant times the following integral: 
    \begin{align*}
        \int_{\R^4}\di x\,\di y\,|\chi_L(x)&-\chi_L(y)|\langle x-y\rangle^{-6}\\
        &=\int_{|x|< L,|y|\geq L}\di x\,\di y\,\langle x-y\rangle^{-6}+\int_{|x|\geq L,|y|< L}\di x\,\di y\,\langle x-y\rangle^{-6}\, .
    \end{align*}
    We have
    \[\int_{|x|< L,|y|\geq L}\di x\,\di y\,\langle x-y\rangle^{-6}\leq\int_{|x|< L}\di x\,\langle {\rm dist}(x,\partial\Lambda_L)\rangle^{-3}\int_{\R^2}\di y\,\langle y\rangle^{-3}\]
    and
    \[\int_{|x|< L}\di x\,\langle {\rm dist}(x,\partial\Lambda_L)\rangle^{-3}=C\int_0^L\di r \,r\,\langle L-r\rangle^{-3}\leq CL\int_0^\infty \di r \,\langle r\rangle^{-3}.\]
    Thus
    \[\tr(\chi_LP_{b,\lambda,N})-\tr(\chi_LP_{b,0,N})=\Og(L).\qedhere\]
\end{proof}

\subsection{Proof of Theorem~\ref{thm2}}

We now prove Theorem~\ref{thm2}.

\begin{proof}
We start by proving \ref{thm2-i}. For an $f_N$ which obeys all the conditions in the statement of the Theorem we have that $f_N(H_b)\equiv P_{b,0,N}$ is an orthogonal projection which corresponds to $N$ Landau levels. 
This projection is a magnetic pseudo-differential operator with a smooth integral kernel $P_{b,0,N}(x,x')$ by Proposition~\ref{prop:fermi_dirac_kernel}.

Now let us fix some $M>\Vert P_{b,0,N}(\cdot,0)\Vert_{L^\infty(\R^2)}$. Let $B_M\subset BC_H(\R^2)$ be the closed ball of radius $M$ around the origin. For $F\in B_M$ we consider (see \eqref{1sec})  
\[
W_F= \left(\int v(y) \, \di y\right) \,F(0) +Z_F,
\]
with
\[
Z_{F}(x,x') = - \eu^{\iu \, b \, \phi(x,x')} \, v(x-x')  F(x-x^\prime).
\]
Then, the operator norm of $\lambda W_F$ is uniformly bounded (up to a numerical constant) by $|\lambda|\, M$. By regular perturbation theory we obtain that, if $|\lambda|\leq \lambda_1$ with $0<\lambda_1\leq \lambda_0$ and with $\lambda_0$ from Theorem~\ref{thm1}, the set on which $f'_N\neq 0$ will still belong to the resolvent set of $H_{b,\lambda,F}=H_b+\lambda W_F$, hence $P_{b,\lambda,N,F}\coloneq f_N(H_{b,\lambda,F})$ will continue to be an orthogonal projection, but this time giving a spectral projection for $H_{b,\lambda,F}$. If $|\lambda|\leq \lambda_1$, we have a fixed point $F_{b,\lambda}$ of $\Phi_\lambda$ with $\Phi_\lambda(F) = f_N(H_{b,\lambda,F})$. Then 
\[ P_{b,\lambda,N}=f_N(H_{b,\lambda,F_{b,\lambda}})
\]
is the self-consistent ``Fermi projection".

We now prove \ref{thm2-ii}. Since $P_{b,\lambda,N}$ commutes with all magnetic translations, its integrated density of states is 
\[\mathcal{I}(b,\lambda)=P_{b,\lambda,N}(0,0).\]

We will show that $\mathcal{I}(b,\lambda)=\mathcal{I}(b,0)=Nb/(2\pi)$ for all $|\lambda|$ small enough. By Proposition~\ref{prop:traces} we can deduce that $\mathcal{I}(b,\lambda)=\mathcal{I}(b,0)$ for $|\lambda|\leq \lambda_2$  from
\[ \mathcal{I}(b,\lambda)=\lim_{L\rightarrow\infty}\frac{1}{|\Lambda_L|}\big(\tr(\chi_LP_{b,\lambda,N})-\tr(\chi_LP_{b,0,N})+\tr(\chi_LP_{b,0,N})\big)=\mathcal{I}(b,0). \qedhere \]
\end{proof}

\bigskip 
\begin{small}
\noindent{\bf Acknowledgements.} H.\ C.\ acknowledges support from Grant DFF–5281-00046B of the Independent Research Fund Denmark $|$ Natural Sciences. H.\ C.\ and M.\ H.\ T.\ also acknowledge support from Danish National Research Foundation (DNRF), through the Center CLASSIQUE, grant nr. 187. 

E.\ L.\ G.\ was partially funded by the Deutsche Forschungsgemeinschaft (DFG, German Research Foundation) through TRR 352 – Project-ID 470903074. 

D.\ M.\ gratefully acknowledges financial support from Ministero dell’Università e della Ricerca (MUR, Italian Ministry of University and Research) and Next Generation EU within PRIN 2022AKRC5P ``Interacting Quantum Systems: Topological Phenomena and Effective Theories'' and within PNRR--MUR Project no.~PE0000023-NQSTI. The work of D.\ M.\ was also supported by Sapienza Università di Roma within Progetto di Ricerca di Ateneo 2023 and 2024.

The work of E.\ L.\ G.\ and D.\ M.\ has been carried out under the auspices of the GNFM-INdAM (Gruppo Nazionale per la Fisica Matematica — Istituto Nazionale di Alta Matematica).

\end{small}
\end{document}